%% file: main.tex
\documentclass{amsart}
\usepackage{graphicx} 
 \usepackage{subcaption}
\usepackage{xcolor}
\usepackage{tikz, pgfplots}
\usetikzlibrary{decorations.fractals}
\usetikzlibrary{intersections}
\usepgfplotslibrary{fillbetween} 
\usetikzlibrary{lindenmayersystems}
\usepackage{amsmath, amssymb}
\newtheorem{lemma}{Lemma}
\newtheorem{theorem}{Theorem}
\newtheorem{observation}{Observation}
\newtheorem{definition}{Definition}
\newtheorem{corollary}{Corollary}

\title[A proof of Jordan curve theorem using sweepline algorithm]{A proof of Jordan curve theorem based on the sweepline algorithm for trapezoidal decomposition of a polygon}
\author[Apurva Mudgal]{Apurva Mudgal, \\ Department of Computer Science and Engineering, \\ Indian Institute of Technology Ropar, \\ Rupnagar, Punjab - 140001, \\ Email: {\tt apurva@iitrpr.ac.in}}

\begin{document}

\maketitle
\begin{abstract}
We prove the Jordan curve theorem by generalizing the sweepline algorithm for trapezoidal decomposition of a polygon.
Our proof uses Zorn's lemma (or, equivalently the axiom of choice).

Though several proofs have been given for the Jordan curve theorem by various authors, ours is the {\bf first algorithmic proof} of Jordan curve theorem using computational geometry. Further, with some preparation, the proof can be taught as part of an undergraduate discrete mathematics course, where till now the proof of this theorem was considered inaccessible.
\end{abstract}

\tableofcontents

\input{introduction-001.tex}

\input{preliminaries-001.tex}

\section{Proving the existence of interior}

In this section, we prove the following theorem:

\begin{theorem}
\label{thm-int-jct}
{\bf Interior Jordan Curve Theorem.} There is a bounded, connected and open set $int(J)$ such that $bd(int(J)) = J$.
\end{theorem}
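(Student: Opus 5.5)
We will construct $int(J)$ by imitating the sweepline algorithm for trapezoidal decomposition. Fix the Jordan curve $J\subset\mathbb{R}^2$ (the image of a continuous injection of the circle). It is compact, so it lies between two vertical lines $x=a$ and $x=b$, where $a=\min_J x$ and $b=\max_J x$. For each $t\in(a,b)$ the vertical line $L_t=\{x=t\}$ meets $J$ in a nonempty compact set. Its complement in $L_t$ is a union of open intervals, two of them unbounded. Call a bounded open interval of $L_t\setminus J$ a \emph{slab} (the analogue of a vertical trapezoid side). The idea is to build the interior as the union of a suitably chosen family of slabs that are ``inside'' $J$.

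The first step is to produce one slab that must be interior. Choose points $p,q\in J$ of minimal and maximal $x$-coordinate. They split $J$ into two arcs $A_1,A_2$, and for each $t\in(a,b)$ both arcs meet $L_t$, since arcs are connected. Take the lowest point of $A_1\cap L_t$ and the highest point of $A_2\cap L_t$; swap roles if necessary. The segment between appropriate points of $A_1$ and $A_2$ contains at least one slab whose endpoints lie on different arcs. Such a slab is the seed.

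We then sweep left and right from the seed. Using uniform continuity of the parametrization, a slab $\sigma\subset L_t$ whose endpoints $u,v$ lie on $J$ can be continued: for $t'$ near $t$ there are slabs in $L_{t'}$ overlapping $\sigma$ horizontally, meaning the segment from a point of $\sigma$ to $L_{t'}$ avoids $J$. Define $\Omega$ as the union of all slabs reachable from the seed by finite chains of horizontally overlapping slabs. Equivalently, $\Omega$ is the path component of $\mathbb{R}^2\setminus J$ containing the seed. Zorn's lemma enters here, to choose a maximal consistent family of slabs. It replaces a transfinite sweep over a possibly wild, infinite set of ``events'' (the $x$-values where $J$ turns around), since for a general curve these events cannot be listed as in the polygonal algorithm. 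By construction $\Omega$ is open, because small boxes around slab points avoid the compact $J$, and it is connected. It is bounded, because it never meets the unbounded intervals: we check that a chain of overlapping slabs cannot connect to an unbounded interval. This uses that the seed's endpoints separate along $L_t$ points of $A_1$ from points of $A_2$, together with a parity or crossing count along vertical lines that is preserved under horizontal overlap.

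The main obstacle is $bd(\Omega)=J$. The inclusion $bd(\Omega)\subseteq J$ is immediate, since $\Omega$ is a component of the open set $\mathbb{R}^2\setminus J$. For the reverse inclusion, we argue by contradiction. If some $z\in J$ is not in $bd(\Omega)$, then a small subarc $\gamma\ni z$ is disjoint from $\overline{\Omega}$. Then $J\setminus\gamma$ is an arc, and arcs do not separate the plane; this should be a lemma in the preliminaries, proved by the same sweeping argument applied to an arc. So the seed can be joined to the unbounded component by a path avoiding $J\setminus\gamma$, and it can be rerouted to avoid $\gamma$ as well, because $\gamma$ has a neighbourhood missing $\Omega$. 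The delicate point is the rerouting. We would first try to show that the first hitting point of the path on $\gamma$ is a limit of $\Omega$, which contradicts the choice of $\gamma$. This gives $J\subseteq bd(\Omega)$, and we set $int(J)=\Omega$.
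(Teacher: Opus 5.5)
Your reduction is the classical one, and several pieces of it are fine. The seed slab with endpoints on different arcs exists: take the point $y'$ of $A_2$ on $[x,y]\subset L_t$ nearest $x$, then the point of $A_1$ on $[x,y']$ nearest $y'$. $\Omega$ is open and connected, and $bd(\Omega)\subseteq J$. The last step also needs no ``rerouting'': if $\Omega$ is bounded and $J\setminus\gamma$ does not separate, take a path from $\Omega$ to infinity avoiding $J\setminus\gamma$. The first point where it leaves $\Omega$ lies in $bd(\Omega)\cap\gamma$, which contradicts $\gamma\cap cl(\Omega)=\emptyset$.

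The gap is that both inputs to that step are the hard content of the theorem, and neither is supplied.

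\emph{Boundedness.} A ``parity or crossing count along vertical lines'' does not exist for a general $J$. As the paper notes in its preliminaries, $L_t\cap J$ can be a Cantor set, and there are Jordan curves meeting every line in an uncountable set. So there is nothing whose parity could be preserved. The claim that no chain of overlapping slabs reaches an unbounded interval is simply the statement that $J$ separates the plane. You need a real argument that exploits the seed's endpoints lying on different arcs. In the paper this role is played by Lemma~\ref{lem-correct-root-seg}. It builds rectilinear polygons from the root segment $s_{p^*}$, hypothetical connecting paths and infinite open segments. It then uses the assumed JCT for rectilinear polygons to show that an arc of $J$ between the two endpoints of $s_{p^*}$ would have to cross such a polygon.

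\emph{Non-separation of arcs.} ``Arcs do not separate the plane'' is not something the sweep gives you. The sweep only builds a region. Showing that the complement of a possibly wild arc is connected is a theorem of the same depth as the JCT itself, classically proved via Brouwer or Borsuk non-retraction or Janiszewski's theorem. Your sketch offers no route to it.

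The paper never needs that theorem; its Zorn's-lemma machinery is what replaces it. At every stage the swept region is bounded by a piecewise-vertical Jordan curve $K_{\mathcal{S}}$ (Theorem~\ref{thm-int-JCT-sweepline}). Maximality of $\mathcal{S}_{max}$ excludes vertical segments, so $K_{\mathcal{S}_{max}}$ is a Jordan curve contained in $J$ and hence equal to $J$. That is where the inclusion $J\subseteq bd(int(J))$ comes from.

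By defining $\Omega$ as a path component, you throw away this boundary structure. That is also why Zorn's lemma does no work in your proposal: forming a component requires no choice. The reverse inclusion then has to be imported from a theorem you have not proved.
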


\input{horizontal-sweep-001.tex}
\input{extension-horizontal-sweep-001.tex}

\input{extension-piecewise-vertical-jordan-001.tex}
\input{infinite-ray-001.tex}

\input{sweepline-jordan-001.tex}

\input{zorn-lemma-001.tex}
\input{root-segment-001.tex}

\subsection{Interior Jordan curve theorem}

\input{exterior-jct-001.tex}

\input{jct-001.tex}

\newpage
\appendix
\setcounter{theorem}{0}

\end{document}

%% file: introduction-001.tex
\section{Introduction}

The Jordan curve theorem (JCT) is considered a foundational result in topology, with a non-trivial proof. Since the original proof of Jordan (proved correct in \cite{hales}), many proofs of this theorem have been given (see \cite{compendium1, compendium2} for collections of various proofs of this theorem).

The Jordan curve theorem for the special case of polygons, appears as a basic building block of 
many proofs of JCT. A polygon is a discrete geometric object and can be stored in a computer.
Algorithms on polygons form the basic subroutines of any computer graphics library.

Therefore, several algorithms have been devised in computational geometry to distinguish between the interior and exterior of a polygon such as the ray casting algorithm and the winding number algorithm for the point-in-polygon problem \cite{point-poly}, and the sweepline algorithm using the trapezoidal decomposition of a polygon \cite{overmars}. These algorithms can also be viewed as algorithmic proofs of the Jordan curve theorem for polygons.

However, although the Jordan curve theorem is more than a century old, there is no known algorithmic proof of this theorem for an arbitrary Jordan curve. In this paper, we give the first algorithmic proof of the
Jordan curve theorem, by generalizing the sweepline algorithm for trapezoidal decomposition of a polygon.\\

{\bf Salient features of our proof.} We now list some key features of our proof:

\begin{enumerate}
 \item The proof constitutes a natural generalization of the sweepline algorithm for polygons to
Jordan curves. As such, it is the first {\it algorithmic proof} of the Jordan curve theorem.

\item No discrete approximation is made by our algorithm such as the well-known approximation of a Jordan curve by a converging sequence of polygons.  The sweepline algorithm works directly on the Jordan curve itself.  

\item The sweepline algorithm of this paper can be viewed as one of the first results in “topology from an algorithmic viewpoint”, which we define as an area concerned with algorithmic proofs of topological theorems. This is to be contrasted with the well-known field of “computational topology”, which computes topological properties on discrete approximations of topological objects such as homology groups of cell complexes, unknotting problem, etc. 

\item The proof uses Zorn's lemma, which appears as a natural extension of the induction principle for proving algorithm termination for our case.

\item The proof only assumes the truth of JCT for rectilinear polygons. 

\item The proof does not use the concepts of winding number, Brouwer's fixed point theorem, structure of the convergence of successive polygon approximations, etc., which occur in some form or other in many proofs of the Jordan curve theorem.

\item The recursion tree of the sweepline algorithm is infinite, and further each node can have countably infinite children. Thus, it can be executed only on variants of infinite time Turing machines (see \cite{hamkins-lewis} for one such model).

\item Since the proof is based on the sweepline algorithm, there is the possibility of generalization to higher dimensions using sweeping planes, etc.

\end{enumerate}

{\bf Basic idea of our proof: example of the Koch snowflake.} The first few steps of a sweepline algorithm on the Koch snowflake are illustrated in Figure \ref{koch}. Unlike the sweepline algorithm for a polygon, we do not sweep a single vertical sweepline from left to right of the Jordan curve, as this approach runs into some difficulties. Instead, we first execute a horizontal sweep (the counterpart of trapezoids) and then extend the area swept by successive horizontal sweeps, executing at most countably infinite horizontal sweeps during
the sweepline algorithm.

The basic building block of the sweepline algorithm is a horizontal sweep, which is the counterpart of trapezoids in the sweepline algorithm for a polygon. The horizontal sweep $H(t)$ is specified by a horizontal segment $t$ such that $t \subset \mathbb{R}^2 - J$, where $J$ is the Koch snowflake (or, the given Jordan curve $J$). The region swept by a horizontal sweep consists of all open segment $s_p$, where $p \in t$. The open segment $s_p$ at a point $p$ is the maximal vertical segment containing $p$ such that $int(s_p) \cap J = \phi$. 

The Koch snowflake has been drawn up to a constant number of iterations of the corresponding Lindenmayer system. The region swept by the starting horizontal sweep $H(t_1)$, for this finite iteration version of Koch snowflake, is the region enclosed by a Jordan curve consisting of a finite number of arcs of the finite iteration Koch snowflake and a finite number of vertical segments (drawn in red). Since the Koch snowflake is a fractal curve, the actual area swept by $H(t_1)$ (with respect to the fractal curve, and not the finite iteration approximation illustrated here) will be bounded by a Jordan curve which has countably infinite vertical segments and the remaining points of this curve belong to the Koch snowflake. We call such curves {\it piecewise-vertical Jordan curves} with respect to the Koch snowflake.
(Note that a piecewise-vertical Jordan curve is always defined with respect to a Jordan curve $J$.) 

At any point of time, the region swept by our sweepline algorithm will be bounded by a piecewise-vertical
Jordan curve. The next step of the sweepline algorithm will consist of a horizontal sweep $H(t')$ such that
one endpoint of $t'$ is on a vertical segment of the current boundary and the remaining segment lies in the
exterior of the region swept till now. There can be countably infinite steps in our sweepline algorithm.

With respect to the Koch snowflake, note that $H(t_1)$ is extended by $H(t_2)$, which is further
extended by $H(t_3)$. The next two horizontal sweeps $H(t_4)$ and $H(t_5)$ result by
extending two other vertical segments of the boundary of $(H(t_1), H(t_2), H(t_3))$, the area swept by
the first three horizontal sweeps. The recursion tree of this sweepline algorithm with finite number of steps
has $H(t_1)$ as the root node, $H(t_2)$, $H(t_4)$ and $H(t_5)$ are the children of $H(t_1)$ and 
$H(t_3)$ is the child of $H(t_2)$.

By the symmetry of the Koch snowflake, the reader can imagine that the initial sweep $H(t_1)$ can be
extended by countably infinite horizontal sweeps, corresponding to various interior points of various
vertical segments of the boundary of $H(t_1)$. Each of these children of $H(t_1)$ can again each be extended by countably infinite horizontal sweeps, and this pattern will continue further with grandchildren
of $H(t_1)$, the children of grandchildren of $H(t_1)$, and so on till infinity. Thus, starting from a point
in the interior of the Koch snowflake, the complete interior can be swept by a sweepline algorithm
such that each node of its recursion tree has at most countably infinite children and any downward path from the root of the recursion tree has at most countably infinite nodes. The total number of horizontal sweeps
is countably infinite; however, the total number of downward paths are uncountable.

This concludes an example of the actual run of the sweepline algorithm on the Koch snowflake (a fractal Jordan curve), which completely sweeps its interior. However, note that the existence of the interior required
us to assume that the Jordan curve theorem is true for the Koch snowflake to start with.

After discussing this example, we will now outline the argument of this paper, which uses the sweepline
algorithm with Zorn's lemma, to prove the Jordan curve theorem from first principles.
\\

\begin{figure}
\centering
\begin{tikzpicture}

\node[anchor=south west,inner sep=0] (image) at (0,0) {\includegraphics[width=8cm]{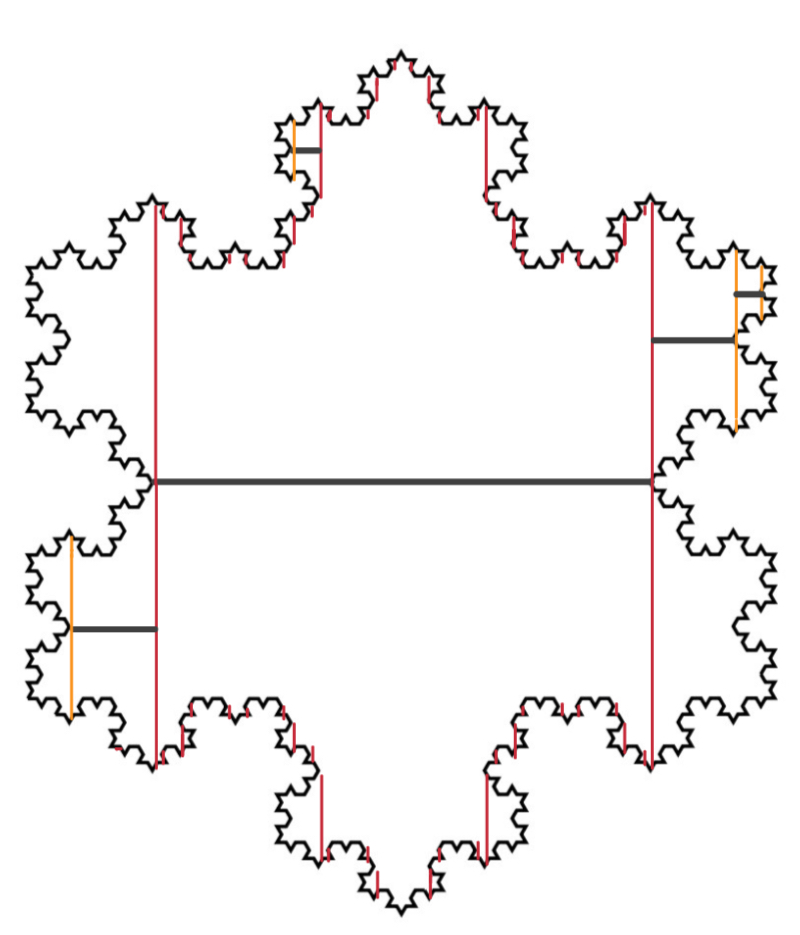}};
\node (A) at (4.2, 4.2) {$t_1$};
\node (B) at (7, 6.3) {$t_2$};
\node (C) at (7.5, 7.2) {$t_3$};
\node (D) at (1, 2.8) {$t_4$};
\node (E) at (2.5, 8) {$t_5$};
\end{tikzpicture}
\caption{First few steps of a sweepline algorithm on the Koch snowflake.}
\label{koch}
\end{figure}

{\bf Outline of the paper.} In Section \ref{sec-preliminaries}, we define the basic terms used in the paper.
In particular, we define the open segment $s_p$ for a point $p \in \mathbb{R}^2$. We also define an equivalence
relation on the set of open segments: {\it two open segments are related to each other if and only if there is
rectilinear path between points in their respective interiors.} For  Sections $3.1$-$3.6$, we make the following assumption: \\

{\bf Assumption $1$.} There exists a point $p^* \in \mathbb{R}^2 - J$ such that its equivalence class $[s_{p^*}]$ has
no open segments of infinite length. \\

We prove that such a point exists in Section $3.7$, after describing our sweepline algorithm and proving its correctness.

Our objective in Sections $3.1$-$3.5$ is to prove that the region swept by any sweepline algorithm (with countably
infinite horizontal sweeps and whose first horizontal sweep $H(t_1)$ has $p^* \in int(t_1)$) is an open, connected set and its boundary is a piecewise-vertical Jordan curve, with respect to the given Jordan curve $J$. The region swept by a sweepline algorithm $\mathcal{S}$ is called its interior and is denoted by $int(\mathcal{S})$. The exterior $ext(\mathcal{S})$ is defined as $\mathcal{R}^2 - cl(\mathcal{S})$, where
$cl(\cdot)$ denotes the operation of taking the closure of the given set.

Though this property is sufficient to prove that the exterior of a sweepline algorithm is an open and unbounded set, it
does not establish that it is connected. We call this restricted statement the {\it interior Jordan curve theorem (JCT)}.

In Sections $3.1$-$3.5$, we prove interior JCT for five successively complex classes of sweepline algorithms:
(i) a single horizontal sweep (section $3.1$), (ii) two successive horizontal sweeps (a sweepline algorithm with two atomic steps; an atomic step is a single horizontal sweep) (section $3.2$), (iii) a finite number of horizontal sweeps (section
$3.3$), (iv) a {\bf ray} i.e., a sweepline algorithm whose recursion tree is a path (section $3.4$), and finally
(v) the most general sweepline algorithm, whose recursion tree is a rooted infinite tree, where each node has at most countably infinite children (section $3.5$).
 
In addition to the above, Section $3.4$ on rays establishes important results about
the structure of rays, which play a critical role in the proof of interior JCT for $J$ (see Section $3.6$).
We first show that every ray $r$ has a limiting segment $s^*_r$. If this limiting segment has positive
length, the ray is called a {\it non-terminating ray}. (Note that all the rays in the recursion tree of the sweepline algorithm described above, which covers the interior of the Koch snowflake, are terminating rays.) The main structural result is Lemma \ref{lem-non-term} which proves that for every non-terminating infinite ray $r$, there exists a finite ray $r'$ such that the region swept by $r'$ is the same as that swept by $r$.

Whereas the previous sections prove that any sweepline algorithm satisfies the interior Jordan curve theorem, it leaves open the question of whether there exists a sweepline algorithm which {\it completely sweeps} the ``interior" of the Jordan curve (the word ``interior" is in double quotes because we have not yet proved its existence). In our framework, we want to prove the existence of a sweepline algorithm $\mathcal{S}_{max}$ such that $bd(int(\mathcal{S}_{max})) = J$.

We now come to this last part of the argument (see Section \ref{sec-zorn-lemma}), which depends on the Zorn's lemma. We consider the set $\mathcal{T}$ of all possible sweepline algorithms starting from $p^*$. We define a partial order $<$ on $\mathcal{T}$. Intuitively, sweepline algorithm $\mathcal{S}_1$ is less than the sweepline algorithm $\mathcal{S}_2$ if and only if $\mathcal{S}_2$ is an extension (or, continuation) of $\mathcal{S}_1$. We prove (using Lemma \ref{lem-non-term}) that every chain $\mathcal{C}$ (with respect to $<$) in $\mathcal{T}$ has a maximal element. By Zorn's lemma, there is a maximal element $\mathcal{S}_{max}$ in $\mathcal{T}$. Since $\mathcal{S}_{max}$ is a sweepline algorithm, the interior Jordan curve theorem is true for $\mathcal{S}_{max}$. It follows from the maximality of $\mathcal{S}_{max}$ that $bd(\mathcal{S}_{max})$ has no vertical segments (if it had a vertical segment $u$, we can extend $\mathcal{S}_{max}$ by a horizontal sweep $H(t')$ with one endpoint of $t'$ in $int(u)$ and thus contradict its maximality with respect to $<$). Hence, we conclude that $bd(\mathcal{S}_{max}) = J$.

The above argument assumes the existence of a point $p^*$ such that $[s_{p^*}]$ has no open segment of infinite length. In Section \ref{sec-root-segment}, we prove the existence of such a point $p^*$, and hence make our proof of interior Jordan curve theorem unconditional. The final theorem on the existence of the interior of $J$ is stated in
Section $3.8$.

Once we have established the existence of $int(J)$, the final step is to prove that $ext(J)$ is a connected
set. This is achieved in two parts. In Section $4$, we prove that there exists one more equivalence class $ext'(J)$ of open segments, which is disjoint from $int(J)$.  The key ingredient of the proof is to apply the inversion map to the Euclidean plane with respect to a point of $int(J)$ and then apply interior JCT to the image of the Jordan curve $J$ under this inversion. 

This still leaves open the possibility that there is a third connected component of $\mathbb{R}^2 - J$, except $int(J)$ and $ext'(J)$. This possibility is ruled out in Section $5$ and the proof of Jordan curve
theorem using the sweepline algorithm is completed.

%% file: preliminaries-001.tex
\section{Preliminaries}
\label{sec-preliminaries}

{\bf Definition of a closed curve.} $\mathbb{S}^1$ is the unit circle $\{ (x,y) ~| ~x^2 + y^2 = 1\}$. $\mathbb{R}^2$ is the Euclidean plane. A closed curve $C$ is
the image of $\mathbb{S}^1$ under a continuous function $\psi: \mathbb{S}^1 \rightarrow \mathbb{R}^2$.\\

{\bf Definition of Jordan curve.} Let $\psi: \mathbb{S}^1 \rightarrow \mathbb{R}^2$ be a one-to-one and continuous function. The Jordan curve $J$ is the image
$\{ \psi(z) ~| ~z \in \mathbb{S}^1\}$ of the unit circle under map $\psi$.

Since $\psi$ is continuous, for every converging sequence $z_1, z_2, \ldots$ of points in $\mathbb{S}^1$, $\lim_{i \rightarrow \infty} \psi(z_i) = \psi( \lim_{i \rightarrow \infty} z_i)$. Since $\psi$ is one-to-one, for any two distinct points $z_1, z_2 \in \mathbb{S}^1$, $\psi(z_1) \neq \psi(z_2)$.

Note that $\mathbb{S}^1$ is a compact set. Thus, every countably infinite sequence $z_1, z_2, \ldots$ of points of
$\mathbb{S}^1$ has a converging subsequence.

Let $\psi_x: \mathbb{S}^1 \rightarrow \mathbb{R}$ be the function defined as follows: for every point $z \in \mathbb{S}^1$, $\psi_x(z)$ is the $x$-coordinate of point $\psi(z)$. Note that $\psi_x$ is a continuous function.
Let $\psi_y: \mathbb{S}^1 \rightarrow \mathbb{R}$ be the function defined as follows: for every point $z \in \mathbb{S}^1$, $\psi_y(z)$ is the $y$-coordinate of point $\psi(z)$. Note that $\psi_y$ is a continuous function.

Since $\psi_x$ and $\psi_y$ are continuous functions on a compact set, they achieve a maximum value and a minimum value.
Thus, there exists a constant $B > 0$ such that all points of
the Jordan curve $J$ lie within the {\it bounding box} $\mathcal{B} = [-B, B] \times [-B, B]$.\\

{\bf Jordan curve theorem (JCT) for rectilinear polygons.} We assume that the Jordan curve theorem is true for all rectilinear polygons. In fact, one can prove the Jordan curve theorem for rectilinear polygons using the sweepline algorithm. \\

{\bf Removing degeneracy.} Let $s$ be a line segment in $\mathbb{R}^2$. We say that $s$ is a line segment of $J$ if and only if $s \subset J$. We further say that $s$ is a {\it maximal} line segment of $J$ if and only if (i) $s \subset J$ and (ii) there is no line segment $s'$ such that $s \subsetneq s' \subset J$. 

Let $M_J$ be the set of all maximal line segments of $J$. We prove that $M_J$ is a countable set

\begin{observation}
The set $M_J$ of maximal line segments is countable.
\end{observation}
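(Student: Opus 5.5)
Every maximal line segment $s \in M_J$ is a nondegenerate segment, so its preimage $\psi^{-1}(s)$ will contain an open arc of $\mathbb{S}^1$ of positive length. I will map each $s \in M_J$ to a rational point of $\mathbb{S}^1$ lying in this arc. Because $\psi$ is one-to-one and distinct maximal segments overlap in little, this map will be injective, and since $\mathbb{S}^1$ contains only countably many rational points (points $(\cos 2\pi q, \sin 2\pi q)$ with $q \in \mathbb{Q}$), $M_J$ will be countable.

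\textbf{Step 1: the preimage contains an open arc.} Let $s \in M_J$ with distinct endpoints $a,b$. The restriction of $\psi$ to $\psi^{-1}(s)$ is a continuous bijection onto $s$, and $\psi^{-1}(s)$ is compact because $s$ is closed and $\mathbb{S}^1$ is compact. A continuous bijection from a compact space to a Hausdorff space is a homeomorphism, so $\psi^{-1}(s)$ is homeomorphic to a closed interval. A connected proper subset of $\mathbb{S}^1$ of this type is a closed arc $A_s$. (It is proper because a circle is not homeomorphic to an interval.) Its interior $\mathrm{int}(A_s)$ is a nonempty open arc, and it maps onto the open segment $s \setminus \{a,b\}$. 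I then choose a rational point $q_s \in \mathrm{int}(A_s)$.

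\textbf{Step 2: injectivity.} Let $s \neq s'$ be two segments in $M_J$. If $s$ and $s'$ are collinear and share more than one point, then $s \cup s'$ is a segment contained in $J$. By maximality this forces $s = s' = s \cup s'$, a contradiction. Hence $s \cap s'$ is at most a single point, whether or not the segments are collinear. A single point cannot contain the image of a nondegenerate open arc, so the open arcs $\mathrm{int}(A_s)$ and $\mathrm{int}(A_{s'})$ intersect in at most one point. Two open arcs meeting in at most one point cannot overlap in an open set, so in fact they are disjoint. It follows that $q_s \neq q_{s'}$, and the map $s \mapsto q_s$ is injective into a countable set.

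The main technical point is Step 1, that is, justifying that $\psi^{-1}(s)$ is an arc with nonempty interior. The compact-to-Hausdorff homeomorphism argument handles this. An alternative is to note directly that $\psi^{-1}(s)$ is a closed set containing more than two points whose image is connected, and to argue through connectedness of the inverse homeomorphism. Everything else is elementary.
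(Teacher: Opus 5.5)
Your proof is correct and follows the same route as the paper: the preimages under $\psi$ of distinct maximal segments are arcs of positive length with pairwise disjoint interiors, and $\mathbb{S}^1$ can contain only countably many such arcs. You also make explicit three things the paper only asserts: that $\psi^{-1}(s)$ is an arc (via the compact-to-Hausdorff homeomorphism), that distinct maximal segments share at most one point (via maximality in the collinear case), and the final countability (via an injection into the rational points of $\mathbb{S}^1$).
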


\noindent {\bf Proof:} Let $s_1, s_2 \in M_J$ be two distinct segments in $M_J$. Note that $s_1$ and $s_2$ have disjoint interiors. Then, $\psi^{-1}(s_1)$ and $\psi^{-1}(s_2)$ are distinct arcs (i.e., have disjoint interiors) of the unit
circle $\mathbb{S}^1$. Further, for every $s \in M_J$, $\psi^{-1}(s)$ is an arc of positive length.
Since, the unit circle can be partitioned into at most countably infinite arcs of positive length, the
statement is proved. $\blacksquare$\\

Let $0 < \theta < 2 \pi$. 
Let $J_{\theta}$ be the image of $J$ under a rotation of $\mathbb{R}^2$ about the origin by angle $\theta$ in clockwise direction.

\begin{observation}
There exists an angle $0 < \theta^* < 2 \pi$ such that for every vertical line $l'$, $l' \cap J_{\theta^*}$ contains no line segments of positive length. 
\end{observation}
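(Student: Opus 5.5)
The plan is a countability argument built on the previous Observation. I will show that only countably many rotation angles are ``bad'', so some angle in $(0, 2\pi)$ must be good.

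First I will pin down what a bad angle looks like. Suppose that for some vertical line $l'$ the set $l' \cap J_{\theta}$ contains a segment $s'$ of positive length. Rotating back counterclockwise by $\theta$ gives a segment $s \subset J$. Then $s$ is contained in some maximal segment $m \in M_J$, and the direction of $m$ is carried to the vertical direction by the clockwise rotation by $\theta$. The existence of such an $m$ needs a short justification. The union of a chain of segments of $J$ containing $s$, all lying on one line, is an interval on that line. Since $J$ is closed, the closure of this interval also lies in $J$. It is bounded, because $J$ lies in the bounding box $\mathcal{B}$. So the closure is a maximal segment. Equivalently, one can take the connected component of $J \cap L$ containing $s$, where $L$ is the line through $s$; this component is a closed interval.

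Next I will turn this into a count. Each $m \in M_J$ has a fixed direction angle $\alpha_m$, defined modulo $\pi$. The clockwise rotation by $\theta$ makes $m$ vertical only when $\theta \equiv \alpha_m - \pi/2 \pmod{\pi}$. So each $m$ rules out at most two values of $\theta$ in $(0, 2\pi)$. Hence the set of bad angles is contained in
\[ \Theta = \{\theta \in (0,2\pi) : \theta \equiv \alpha_m - \pi/2 \ (\mathrm{mod}\ \pi) \text{ for some } m \in M_J\}. \]
By the Observation that $M_J$ is countable, $\Theta$ is a countable set. Since $(0,2\pi)$ is uncountable, I pick any $\theta^* \in (0,2\pi) \setminus \Theta$. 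For this $\theta^*$, no vertical line meets $J_{\theta^*}$ in a segment of positive length.

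The main obstacle is the step that every segment of $J$ lies in a maximal one. Once that containment holds, the counting is immediate. I will handle it with the closed-component argument above, which uses only that $J$ is compact.
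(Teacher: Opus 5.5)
Your proof is correct and uses the same countability argument as the paper. Since $M_J$ is countable, only countably many rotation angles can make a maximal segment of $J$ vertical, so any other $\theta^* \in (0,2\pi)$ works. Your version is also more careful than the paper's on two points. First, the paper excludes the angles $\theta_i$ themselves, but a clockwise rotation by $\theta_i$ makes $s_i$ horizontal, not vertical, and the paper ignores the mod-$\pi$ ambiguity; you fix this with $\theta \equiv \alpha_m - \pi/2 \pmod{\pi}$. Second, the paper tacitly assumes every segment of $J$ lies in a maximal one, which you justify with the component-of-$J \cap L$ argument.
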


\noindent {\bf Proof:} Let $s_1, s_2, \ldots$ be an enumeration of segments in $M_J$. For
each $i \in \mathbb{N}$, let $\theta_i$ be the counterclockwise angle between the positive $X$-axis and the unique line containing segment $s_i$. Choose $\theta^* \in (0, 2 \pi) - \{ \theta_1, \theta_2, \ldots \}$ to
prove the statement. $\blacksquare$\\

In the following, we use $J$ to denote the rotated curve $J_{\theta^*}$.\\

{\bf Completing the plane $\mathbb{R}^2$.} Let $\mathcal{V}$ be the set of all vertical lines in the plane $\mathbb{R}^2$. 
To each line $l' \in \mathcal{V}$, we add two points with $y$-coordinates equal to
$-\infty$ and $+\infty$, respectively.

We complete $\mathbb{R}^2$ by adding two lines $\omega_{-\infty}$ and $\omega_{+\infty}$ at infinity. The line $\omega_{-\infty}$ consists of all points of lines in $\mathcal{V}$ with $y$-coordinate equal to $-\infty$. Similarly, the line $\omega_{+\infty}$ consists of all points of lines in $\mathcal{V}$ with $y$-coordinate equal to $+\infty$.\\

{\bf Definition of open segment $s_p$.} Let $p \in \mathbb{R}^2 - J$. As before, let $l_p$ denote the vertical line passing through $p$. Consider the subset $S_p = l_p - J$ of line $l_p$. Clearly, $p \in S_p$.

Given two points $q_1, q_2 \in l_p$, we say that $q_1 \leq q_2$ if and only if the $y$-coordinate of $q_1$ is less than or equal to the $y$-coordinate of $q_2$.
Let $U_p = \{ x ~| ~x \in l_p \cap J ~and ~x > p\}$ and $L_p = \{ x ~| ~x \in l_p \cap J ~and ~x < p\}$.
Let $u_p$ be the greatest lower bound (infimum) of $U_p$ and $b_p$ be the least upper bound (supremum) of $L_p$. Note that since $l_p \cap J$ is a closed subset of $l_p$, both $u_p$ and $b_p$ belong to the Jordan curve $J$.

The {\it open segment} $s_p$ at point $p$ is the set $\{ x ~| ~x \in l_p ~and ~b_p < x < u_p\}$. The boundary of $s_p$ consists of the two points $b_p$ and $u_p$. Note that $s_p \subset S_p$.\\

\begin{observation}
Let $p \in \mathbb{R}^2 - J$. Then, $l_p - J$ is the disjoint union of at most countable open
segments.
\end{observation}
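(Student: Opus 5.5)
The plan is to identify $l_p - J$ with an open subset of $\mathbb{R}$ and then use the standard decomposition of open subsets of the real line into countably many disjoint open intervals. The case where $l_p - J$ contains unbounded pieces is handled by the same argument, using the completion of $\mathbb{R}^2$ by $\omega_{-\infty}$ and $\omega_{+\infty}$.

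\textbf{Step 1: openness.} Since $\psi$ is continuous and $\mathbb{S}^1$ is compact, $J$ is compact and hence closed in $\mathbb{R}^2$. Therefore $l_p \cap J$ is a closed subset of $l_p$, as already noted in the definition of $s_p$, and so $S_p = l_p - J$ is open in $l_p$. Identify $l_p$ with $\mathbb{R}$ via the $y$-coordinate.

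\textbf{Step 2: decomposition into open segments.} For every $q \in S_p$ we have $q \in \mathbb{R}^2 - J$, so the open segment $s_q$ is defined, and $q \in s_q \subset S_p$. By the definitions of $u_q$ and $b_q$, $s_q$ is the maximal interval of $l_p$ containing $q$ and avoiding $J$. Its endpoints may lie on $\omega_{\pm\infty}$ when $U_q$ or $L_q$ is empty, which is exactly why the plane was completed.

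Next I check that any two such segments are either equal or disjoint. Suppose $q' \in s_q$. Then the closed interval between $q$ and $q'$ misses $J$, so $q$ and $q'$ have the same nearest points of $J$ above and below. Hence $u_{q'} = u_q$, $b_{q'} = b_q$, and $s_{q'} = s_q$. It follows that $S_p = \bigcup_{q\in S_p} s_q$ is a disjoint union of open segments.

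\textbf{Step 3: countability.} Each segment $s_q$ has positive length, since it is open in $l_p$ and nonempty. So it contains a point with rational $y$-coordinate, and I choose one such point for each segment. Because the segments are pairwise disjoint, this choice gives an injection from the family of segments into $\mathbb{Q}$. Hence the family is at most countable.

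There is no serious obstacle here. The only point needing care is the convention that $u_q$ or $b_q$ can be a point at infinity when $U_q$ or $L_q$ is empty. Since $J$ lies in the bounding box $\mathcal{B}$, this happens for exactly the two unbounded components of $l_p - J$. When $l_p \cap J = \emptyset$, it happens for the single component $l_p$ itself.
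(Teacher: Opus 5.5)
Your proof is correct and follows the same route as the paper: $l_p \cap J$ is closed, so $l_p - J$ is open in $l_p \cong \mathbb{R}$ and is therefore a countable disjoint union of open intervals. The only difference is that the paper cites this standard fact about open subsets of $\mathbb{R}$, whereas you prove it inline; in doing so you usefully identify the components with the segments $s_q$ (whose endpoints may lie on $\omega_{\pm\infty}$) and get countability by choosing a rational point in each.
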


\noindent {\bf Proof:} Since $l_p$ and $J$ are both closed subsets, their intersection
$l_p \cap J$ is also a closed subset of $l_p$. The complement $l_p - (l_p \cap J) = l_p - J$ is an open set. An open set of $\mathbb{R}^1$ is a disjoint union of at most countable
open segments.
$\blacksquare$\\

{\it Note.} The set $l_p \cap J$ can be complicated. The {\it Cantor set} is constructed as follows \cite{stromberg}. Start with the
closed unit interval $I = [0,1]$ on the $X$-axis. Remove the open middle third i.e., the interval $(\frac{1}{3}, \frac{2}{3})$ to obtain two disjoint
closed intervals $I_1 = [0, \frac{1}{3}]$ and $I_2 = [\frac{2}{3}, 1]$. Repeat the above operation (removing open middle third interval) recursively on $I_1$ and $I_2$. The recursion is carried out till infinity. Let $\mathcal{I}$ be the set of all removed open intervals. The set $\mathcal{I}$ is countable. However, the Cantor set is uncountable. 

We now construct a Jordan curve from the Cantor set. For every
open interval $I$ removed in any recursive step, we add a semi-circle
connecting its two endpoints such that the semi-circle lies above the $X$-axis. The union of the Cantor set and the added semi-circles
(a countably infinite number of semi-circles are added) constitutes
a Jordan arc $A$ from $(0,0)$ to $(1,0)$. We can construct a Jordan
curve $K$ from $A$ by adding the line segment from $(0,0)$ to $(0,-1)$, the line segment from $(0,-1)$ to $(1, -1)$ and the line
segment from $(1, -1)$ to $(1, 0)$.

We now rotate the plane by $90$ degrees about the origin in anticlockwise direction so that positive $X$-axis goes to the positive $Y$-axis. If we take $p = (0,0)$, $l_p \cap K$ is equal to the Cantor
set. 

One can construct really complicated Jordan curves. In \cite{bishop}, the author proves that there exists a Jordan curve $K$ such that for every line $l$, $l \cap K$ contains a Cantor-type set and hence is uncountable.

Thus, $l_p - J$ has a simpler structure than $l_p \cap J$. We conclude by noting a non-intuitive property of the set $\mathcal{I}$ of open intervals removed during the construction of the Cantor set: {\it for any interval $I \in \mathcal{I}$, there is no immediately next or immediately previous interval in $\mathcal{I}$}. 

{\bf Topologist's sine curve vs $x \sin \frac{1}{x}$.} A key distinguishing property of Jordan curves that
we will use repeatedly in this paper can be illustrated by comparing two curves: (i) the topologist's 
sine curve $\{ \big(x, \sin \big( \frac{1}{x} \big) \big) ~: ~x \in (0,1] \} \cup \{ (0,0) \}$ and (ii)
the damped curve $\{ \big(x, x \cdot \sin \big( \frac{1}{x} \big) \big) ~: ~x \in (0,1] \} \cup \{ (0,0) \}$. The first curve is not a Jordan arc (i.e., portion of a Jordan curve) because the oscillations are not damped as $x$ approaches $0$, whereas the second curve is a Jordan arc because the oscillations are damped by the factor of $x$, as $x$ approaches $0$. We formalize this intuition in the following observation:

\begin{observation}
\label{obs-jordan-arc-limit}
Let $\gamma_1, \gamma_2, \ldots$ be an infinite sequence of mutually disjoint arcs of $J$. For each $i \in \mathbb{N}$,
let $u_i$ and $v_i$ be the two endpoints of arc $\gamma_i$. If both $\lim_{i \rightarrow \infty} u_i$ and $\lim_{i \rightarrow \infty} v_i$ exist, then they are equal. 
\end{observation}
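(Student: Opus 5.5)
Since $\psi$ is a homeomorphism from the compact $\mathbb{S}^1$ onto $J$, we can argue on the circle. Because $\psi^{-1}:J\to\mathbb{S}^1$ is continuous, we may transport each arc $\gamma_i$ back to a closed arc $\alpha_i = \psi^{-1}(\gamma_i)\subset\mathbb{S}^1$ with endpoints $a_i=\psi^{-1}(u_i)$ and $b_i=\psi^{-1}(v_i)$. The arcs $\alpha_i$ are mutually disjoint, because $\psi$ is one-to-one. Let $u=\lim u_i$ and $v=\lim v_i$. Since $J$ is closed, $u,v\in J$, so $\psi^{-1}(u)$ and $\psi^{-1}(v)$ are defined. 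By continuity of $\psi^{-1}$, we get $a_i\to a:=\psi^{-1}(u)$ and $b_i\to b:=\psi^{-1}(v)$. It therefore suffices to prove $a=b$, since then $u=\psi(a)=\psi(b)=v$.

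The key step is a length argument on the circle. The arcs $\alpha_i$ are mutually disjoint subsets of $\mathbb{S}^1$, so the sum of their (arc-)lengths is at most $2\pi$. Hence $\mathrm{len}(\alpha_i)\to 0$. The chord distance satisfies $|a_i-b_i|\le \mathrm{len}(\alpha_i)$, because the two endpoints are joined by $\alpha_i$ itself. So $|a_i-b_i|\to 0$, and passing to the limit gives $a=b$.

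I expect the main subtlety to be in the disjointness-implies-summable-length step. It should be checked that ``arc of $J$'' means the homeomorphic image of a closed interval. Such an arc has as preimage a genuine sub-arc of $\mathbb{S}^1$, and not the complementary long way around: any connected proper subset of $\mathbb{S}^1$ homeomorphic to an interval is an arc between its two endpoints. Disjointness then makes these arcs pairwise disjoint, so a finite union of them has total length at most $2\pi$. Taking the supremum over finite subfamilies gives $\sum_i \mathrm{len}(\alpha_i)\le 2\pi$. If the paper permits arcs that merely have disjoint interiors, the same bound holds.

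Alternatively, to avoid lengths one could argue directly. If $a\ne b$, both open arcs of $\mathbb{S}^1$ between $a$ and $b$ have positive length. For large $i$, each $\alpha_i$ is an arc from near $a$ to near $b$, so it contains most of one of these two open arcs. By pigeonhole, two distinct $\alpha_i$ then contain a common point, which contradicts disjointness. I would present the length version as the main proof.
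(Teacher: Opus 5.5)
Your proof is correct and follows the paper's approach. Both arguments pull the arcs back to disjoint closed arcs of $\mathbb{S}^1$, use that their lengths sum to at most $2\pi$ so they shrink to zero, and conclude that the limits of the endpoints coincide. The difference is minor: you transfer the limits by invoking continuity of $\psi^{-1}$ (a continuous bijection from a compact space onto a Hausdorff one), whereas the paper re-derives this through nested convergent subsequences using compactness of $\mathbb{S}^1$, and it also passes to a subsequence with $|I_{b_j}|\to 0$ where, as you note, the whole sequence already satisfies this.
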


\noindent {\bf Proof:} For each $j \in \mathbb{N}$, let $I_j = \{ \psi^{-1}(p) ~| ~p \in \gamma_j\}$ (since $\psi$ is one-to-one, every point of $J$ has a unique preimage). Note that $I_j$ is a closed interval of
$\mathbb{S}^1$ of positive length. Further, for $j_1 \neq j_2$ ($j_1, j_2 \in \mathbb{N}$), $I_{j_1} \cap I_{j_2} = \phi$ (since Jordan arcs $\gamma_{j_1}$ and $\gamma_{j_2}$ are disjoint).

Consider the sequence of disjoint closed intervals $I_{1}, I_{2}, \ldots$. Note that, by countable additivity of measure

$$\sum_{j=1}^{\infty} |I_{j}| \leq 2 \pi$$

, where $|I_j|$ denotes the arc length of $I_j$ and $2 \pi$ is the perimeter of the unit circle.

Thus, there exists a subsequence $I_{b_{1}}, I_{b_{2}}, \ldots$ ($b_1 < b_2 < \cdots$) of intervals such that $\lim_{j \rightarrow \infty} |I_{b_j} | = 0$. In other words,

$$\lim_{j \rightarrow \infty} \big| \psi^{-1}(u_{b_j}) - \psi^{-1}(v_{b_j}) \big| = 0$$

, where $|y_1 - y_2|$, $y_1, y_2 \in \mathbb{S}^1$, denotes the
length of the smaller of the two arcs formed by $y_1$ and $y_2$ 
in $\mathbb{S}^1$.

Since $\mathbb{S}^1$ is a compact set, the sequence $x_1 = \psi^{-1}(u_{b_1}), x_2 = \psi^{-1}(u_{b_2}), \ldots$ has a convergent 
sequence. Suppose the convergent subsequence is $x_{f_1}, x_{f_2}, \ldots$,
$f_1 < f_2 < \ldots$. Suppose $x^*$ is the limit of this subsequence. By continuity of $\psi$, $\lim_{j \rightarrow \infty} \psi(x_{f_j}) = \psi(x^*)$. Note that $\lim_{j \rightarrow \infty} \psi(x_{f_j}) = \lim_{j \rightarrow \infty} u_{b_{f_j}} = \lim_{j \rightarrow \infty} u_j$ (since we assume that the last limit exists). Let $u^*$ denote the limit $\lim_{j \rightarrow \infty} u_j$. We conclude that $\psi(x^*) = u^*$, or $x^* = \psi^{-1}(u^*)$.

Now consider the sequence $\psi^{-1}(v_{b_{f_1}}), \psi^{-1}(v_{b_{f_2}}), \ldots$. Again, by compactness of $\mathbb{S}^1$, it 
has a convergent subsequence $y_1 = \psi^{-1}(v_{b_{z_1}}), y_2 = \psi^{-1}(v_{b_{z_2}}), \ldots$, $z_1 < z_2 < \cdots$. Suppose $y^*$ is the limit of this subsequence. By continuity of $\psi$, $\lim_{j \rightarrow \infty} \psi(y_j) = \psi(y^*)$. Thus, $\lim_{j \rightarrow \infty} \psi(y_j) = \lim_{j \rightarrow \infty} v_{b_{z_j}} = \lim_{j \rightarrow \infty} v_j$ (since we assume that the last limit exists). Let $v^*$ denote the limit $\lim_{j \rightarrow \infty} v_j$. We conclude that $\psi(y^*) = v^*$, or $y^* = \psi^{-1}(v^*)$.

Thus, there exists sequences $\zeta_1 = \psi^{-1}(u_{b_{z_1}}), \zeta_2 = \psi^{-1}(u_{b_{z_2}}), \ldots$
and $\eta_1 = \psi^{-1}(v_{b_{z_1}}), \eta_2 = \psi^{-1}(v_{b_{z_2}}), \ldots$ converging to $\psi^{-1}(u^*)$ and 
$\psi^{-1}(v^*)$, respectively.

Since for every $j \in \mathbb{N}$,

$$|\psi^{-1}(u^*) - \psi^{-1}(v^*)| \leq |\psi^{-1}(u^*) - \zeta_j|  + |\zeta_j - \eta_j| + |\eta_j - \psi^{-1}(v^*)|$$

and $|\zeta_j - \eta_j| = |I_{b_{z_j}}|$, we conclude that $\psi^{-1}(u^*) = \psi^{-1}(v^*)$. Since $\psi$ is a function, we derive that $u^* = v^*$. Hence, the claim is proved.
$\blacksquare$

%% file: horizontal-sweep-001.tex
\subsection{Horizontal sweeps}
\label{sec-horizontal-sweep}

Let $t$ be a closed horizontal line segment in $\mathbb{R}^2$ such that $int(t) \cap J = \phi$. Let $a$ and $b$ be the left and right endpoints of $t$, respectively.
We use $int(t)$ to denote the interior of $t$.

\begin{definition}
{\bf Horizontal sweep.} The horizontal sweep $H(t)$ using the segment $t$ is the union of all open segments $s_p$, where $p \in int(t)$.
\end{definition}

{\bf Upper and lower functions of $H(t)$.} Let $\mathcal{U}_t: t \rightarrow \mathbb{R}^2$ be the function which maps a point $p \in t$ to the upper endpoint $u(s_p)$ of the open segment at $p$. If $p \in J$ (this can happen only when $p$ is equal to $a$ or $b$), then $\mathcal{U}_t(p) = p$.

Similarly, define $\mathcal{L}_t: t \rightarrow \mathbb{R}^2$ as the function which maps $p \in t$ to the lower endpoint $b(s_p)$ of open segment $s_p$. As above, if $p \in J$, then $\mathcal{L}_t(p) = p$.

We call $\mathcal{U}_t$ and $\mathcal{L}_t$ the {\it upper} and {\it lower} functions of the horizontal sweep $H(t)$, respectively.

\begin{observation}
\label{obs-two-limits}
Suppose there exists a point $p \in int(t) \cup \{ b \}$ such that
the left-hand limit does not exist for $\mathcal{U}_t$. Then, there
exists an infinite sequence of points $z_1, z_2, \ldots$ such that

\begin{enumerate}
\item for each $i \in \mathbb{N}$, $z_i$ lies to the left of $p$,
\item for each $i \in \mathbb{N}$, $z_i$ lies to the left of $z_{i+1}$,
\item the sequence $f(z_1), f(z_3), f(z_5), \ldots$ converges to a finite real number $\alpha$,
\item the sequence $f(z_2), f(z_4), f(z_6), \ldots$ converges to a finite real number $\beta$, and
\item $\alpha \neq \beta$.
\end{enumerate}

One can replace left-hand limit by right-hand limit and 
$\mathcal{U}_t$ by $\mathcal{L}_t$ in the above statement.
\end{observation}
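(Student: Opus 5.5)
The plan is to read $f$ in the statement as the real-valued function $p \mapsto y$-coordinate of $\mathcal{U}_t(p)$. This is harmless because all points $\mathcal{U}_t(p)$ lie on the vertical line through $p$, so their $x$-coordinates are determined. The first step is to show that $f$ is bounded near $p$. For a point $q \in int(t)$, the upper endpoint $u(s_q)$ lies on $J$. We are working under Assumption 1, and $t$ will be chosen inside the region where open segments have finite length. Hence $U_q$ is nonempty, $u(s_q) \in J \subset \mathcal{B}$, and $f(q) \in [-B, B]$. At endpoints lying on $J$ we have $\mathcal{U}_t(q) = q$, which also lies in $\mathcal{B}$. (If $U_q$ were empty, so that $u_q$ sits on $\omega_{+\infty}$, I would instead work in the extended line $[-\infty, +\infty]$, which is compact; the argument below is unchanged, except that $\alpha, \beta$ might then be infinite. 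So I expect the finiteness asserted in items 3 and 4 to rest on boundedness of the sweep.)

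The main step is the standard characterization of one-sided limits through sequences. Since the left-hand limit of $f$ at $p$ does not exist and $f$ is bounded, the values $\liminf_{q \to p^-} f(q) = m$ and $\limsup_{q \to p^-} f(q) = M$ are finite and satisfy $m < M$. Indeed, if $m = M$, the common value would be the left-hand limit, by the usual squeeze argument. Then I would build the sequence inductively. Take $z_1$ left of $p$ with $|f(z_1) - M| < 1$. Given $z_{2k-1}$, choose $z_{2k}$ strictly between $z_{2k-1}$ and $p$, within distance $1/k$ of $p$, with $|f(z_{2k}) - m| < 1/k$. This is possible because $m$ is a liminf from the left, so every left neighbourhood of $p$ contains points where $f$ is within $1/k$ of $m$. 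Next choose $z_{2k+1}$ strictly between $z_{2k}$ and $p$ with $|f(z_{2k+1}) - M| < 1/(k+1)$, in the same way. Every $z_i$ lies to the left of $p$, and the sequence moves strictly rightward. The odd terms give $f$-values converging to $\alpha = M$, the even terms give values converging to $\beta = m$, and $\alpha \neq \beta$.

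The remaining claims follow by symmetry. The same argument applies with right-hand limits, for $p \in int(t) \cup \{a\}$: the $z_i$ then lie to the right of $p$ and move leftward. It also applies to $\mathcal{L}_t$, since its values $b(s_q)$ are again in $J \subset \mathcal{B}$. The only delicate point is the boundedness or finiteness issue discussed in the first paragraph. Everything else is the routine extraction of a liminf-approaching and a limsup-approaching sequence, interleaved so that the points are monotone.
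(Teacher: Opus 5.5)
Your proposal is correct and follows essentially the paper's route: the values of $\mathcal{U}_t$ are bounded because they lie on $J \subset \mathcal{B}$ (this rests on Assumption 1, a dependence you rightly flag), $f$ then has two distinct cluster values from the left at $p$, and the two approximating sequences are interleaved so that the $z_i$ move strictly rightward. The only difference is cosmetic: you get the two cluster values as $\liminf$ and $\limsup$ and pick the points directly (also forcing $z_i \to p$, which the subsequent lemma uses), whereas the paper extracts them with two applications of Bolzano--Weierstrass.
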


\noindent {\bf Proof:} Let $q_1, q_2, \ldots$ be any sequence of
points of $t$ such that (i) each $q_i$ lies to the left of $p$,
(ii) for each $i \in \mathbb{N}$, $q_i$ lies to the left of
$q_{i+1}$ and (iii) $\lim_{i \rightarrow \infty} q_i = p$.

Since the points $u_i=(q_i, \mathcal{U}_t(q_i))$, $i \in \mathbb{N}$, lie inside the
bounding box $\mathcal{B}$ (because they belong to the Jordan
curve $J$), by the Bolzano-Weierstrass theorem, there exists
a convergent subsequence $S_1 = u_{i_1}, u_{i_2}, \ldots$ ($i_1 < i_2 < \cdots$) of this sequence. Suppose this sequence converges to a point $(p, \alpha)$.

Since the left-hand limit does not exist at $p$, in particular,
point $(p, \alpha)$ is not the left-hand limit at $p$.
Thus, there exists a sequence $p_1, p_2, \ldots$ of points such that
(i) each $p_i$ lies to the left of $p$, (ii) for each $i \in \mathbb{N}$, $p_i$ lies to the left of $p_{i+1}$ and (iii) $\lim_{i \rightarrow \infty} p_i = p$ such that the sequence $w_i = (p_i, \mathcal{U}_t(p_i))$, $i \in \mathbb{N}$, does not converge
to point $(p, \alpha)$.

Thus, there exists an $\epsilon > 0$, for which there exists no natural number $N$ such that all points $w_i$, $i > N$, belong to the open ball $B((p, \alpha), \epsilon)$ of radius $\epsilon$, with its center at point $(p, \alpha)$.
Thus, there exists an infinite subsequence $w_{j_1}, w_{j_2}, \ldots$, $j_1 < j_2 < \cdots$, such that $w_{j_k} \notin B((p, \alpha), \epsilon)$ for all $k \in \mathbb{N}$.

Again, by the Bolzano-Weierstrass theorem, this subsequence has a
convergent subsequence $S_2 = w_{j_{k_1}}, w_{j_{k_2}}, \ldots$,
$k_1 < k_2 < \cdots$. Let $(p, \beta)$ be the limit point of 
this subsequence. Then, $(p, \beta) \notin B((p,\alpha), \epsilon)$ and hence $|\beta - \alpha| \geq \epsilon > 0$.

For simplicity of notation, let $S_1 = u_1, u_2, \ldots$ and
$S_2 = w_1, w_2, \ldots$. Set $z_1 = u_1$. Set $z_2$ to be the first
point of $S_2$ which lies to the right of $z_1$. Set $z_3$ to be
the first point of $S_1$ which lies to the right of $z_2$ and so on.

The resulting sequence $z_1, z_2, \ldots$ satisfies the above
conditions.
$\blacksquare$

\begin{lemma}
\label{lemma-left-limit-exists}
At every point $p \in int(t) \cup \{ b \}$, the left-hand limit exists for both functions $\mathcal{U}_t$ and $\mathcal{L}_t$ and is a finite real number.
Further, at every point $p \in \{ a \} \cup int(t)$, the right-hand limit exists for both functions $\mathcal{U}_t$ and $\mathcal{L}_t$ and is a finite real number.
\end{lemma}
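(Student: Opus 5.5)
Assume the left-hand limit of $\mathcal{U}_t$ fails to exist at some $p \in int(t) \cup \{b\}$; the plan is to derive a contradiction. Observation \ref{obs-two-limits} gives points $z_1, z_2, \ldots$ on $t$ with $z_i \to p$ from the left and $z_i$ strictly left of $z_{i+1}$. Along this sequence the upper endpoints $u_i = \mathcal{U}_t(z_i)$ have odd terms converging to $(p,\alpha)$ and even terms converging to $(p,\beta)$, with $\alpha \neq \beta$. Finiteness of any limit is automatic, since every $u_i$ lies in $J$ and hence in the bounding box $\mathcal{B}$. So the whole content is to rule out $\alpha \neq \beta$. The other three cases (left-hand limit of $\mathcal{L}_t$, and right-hand limits of both at $a$ or at interior points) follow by reflecting in a horizontal or vertical line.

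Step 1 is a separation fact. Consider two consecutive points $z_i, z_{i+1}$ and the closed vertical segments $[z_i, u_i]$ and $[z_{i+1}, u_{i+1}]$, which meet $J$ only at their tops. The portion of $t$ between them misses $J$. By Observation~3, applied to the vertical lines near $p$, the curve $J$ contains no vertical segments. Choose $i$ odd and $j > i$ even with $u_i$ near $(p,\alpha)$ and $u_j$ near $(p,\beta)$. Say $\alpha < \beta$ (the other case is symmetric). Then $u_i$ is lower than $u_j$, and $u_i$ lies strictly to the left of $u_j$.

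Step 2 extracts disjoint subarcs of $J$. The key claim is that in each "swing" from $z_{2k-1}$ to $z_{2k}$ and back to $z_{2k+1}$, the curve $J$ must contain an arc joining a point near height $\alpha$ to a point near height $\beta$, with the arcs for different $k$ disjoint. To produce them, I would use the points $u_{2k}$ and $u_{2k+1}$ together with the parameterization $\psi$. For each $k$, take the subarc $\gamma_k$ of $J$ between $\psi^{-1}(u_{2k-1})$ and $\psi^{-1}(u_{2k})$ on the side not containing the other chosen points; this must be set up so the arcs can be taken pairwise disjoint. Then pass to subsequences so that both endpoint sequences converge. Their limits are $(p,\alpha)$ and $(p,\beta)$, so Observation \ref{obs-jordan-arc-limit} forces $\alpha = \beta$, a contradiction.

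The main obstacle is obtaining \emph{mutually disjoint} arcs with endpoints $u_{2k-1}, u_{2k}$. The natural route is combinatorial on the circle. The preimages $x_k = \psi^{-1}(u_k)$ are distinct points of $\mathbb{S}^1$, and passing to a subsequence makes them converge monotonically in angle to a point of $\mathbb{S}^1$. If some $x_k$ were repeated or the order were erratic, the subsequence extraction handles it. Once $x_1, x_2, \ldots$ are monotone on a small arc of $\mathbb{S}^1$, the arcs $[x_{2k-1}, x_{2k}]$ are pairwise disjoint. Their images are disjoint arcs of $J$ whose endpoints converge separately to $(p,\alpha)$ and $(p,\beta)$, and Observation \ref{obs-jordan-arc-limit} applies directly.

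In this form the geometry of Step 1 is not even needed; only the alternating convergence and the monotone subsequence in $\mathbb{S}^1$ are. The one care point is that the subsequence must keep the odd/even labels alternating. To ensure this, I would first pick convergent monotone subsequences separately from the odd terms and from the even terms, and then interleave them as in the proof of Observation \ref{obs-two-limits}.
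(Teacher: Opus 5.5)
There is a genuine gap in Step~2: pairwise disjoint arcs cannot be extracted from the parametrization alone. By compactness of $\mathbb{S}^1$ and continuity of $\psi$, every accumulation point of the odd preimages $x_{2k-1}=\psi^{-1}(u_{2k-1})$ equals $\psi^{-1}((p,\alpha))$. Likewise every accumulation point of the even ones equals $\psi^{-1}((p,\beta))$, and these two points are distinct because $\alpha\neq\beta$ and $\psi$ is injective. So no subsequence of the interleaved sequence is monotone on a small arc. For large $k$, each arc $[x_{2k-1},x_{2k}]$ (whichever side you take) contains one of the two components of $\mathbb{S}^1-\{\psi^{-1}((p,\alpha)),\psi^{-1}((p,\beta))\}$, minus small neighbourhoods of its ends. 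By pigeonhole, infinitely many of these arcs then share a common sub-arc. This failure is forced: Observation~\ref{obs-jordan-arc-limit} says precisely that disjoint arcs cannot have endpoint sequences with distinct limits. An argument using only the alternating convergence would therefore also rule out a sequence on $J$ that simply alternates between two fixed points of $J$. Your remark that the geometry of Step~1 is not needed is exactly where the argument breaks.

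The missing idea is to use the sweep geometry and the JCT for rectangles to confine each arc to its own region. The arc's second endpoint should then be a new point produced by that argument, not one of the $u_i$. With $\alpha<\beta$, the paper takes two even-index points $z_i, z_{i+2}$, whose upper endpoints are near height $\beta$. It forms the rectangle $R$ bounded below by $t$, on the sides by the vertical lines through $z_i$ and $z_{i+2}$, and on top by the horizontal line through the lower of $u_i, u_{i+2}$. Below the top edge, the sides lie in the open segments $s_{z_i}, s_{z_{i+2}}$, and the bottom lies in $int(t)$, so $J$ can meet $\partial R$ only on the closed top edge. The point $u_{i+1}$ (near height $\alpha$) lies inside $R$ and $u_{i+3}$ lies outside, so an arc of $J$ from $u_{i+1}$ to $u_{i+3}$ must hit the top edge. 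Its initial piece $\gamma$, up to the first hit $v$, lies in the closed rectangle. Taking rectangles with pairwise disjoint $x$-ranges (e.g.\ $i=i_0,i_0+4,i_0+8,\ldots$) gives pairwise disjoint arcs $\gamma_k$. Their first endpoints tend to $(p,\alpha)$ and their second endpoints $v_k$ tend to $(p,\beta)$, and only then does Observation~\ref{obs-jordan-arc-limit} yield the contradiction.
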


\noindent {\bf Proof:} (see Figure \ref{fig-left-limit-exists}) Let $z_1, z_2, \ldots$ be the sequence of points of $t$ given by the above observation.
For each $i \in \mathbb{N}$, let $w_i$ be the point $(z_i, f(z_i))$. Let $\epsilon = \frac{|\beta - \alpha|}{4}$.
By the above observation, $\epsilon$ is a positive real number. Without loss of generality, assume that $\beta > \alpha$.

Since the sequence $z_1, z_3, \ldots$ converges to $\alpha$ and the sequence $z_2, z_4, \ldots$ 
converges to $\beta$, there exists an even natural number $i_0$ such that for all $i \geq \frac{i_0}{2}$, 
$f(z_{2i}) \in (\beta - \epsilon, \beta + \epsilon)$ and $f(z_{2i+1}) \in (\alpha - \epsilon, \alpha + \epsilon)$.

Let $y_1$ be the point with minimum $y$-coordinate among the two points $w_{i_0}$ and $w_{i_0+2}$. 
Let $R_1$ be the rectangle formed by the horizontal line containing $t$, the horizontal line through $y_1$ and
the vertical lines through $w_{i_0}$ and $w_{i_0+2}$. The point $u_1=w_{i_0+1} \in J$ lies in the interior of rectangle
$R_1$. Further, the point $w_{i_0+3}$ lies in the exterior of rectangle $R_1$. Let $\mu_1$ be an arc of $J$ with
endpoints at $w_{i_0+1}$ and $w_{i_0+3}$. Then, by the Jordan curve theorem for rectangle $R_1$ (a rectangle
is the simplest possible rectilinear polygon and we have assumed that Jordan curve theorem is true for 
all rectilinear polygons), arc $\mu_1$ must intersect rectangle $R_1$. Since the left and right boundaries
of $R_1$ are open segments and the bottom boundary is $int(t)$, $\mu_1$ can intersect $R_1$ only on its
top edge (considered as a closed line segment). Let $v_1$ be the first intersection point of arc $\mu_1$ (considered from $u_1$) with the top edge of $R_1$. Finally, let $\gamma_1$ be the portion of $\mu_1$ from $u_1$ to $v_1$. Note that $\gamma_1$ lies completely inside the closed rectangle $R_1$.

Repeating this process along the sequence, we get a sequence of mutually disjoint arcs $\gamma_1, \gamma_2, \ldots$
of $J$, where $u_i$ and $v_i$ are the endpoints of arc $\gamma_i$, for each $i \in \mathbb{N}$.
Further, for each $i \in \mathbb{N}$, the Euclidean distance $d(u_i, v_i)$ is at least $\frac{|\beta-\alpha|}{2}$.

Further, note that $\lim_{i \rightarrow \infty} u_i = (p, \alpha)$ and $\lim_{i \rightarrow \infty} v_i = (p, \beta)$.
Applying Observation \ref{obs-jordan-arc-limit} to the arcs $\gamma_1, \gamma_2, \ldots$, we conclude that $\alpha = \beta$. We arrive at a contradiction, and the lemma is proved.
$\blacksquare$\\

\begin{center}
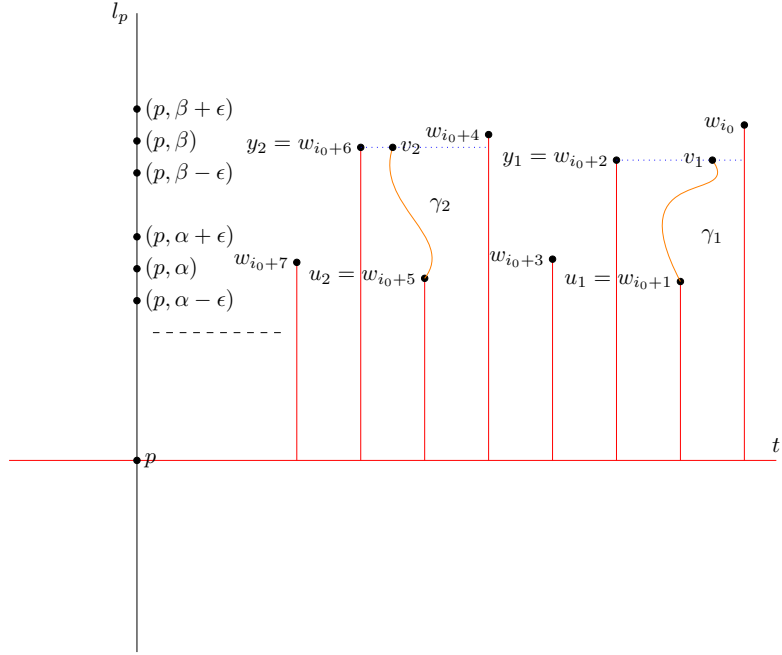
\begin{figure}
\scalebox{0.85}{%
\begin{tikzpicture}
\draw[red] (-2,0) -- (10,0) node[black][anchor=south]{$t$};

\draw[red] (9.5, 0) -- (9.5, 5.25) node[black][anchor=east]{$w_{i_0}$};
\filldraw[black] (9.5, 5.25) circle (0.05) ;
\draw[red] (8.5, 0) -- (8.5, 2.8) node[black][anchor=east]{$u_1=w_{i_0+1}$};
\filldraw[black] (8.5, 2.8) circle (0.05) ;
\draw[red] (7.5, 0) -- (7.5, 4.7) node[black][anchor=east]{$y_1=w_{i_0+2}$};
\filldraw[black] (7.5, 4.7) circle (0.05) ;
\draw[blue,dotted] (7.5, 4.7) -- (9.5, 4.7);
\draw[orange] (8.5, 2.8) .. controls (7.5, 4.7) and (9.5, 4) .. (9, 4.7);
\node[anchor=center] at (9, 3.5) {$\gamma_1$};
\filldraw[black] (9, 4.7) circle (0.05) node[black][anchor=east]{$v_1$};

\draw[red] (6.5, 0) -- (6.5, 3.15) node[black][anchor=east]{$w_{i_0+3}$};
\filldraw[black] (6.5, 3.15) circle (0.05) ;
\draw[red] (5.5, 0) -- (5.5, 5.1) node[black][anchor=east]{$w_{i_0+4}$};
\filldraw[black] (5.5, 5.1) circle (0.05) ;

\draw[red] (4.5, 0) -- (4.5, 2.85) node[black][anchor=east]{$u_2=w_{i_0+5}$};
\filldraw[black] (4.5, 2.85) circle (0.05) ;

\draw[red] (3.5, 0) -- (3.5, 4.9) node[black][anchor=east]{$y_2=w_{i_0+6}$};
\filldraw[black] (3.5, 4.9) circle (0.05) ;
\draw[blue,dotted] (3.5, 4.9) -- (5.5, 4.9);
\draw[orange] (4.5, 2.85) .. controls (5, 3.5) and (3.7, 4) .. (4, 4.9);
\node[anchor=center] at (4.75, 4) {$\gamma_2$};
\filldraw[black] (4, 4.9) circle (0.05) node[black][anchor=west]{$v_2$};

\draw[red] (2.5, 0) -- (2.5, 3.1) node[black][anchor=east]{$w_{i_0+7}$};
\filldraw[black] (2.5, 3.1) circle (0.05) ;
\draw (0,-3) -- (0,7) node[anchor=east]{$l_p$};
\draw[dashed] (2.25, 2) -- (0.25, 2);
\filldraw[black] (0,0) circle (0.05) node[anchor=west]{$p$} ;
\filldraw[black] (0,3) circle (0.05) node[anchor=west]{$(p, \alpha)$} ;
\filldraw[black] (0,3.5) circle (0.05) node[anchor=west]{$(p, \alpha+\epsilon)$} ;
\filldraw[black] (0,2.5) circle (0.05) node[anchor=west]{$(p, \alpha-\epsilon)$} ;
\filldraw[black] (0,5) circle (0.05) node[anchor=west]{$(p, \beta)$} ;
\filldraw[black] (0,4.5) circle (0.05) node[anchor=west]{$(p, \beta-\epsilon)$} ;
\filldraw[black] (0,5.5) circle (0.05) node[anchor=west]{$(p, \beta+\epsilon)$} ;
\end{tikzpicture}
}
\caption{Proof of Lemma \ref{lemma-left-limit-exists}}
\label{fig-left-limit-exists}
\end{figure}
\end{center}

{\it Note.} A function $f: \mathbb{R} \rightarrow \mathbb{R}$ has a {\it simple discontinuity} at point $x \in \mathbb{R}$ if and only if 
(i) the left-hand and right-hand limits exists at $x$ and are both finite real numbers and (ii) $f$ is discontinuous at $x$.

\begin{corollary}
Both functions $\mathcal{U}_t$ and $\mathcal{L}_t$ have a countable number of discontinuities.
\end{corollary}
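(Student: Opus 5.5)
The plan is to use the standard fact that a real function whose one-sided limits exist (finitely) at every point has at most countably many discontinuities. By Lemma \ref{lemma-left-limit-exists}, both $\mathcal{U}_t$ and $\mathcal{L}_t$ (viewed through their $y$-coordinates as real functions $f$ on $t$, since the $x$-coordinate of $\mathcal{U}_t(p)$ is that of $p$) have finite left-hand limits at every point of $int(t)\cup\{b\}$ and finite right-hand limits at every point of $\{a\}\cup int(t)$. So every discontinuity of $f$ in $int(t)$ is a simple discontinuity in the sense of the Note. The two endpoints $a,b$ contribute at most two further discontinuities, which does not affect countability. I treat $f=\mathcal{U}_t$; the argument for $\mathcal{L}_t$ is identical.

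For each $n\in\mathbb{N}$, let $D_n$ be the set of points $x\in int(t)$ at which the oscillation is at least $1/n$, i.e.\ $\max\{|f(x)-f(x^-)|,\,|f(x)-f(x^+)|,\,|f(x^-)-f(x^+)|\}\geq 1/n$. Every discontinuity lies in some $D_n$, since at a point of continuity all three quantities vanish. It then suffices to show that each $D_n$ has no accumulation point in the closed segment $t$. By compactness of $t$, $D_n$ is then finite, and the union over $n$ is countable.

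The key step, which I expect to be the main obstacle, is this accumulation argument. Suppose $x_1,x_2,\ldots\in D_n$ are distinct and converge to some $x^*\in t$. Passing to a subsequence, we may assume they approach $x^*$ monotonically, say from the left, so $x^*\in int(t)\cup\{b\}$ and the left limit $L=f(x^{*-})$ exists. Choose $\delta>0$ such that $|f(y)-L|<1/(4n)$ for all $y\in(x^*-\delta,x^*)$. Take some $x_k$ in this interval. Then $f(x_k)$, $f(x_k^-)$ and $f(x_k^+)$ are all values or limits of values of $f$ on points in $(x^*-\delta,x^*)$, since one-sided limits at $x_k$ are approached from inside that open interval. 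Hence all three lie within $1/(4n)$ of $L$, and every pairwise difference is at most $1/(2n)<1/n$. This contradicts $x_k\in D_n$. The right-approach case is symmetric, using the right-hand limit from Lemma \ref{lemma-left-limit-exists}.

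Thus every $D_n$ is finite, the set of discontinuities of $\mathcal{U}_t$ is contained in $\{a,b\}\cup\bigcup_n D_n$, and this set is countable. The same holds for $\mathcal{L}_t$.
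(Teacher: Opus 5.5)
Your proof is correct and takes the same route as the paper. The paper also uses Lemma~\ref{lemma-left-limit-exists} to conclude that every discontinuity in $int(t)$ is simple, and then cites from Stromberg (Chapter 3) the standard fact that such discontinuities are countable; you prove that cited fact directly, by showing via compactness of $t$ that each set $D_n$ of jumps of size at least $1/n$ is finite, and you additionally handle the endpoints $a,b$ explicitly.
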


\noindent {\bf Proof:} Since the left-hand and right-hand limits 
exist at every point $p \in int(t)$ and are finite, every discontinuity of $\mathcal{U}_t(p)$ and $\mathcal{L}_t(p)$ is a simple discontinuity (\cite{stromberg}, Chapter $3$). The number of simple discontinuities of a function from $\mathbb{R}$ to $\mathbb{R}$ is a countable set (\cite{stromberg}, Chapter $3$). $\blacksquare$\\

Though the above lemma establishes that the number of discontinuities
is countable, this does not suffice to prove the Jordan curve theorem
for horizontal sweeps. The set of discontinuities may be countable,
but dense (for example, the set of discontinuities can be the rational numbers).

{\bf An example.} Consider the following well-known example \cite{stromberg}. Let $r_1, r_2, \ldots$ be any enumeration of rational numbers in $[0,1]$. Let $f: [0,1] \rightarrow \mathbb{R}$ be the function defined by $f(x) = \sum_{r_n < x} \frac{1}{2^n}$. The set of discontinuities of $f$ is
the set of rational numbers. The function is monotone and continuous,
with $f(0)=0$ and $f(1) = 1$. The set of discontinuities is 
$[0,1] \cap \mathbb{Q}$, where $\mathbb{Q}$ is the set of rational
numbers. 

We now show that the result of the above lemma is the best possible.
We construct a Jordan curve $W$ from $f$ as follows:

\begin{enumerate}
\item We add three line segments to the graph of $f$: the segment from
$(0,0)$ to $(0,-1)$, the segment from $(0,-1)$ to $(1, -1)$ and the
segment from $(1, -1)$ to $(1, 0)$.

\item Let $r \in \mathbb{Q} \cap [0,1]$. Let $w_r$ be the vertical
segment from $f(r-) = \lim_{z \rightarrow r^-} f(r)$ to $f(r+) = \lim_{z \rightarrow r^+} f(z)$. We connect
the two endpoints of $w_r$ with a semicircle $S_r$, which lies to the
left of segment $w_r$. We apply this modification for every rational
number in $[0,1] \cap \mathbb{Q}$.
\end{enumerate}

This results in a Jordan curve $W$. Let $t$ be the horizontal segment
from $(0,-1)$ to $(1, -1)$. Then, the upper function for the
horizontal sweep $H(t)$ is exactly the function $f$.

We now describe the interior and exterior of $W$. The interior $int(W)$ is
defined as follows:

\begin{enumerate}
\item Let $p \in (0,1) - \mathbb{Q}$ be an irrational number. Then, $f$ is
continuous at point $p$. We add all points in interior of open segment $s_p$
to $int(W)$.

\item Let $p \in (0,1) \cap \mathbb{Q}$ be a rational number. Let $m_p$ denote
the minimum of the left-hand limit $f(p-)$ and the right-hand limit $f(p+)$.
We add all interior points of the vertical line segment with endpoints at $(p, -1)$ and $m_p$ to $int(W)$.
\end{enumerate} 

The exterior $ext(W)$ is defined as the set $\mathbb{R}^2 - ( W \cup int(W))$. The Jordan curve theorem is true for $W$ with $int(W)$ as the bounded, connected open set and $ext(W)$ as the unbounded, connected open set (see Lemma \ref{JCT-horizontal-sweep} for a proof).\\

Let $D(\mathcal{U}_t)$ and $D(\mathcal{L}_t)$ be the sets of all discontinuities of functions $\mathcal{U}_t$ and $\mathcal{L}_t$, respectively.

Suppose $D(\mathcal{U}_t)$ has an infinite number of points.
Let $d_1, d_2, \ldots$ be an enumeration of points in $D(\mathcal{U}_t)$ (such an enumeration is possible for any countable set). 

Construct an interval $t'$ of length $l(t) + 1$, 
where $l(t)$ is the length of horizontal segment $t$.

For each $i \in \mathbb{N}$, if $d_i \in int(t)$, then $\mathcal{U}_t(d_i-)$ and and $\mathcal{U}_t(d_i+)$ denote the left-hand and right-hand limits at point $d_i$. If $d_i$ is the left endpoint of $t$, we set $\mathcal{U}_t(d_i-) = \mathcal{U}_t(d_i)$.
Similarly, if $d_i$ is the right endpoint of $t$, we set 
$\mathcal{U}_t(d_i+) = \mathcal{U}_t(d_i)$. 

For each $k \in \mathbb{N}$, $V_k$ is the vertical
segment formed by the points $(d_k, \mathcal{U}_t(d_k-))$ and
$(d_k, \mathcal{U}_t(d_k+))$. We define $bd(\mathcal{U}_t) = 
\{(z, \mathcal{U}_t(z)) ~| z \in t - D(\mathcal{U}_t) \} 
\cup \{V_k ~| z = d_k, ~k \in \mathbb{N} \}$.

If $z \in t$ is a point where $\mathcal{U}_t$ is continuous, we set
$\zeta(z)$ to the point of $t'$ which is at distance $z + \sum_{d_i < z} \frac{1}{2^i}$ from the left endpoint of $t'$. If $z \in t$ is a discontinuity of $\mathcal{U}_t$ i.e., $z = d_k$ for some $k \in \mathbb{N}$,
$\zeta(z)$ is equal to the closed interval $I_k = [g_k, h_k]$.
Here $g_k$ and $h_k$ are points of $t'$ at distances $z + \sum_{d_i < z} \frac{1}{2^i}$ and $z + \sum_{d_i < z} \frac{1}{2^i} + \frac{1}{2^k}$ from 
the left endpoint of $t'$, respectively.

We now define a map $\eta: t' \rightarrow bd(\mathcal{U}_t)$ as follows:

\begin{enumerate}
\item Let $w \in t'$. If $w = \zeta(z)$, where $z$ is a point of continuity of $\mathcal{U}_t$, we set $\eta(w) = (z, \mathcal{U}_t(z))$.

\item Let $w \in t'$. Suppose $w \in I_k$, for some $k \in \mathbb{N}$. Suppose $w$ partitions $I_k$ into two segments $[c_k, w)$ and $(w, d_k]$ whose lengths are in the ratio $\delta: 1 - \delta$, for some $0 \leq \delta \leq 1$. Then, $\eta(w)$ partitions
the vertical segment $V_k$ of $bd(\mathcal{U}_t)$ in the same
ratio $\delta : 1 - \delta$ (we take the left-hand limit as the left endpoint of $V_k$ and the right-hand limit as the right endpoint of $V_k$).
\end{enumerate}

\begin{observation}
\label{obs-bijection-eta}
$\eta$ is a continuous bijection from $t'$ to $bd(\mathcal{U}_t)$.
\end{observation}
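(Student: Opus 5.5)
We view $\eta$ as the composition of the inverse of the ``spreading'' map $\zeta$ with the natural parametrization of the completed graph of $\mathcal{U}_t$, and check well-definedness, injectivity, surjectivity and continuity in that order. Put $\Phi(z) = z + \sum_{d_i < z} 2^{-i}$ for $z \in t$, measuring $z$ as distance from $a$. Then $\Phi$ is strictly increasing. It jumps by exactly $2^{-k}$ at $d_k$, with left limit $\Phi(d_k)$, and it is right-continuous only up to that jump. Hence the images $\zeta(z)$ ($z$ a continuity point) and the intervals $I_k$ ($z=d_k$) are pairwise disjoint.

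\textbf{Well-definedness and surjectivity.} We show that these pieces cover $t'$. Take $w \in t'$ and let $z^* = \sup\{z : \Phi(z) \le w\}$. Monotonicity shows that either $w = \Phi(z^*)$ with $z^*$ a continuity point, or $w \in [\Phi(z^*), \Phi(z^*)+2^{-k}] = I_k$ with $z^* = d_k$. This uses that $\Phi$ is continuous off $D(\mathcal{U}_t)$, since the tail sums $\sum_{i>N} 2^{-i}$ are small. The total length is $l(t) + \sum_k 2^{-k} \le l(t)+1$, with equality when $D(\mathcal{U}_t)$ is infinite, so the right endpoint matches. Thus every $w$ is assigned exactly one value. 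Every point of $bd(\mathcal{U}_t)$ is hit: graph points come from $\zeta(z)$, and points of $V_k$ come from the affine parametrization of $I_k$.

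\textbf{Injectivity.} Points with different preimages $z \ne z'$ in $t$ map to points with different $x$-coordinates. Inside a single $I_k$ the map is affine onto $V_k$, which is nondegenerate because $d_k$ is a genuine (simple) discontinuity. One subtlety must be handled: if $\mathcal{U}_t(d_k-) = \mathcal{U}_t(d_k+) \ne \mathcal{U}_t(d_k)$, then $V_k$ degenerates to a point. In that case we interpret $V_k$ as the segment through the three values $\mathcal{U}_t(d_k-)$, $\mathcal{U}_t(d_k)$, $\mathcal{U}_t(d_k+)$, or we note that the removable case still gives a valid bijection after collapsing. I would state this adjustment explicitly.

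\textbf{Continuity (the main obstacle).} The key estimate concerns $w_n \to w$ in $t'$. Let $z_n$ and $z$ be the corresponding points of $t$. Monotonicity of $\Phi$ gives $z_n \to z$. We must show $\eta(w_n) \to \eta(w)$, and we split into cases according to whether $w_n$ lies in some $I_{k_n}$, approaching from the left or from the right.

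If $z$ is a continuity point and the $w_n$ are graph points, this is just continuity of $\mathcal{U}_t$ at $z$.

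If infinitely many $w_n$ lie in intervals $I_{k_n}$ with $d_{k_n} \to z$ and $d_{k_n}\neq z$, then $k_n \to \infty$. The endpoints of $V_{k_n}$ are one-sided limits of $\mathcal{U}_t$ at points $d_{k_n}$ converging to $z$. By Lemma~\ref{lemma-left-limit-exists}, those one-sided limits converge to the appropriate one-sided limit of $\mathcal{U}_t$ at $z$. This requires a small $\varepsilon$-argument: since $\mathcal{U}_t$ has a limit from that side at $z$, values within a small one-sided neighbourhood are within $\varepsilon$, and so are their one-sided limits. Hence the whole segment $V_{k_n}$ shrinks toward the relevant point.

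If $z = d_k$, approach from the left lands near the endpoint $(d_k,\mathcal{U}_t(d_k-))$, which is exactly $\eta(g_k)$. Approach from the right lands near $(d_k,\mathcal{U}_t(d_k+)) = \eta(h_k)$. Within $I_k$ itself the map is affine.

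This case analysis, resting on Lemma~\ref{lemma-left-limit-exists}, is the core of the proof. Once it is done, $\eta$ is a continuous bijection, as required.
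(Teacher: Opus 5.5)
Your route is the paper's: bijectivity read off from the construction, and continuity by cases (continuity point of $\mathcal{U}_t$, interior point of some $I_k$, endpoints $g_k,h_k$). Your treatment of sequences $w_n\in I_{k_n}$ with $d_{k_n}\to z$ is more careful than the paper's. The paper writes $\lim_{w'\to w-}\eta(w')=\lim_{x\to z-}(x,\mathcal{U}_t(x))$ as though every $\eta(w')$ were a graph point. Your $\varepsilon$-argument, showing that all of $V_{k_n}$ lies near $(z,\mathcal{U}_t(z-))$, is exactly what that step needs.

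The gap is in injectivity. You correctly observe that a removable discontinuity $\mathcal{U}_t(d_k-)=\mathcal{U}_t(d_k+)=L\neq\mathcal{U}_t(d_k)$ would make $V_k$ a single point and $\eta$ constant on $I_k$. Your patches do not prove the statement as given:
\begin{itemize}
\item Redefining $V_k$ changes $bd(\mathcal{U}_t)$. It cannot work anyway: continuity from the left at $g_k$ and from the right at $h_k$ forces $\eta(g_k)=\eta(h_k)=(d_k,L)$.
\item Collapsing $I_k$ changes the domain $t'$.
\end{itemize}
The paper passes over the same point by saying bijectivity ``follows from definition''.

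What you need is that this case never occurs. Take $z\in int(t)$.
\begin{itemize}
\item[(a)] $\mathcal{U}_t(z\pm)\ge\mathcal{U}_t(z)$. If $x_n\to z$ and $\mathcal{U}_t(x_n)\to L<\mathcal{U}_t(z)$, then $(z,L)\in J$ by closedness. But $(z,L)$ lies on $int(t)$ or in $s_z$, a contradiction.
\item[(b)] $P=(z,\mathcal{U}_t(z))$ is a limit of points $(x_n,y_n)\in J$ with $x_n\neq z$. Otherwise a small arc of $J$ through $P$ would be a connected subset of $l_z$ with more than one point, i.e.\ a vertical segment of $J$, which the choice of $\theta^*$ excludes.
\end{itemize}
Take infinitely many $x_n$ on one side, say the left. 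For large $n$, $\mathcal{U}_t(x_n)\le y_n$, so $\mathcal{U}_t(z-)\le\mathcal{U}_t(z)$ by Lemma~\ref{lemma-left-limit-exists}. Combined with (a), this gives equality.

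Therefore $\mathcal{U}_t(z)=\min(\mathcal{U}_t(z-),\mathcal{U}_t(z+))$ at every interior point, so every interior discontinuity is a genuine jump. At $a$ and $b$, the conventions $\mathcal{U}_t(a-)=\mathcal{U}_t(a)$ and $\mathcal{U}_t(b+)=\mathcal{U}_t(b)$ make $V_k$ nondegenerate by definition. With every $V_k$ of positive length, your injectivity argument and the rest of your proof go through.
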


\noindent {\bf Proof:} The fact that $\eta$ is a bijection follows from definition. We now prove
that it is a continuous function.

Consider any point $w \in t'$. Let $\eta(w) = (z, \mathcal{U}_t(z))$. There are three cases:

\begin{enumerate}
\item {\it $z$ is a point of continuity of $\mathcal{U}_t$.} Suppose $w$ is not the left endpoint
of $t'$. Suppose $w'$ tends to $w$ from the left, in segment $t'$. Let $X(\eta(w'))$ denote the
$x$-coordinate of $w'$. By definition of $\eta$, $X(\eta(w')) < X(\eta(w))$. Further, if $w_1 < w_2 < w$,
then $X(\eta(w_1)) \leq X(\eta(w_2)) \leq X(\eta(w))$.

We first show that $X(\eta(w'))$ tends to $X(\eta(w))$ from the left. To prove this, note
that $w - w' = X(\eta(w)) - X(\eta(w')) + \sum_{d_n \in [X(\eta(w')), X(\eta(w))]} \frac{1}{2^n}$.

Thus, $\lim_{w' \rightarrow w-} \eta(w') = \lim_{X(\eta(w')) \rightarrow X(\eta(w))-} \eta(w')
= \lim_{x \rightarrow z=X(\eta(w))-} (x, \mathcal{U}_t(x)) = (z, \mathcal{U}_t(z-)) = (z, \mathcal{U}_t(z))$
(since $z$ is a point of continuity of $\mathcal{U}_t$).

A symmetric argument proves that $\lim_{w' \rightarrow w+} \eta(w') = (z, \mathcal{U}_t(z))$.
Thus, $\eta$ is continuous at $w$.

\item {\it $z$ is a point of discontinuity of $\mathcal{U}_t$.} Suppose $z = d_k$ for some $k \in \mathbb{N}$.
Then, $\eta(w)$ is a point of the vertical $V_k$. If $\eta(w)$ is an interior point of $V_k$, then
continuity at $w$ is the result of the continuity of the linear map from $[c_k, d_k]$ to $V_k$, at any
interior point. 

Now suppose $\eta(w)$ is an endpoint of $V_k$. Thus, $w \in \{c_k, d_k\}$. Without loss of generality, assume that $w = c_k$ i.e., $\eta(w)$ is the left endpoint of $V_k$. Then, $\lim_{w' \rightarrow w+} \eta(w)$ exists and is equal to $\eta(w)$, by the continuity
from the right at point $c_k$, of the linear map from $[c_k, d_k]$ to $V_k$. Further, $\lim_{w \rightarrow w-} \eta(w) = (z, \mathcal{U}_t(z-)) = \eta(c_k) = \eta(w)$ (using the argument for the case when
$X(\eta(w))$ is a point of continuity above).
Thus, in this case also, $\eta$ is continuous at $w$.
\end{enumerate}

$\blacksquare$\\

In a symmetric manner, we construct a continuous bijection $\eta'$ from a segment $t''$ of length $l(t)+1$ and $bd(\mathcal{L}_t)$:

\begin{observation}
\label{obs-bijection-eta-prime}
$\eta'$ is a continuous bijection from $t''$ to $bd(\mathcal{L}_t)$.
\end{observation}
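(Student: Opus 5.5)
The plan is to copy the construction of $\eta$ line for line, with $\mathcal{L}_t$ in place of $\mathcal{U}_t$, and then check that Observation \ref{obs-bijection-eta} goes through unchanged. First I will check what the construction of $\eta$ actually used: that the one-sided limits of the function exist and are finite, and that its discontinuities form a countable set. For $\mathcal{L}_t$ the first fact is Lemma \ref{lemma-left-limit-exists}, which is stated for both $\mathcal{U}_t$ and $\mathcal{L}_t$. The second follows from the Corollary on countably many discontinuities, which also covers both functions.

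Next I set up the notation. Enumerate $D(\mathcal{L}_t)$ as $d'_1, d'_2, \ldots$; if the set is finite, the sums below are finite and nothing changes. Let $V'_k$ be the vertical segment with endpoints $(d'_k, \mathcal{L}_t(d'_k-))$ and $(d'_k, \mathcal{L}_t(d'_k+))$, with the same endpoint conventions at $a$ and $b$ as before. Define $bd(\mathcal{L}_t)$ as the graph of $\mathcal{L}_t$ over $t - D(\mathcal{L}_t)$ together with all the $V'_k$. Then define $\zeta'$ and $\eta'$ on $t''$ exactly as $\zeta$ and $\eta$ were defined on $t'$: a point of continuity $z$ goes to the single point at distance $z + \sum_{d'_i < z} 2^{-i}$, and a discontinuity $d'_k$ goes to a closed interval $I'_k$ of length $2^{-k}$, which is mapped linearly onto $V'_k$. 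The intervals $I'_k$ together with the images of the points of continuity partition $t''$, and the total length works out to $l(t) + \sum_k 2^{-k} \le l(t)+1$. This gives that $\eta'$ is well defined and a bijection.

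For continuity I follow the same two cases as before. At a preimage of a continuity point $z$, the map $w \mapsto X(\eta'(w))$ is monotone. The identity $w - w' = X(\eta'(w)) - X(\eta'(w')) + \sum_{d'_n \in [X(\eta'(w')),X(\eta'(w))]} 2^{-n}$ shows that $X(\eta'(w')) \to z$ from the left as $w' \to w^-$. Hence $\eta'(w') \to (z, \mathcal{L}_t(z-)) = (z,\mathcal{L}_t(z))$, and the right-hand side is symmetric. At a point of some $I'_k$, continuity at interior points comes from the linear map. At an endpoint, one side comes from the linear map and the other from the one-sided-limit argument, which gives $\mathcal{L}_t(d'_k\mp)$, matching the corresponding endpoint of $V'_k$.

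The main obstacle is the step $X(\eta'(w')) \to z$ whenever $w' \to w$. This needs the tail sums $\sum 2^{-n}$ over discontinuities in a shrinking interval to tend to $0$. That holds because the series converges and any fixed finitely many $d'_n$ different from $z$ eventually lie outside the shrinking interval. Everything else is a direct transcription of the proof of Observation \ref{obs-bijection-eta}, so I expect to simply cite that proof with $\mathcal{U}_t$ replaced by $\mathcal{L}_t$.
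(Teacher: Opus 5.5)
Your proposal is correct and follows the paper's route: the paper proves this observation only by saying it is similar to Observation \ref{obs-bijection-eta}, that is, by rerunning that construction and continuity argument with $\mathcal{L}_t$ in place of $\mathcal{U}_t$, which is what you do in more detail. One step that you (like the paper) leave implicit is that injectivity on each $I'_k$ requires $V'_k$ to be nondegenerate, i.e.\ $\mathcal{L}_t(d'_k-)\neq\mathcal{L}_t(d'_k+)$ at interior discontinuities; this holds because $\mathcal{L}_t$ is upper semicontinuous ($J$ is closed) and, since $J$ contains no vertical segments, points of $J$ off the vertical line through $d'_k$ accumulate at $(d'_k,\mathcal{L}_t(d'_k))$, which forces one of the one-sided limits to equal $\mathcal{L}_t(d'_k)$, so no discontinuity is removable.
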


\noindent {\bf Proof:} Similar to Observation \ref{obs-bijection-eta}.
$\blacksquare$

Let $s_{a+}$ be the right-hand limit $\mathcal{U}_t(a+)$ and $s_{b-}$ be the left-hand 
limit $\mathcal{U}_t(b-)$. We can combine $s_{a+}$, a dilation of $t'$, $s_{b-}$ and a dilation of $t''$ to obtain a rectangle $R_t$, which is homeomorphic to a circle.

Let $\psi_t: R \rightarrow s_{a+} \cup bd(\mathcal{U}_t) \cup s_{b-} \cup bd(\mathcal{L}_t)$ be the map which is the identity map on $s_{a+}$,
$\eta$ on $t'$, $\eta'$ on $t''$ and the identity map on $s_{b-}$. We arrive at the following observation:

\begin{observation}
\label{obs-K-t-Jordan-curve}
$\psi_t$ is a continuous and one-to-one function from $R$ to $\mathbb{R}^2$.
\end{observation}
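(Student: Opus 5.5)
The plan is to check continuity and injectivity of $\psi_t$ separately, working piece by piece on the four arcs of $R$ (the dilated copies of $s_{a+}$, $t'$, $s_{b-}$, $t''$) and then at the four corner points where these pieces meet.

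\textbf{Continuity.} On the interior of each piece, continuity is already known. On $s_{a+}$ and $s_{b-}$ the map is the identity. On $t'$ and $t''$ it is $\eta$ and $\eta'$, which are continuous by Observations \ref{obs-bijection-eta} and \ref{obs-bijection-eta-prime}; composing with a dilation does not change this.

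At the corners I must check that adjacent pieces agree. By construction, $\eta$ at the left endpoint of $t'$ is $(a,\mathcal{U}_t(a+))$. I will define $\eta$ at the endpoints using the one-sided limits supplied by Lemma \ref{lemma-left-limit-exists}, exactly as $V_k$ was defined at $d_i=a$ or $d_i=b$. So this value is the upper endpoint of the segment $s_{a+}$, which I interpret as the vertical segment from $(a,\mathcal{L}_t(a+))$ to $(a,\mathcal{U}_t(a+))$. Likewise $\eta'$ at the left endpoint of $t''$ is the lower endpoint of $s_{a+}$, and symmetrically at $b$. Since all four pieces are closed and finitely many, the pasting lemma then gives continuity on all of $R$.

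\textbf{Injectivity.} Within a piece, injectivity follows from the bijections $\eta,\eta'$ and from the identity maps. Across pieces there are three kinds of possible overlap, handled in this order.

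\begin{enumerate}
\item \emph{$bd(\mathcal{U}_t)$ against $bd(\mathcal{L}_t)$.} Points with the same $x$-coordinate $z\in int(t)$ are separated by the line of $t$. Every point of $bd(\mathcal{U}_t)$ over $int(t)$ lies strictly above $t$, and every point of $bd(\mathcal{L}_t)$ lies strictly below. This holds because $\mathcal{U}_t(z)>$ the height of $t$ whenever $z\notin J$. It also holds for the one-sided limits, since $u(s_p)$ stays at least a positive distance above $t$ near $z$: the lower endpoint of $s_p$ is below $t$ and $J$ does not meet $int(t)$. For the limits this needs a small argument, given below.
\item \emph{The upper or lower graph against $s_{a+}$ and $s_{b-}$.} These overlaps occur only at $x=a$ or $x=b$, where the pieces meet only at their shared endpoints.
\item \emph{$s_{a+}$ against $s_{b-}$.} These are disjoint because $a\neq b$.
\end{enumerate}

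\textbf{Main obstacle.} The hard step is showing that the one-sided limits $\mathcal{U}_t(z\pm)$ strictly exceed the height $h$ of $t$ for $z\in int(t)$, and similarly that $\mathcal{L}_t(z\pm)<h$. A priori the upper endpoints could descend to $t$.

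To rule this out, suppose $\mathcal{U}_t(z_n)\to h$ with $z_n\to z$. Since $J$ is closed, the point $(z,h)$ would lie in $J$, contradicting $int(t)\cap J=\phi$.

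At the endpoints $a,b$ the limits may equal $h$ when $a$ or $b$ lies in $J$. In that case $s_{a+}$ degenerates to a point, and the corner identification must be handled as a degenerate case in which the two graphs meet at that point. That single shared point is consistent with injectivity, because it is one point of $R$.
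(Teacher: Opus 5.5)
Your proof is correct and follows the paper's route. The paper's entire justification is that $bd(\mathcal{U}_t)$ and $bd(\mathcal{L}_t)$ are $x$-monotone Jordan arcs, and your separation of the two arcs by the line of $t$ (via closedness of $J$), together with the corner matching, supplies exactly the detail that sentence leaves implicit. One point should be stated as a deliberate choice rather than credited to the construction: the upper arc must start at $(a,\mathcal{U}_t(a+))$, not at $(a,\mathcal{U}_t(a))$ as the paper's convention $\mathcal{U}_t(d_i-)=\mathcal{U}_t(d_i)$ for $d_i=a$ would give, since that would put a vertical piece of $bd(\mathcal{U}_t)$ inside $s_{a+}$. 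Also, $s_{a+}$ collapses to a point only when both $\mathcal{U}_t(a+)$ and $\mathcal{L}_t(a+)$ equal the height of $t$; if only one does, $a$ is simply an endpoint of $s_{a+}$ and your corner argument goes through unchanged.
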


\noindent {\bf Proof:} $bd(\mathcal{U}_t)$ and $bd(\mathcal{L}_t)$ are both $x$-monotone Jordan arcs. $\blacksquare$\\

We denote the Jordan curve given by $\psi_t$ by $K_t$. We define
$int(H(t)) = \cup_{p \in t} s_p - K_t$ and $ext(H(t)) = \mathbb{R}^2 - \big( K_t \cup int(H(t))  \big)$.

\begin{lemma}
\label{JCT-horizontal-sweep}
{\bf JCT for horizontal sweep.} Let $H(t)$ be a horizontal sweep. Then, there exists a Jordan curve $K_t$ such that

\begin{enumerate}
\item the Jordan curve theorem is true for $K_t$,
\item $int(H(t))$ is the bounded, connected open set, out of the two regions in which $K_t$ partitions $\mathbb{R}^2 - K_t$, 
\item $ext(H(t))$ is the unbounded, connected open set, out of the
two regions in which $K_t$ partitions $\mathbb{R}^2 - K_t$, and
\item $K_t$ is the disjoint union of (a) points of $J$ (which can be uncountable) and (b) a countable number of vertical line segments (corresponding to discontinuities of $\mathcal{U}_t$ and $\mathcal{L}_t$) and (c) the two segments $s_{a+}$ and $s_{b-}$.
\end{enumerate}
\end{lemma}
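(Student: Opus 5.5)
The plan is to reduce the statement to the region between two graphs over $t$, where everything can be checked directly, without invoking any general Jordan curve theorem beyond the rectilinear case. By Observations \ref{obs-bijection-eta}, \ref{obs-bijection-eta-prime} and \ref{obs-K-t-Jordan-curve}, $K_t$ is a Jordan curve. Part (4) then follows from the construction. $bd(\mathcal{U}_t)$ consists of points $(z,\mathcal{U}_t(z))\in J$ at continuity points, together with the countably many vertical segments $V_k$ (countable by the corollary on discontinuities). The same holds for $bd(\mathcal{L}_t)$, and the two closing segments are $s_{a+}$ and $s_{b-}$. Disjointness of these pieces is exactly the injectivity of $\psi_t$.

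For parts (1)–(3) I would describe $\mathbb{R}^2-K_t$ vertically. For $x$ strictly between the $x$-coordinates of $a$ and $b$, the line $x=\text{const}$ meets $K_t$ in one of two ways. If $x$ is a continuity point of both functions, it meets $K_t$ in exactly two points, $\mathcal{L}_t(x)<\mathcal{U}_t(x)$. If $x$ is a discontinuity, it meets $K_t$ in one or two vertical segments. Define $I$ as the set of points strictly between the lower and upper traces on such lines. Concretely, at a discontinuity $x$, $I$ contains the points above $\max(\mathcal{L}_t(x-),\mathcal{L}_t(x+))$ and below $\min(\mathcal{U}_t(x-),\mathcal{U}_t(x+))$. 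Define $E=\mathbb{R}^2-(K_t\cup I)$. I then check that $I=int(H(t))$. The key fact is that $s_p\subset H(t)$, so $\mathcal{U}_t(x)\ge\min(\mathcal{U}_t(x\pm))$. This holds because $\mathcal{U}_t(x)$ is the first point of $J$ above $t$ on $l_x$, while the one-sided limits are limits of points of $J$ on the lines $l_{x'}$ with $x'\to x$. These limit points lie in $J$ and on or above the height of $t$, so they are at least $\mathcal{U}_t(x)$. Hence $\min(\mathcal{U}_t(x\pm))\ge\mathcal{U}_t(x)$, and since $\mathcal{U}_t(x)$ lies in the vertical segment $V_k$ at $x$, the portion of $s_x$ outside $K_t$ is exactly the part below the lower endpoint of $V_k$.

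Next I show that $I$ and $E$ are open and connected. Openness of $I$ uses lower semicontinuity of the function $x\mapsto\min(\mathcal{U}_t(x-),\mathcal{U}_t(x+))$, which follows from Lemma \ref{lemma-left-limit-exists} and the existence of one-sided limits, together with upper semicontinuity of the corresponding $\max$ for $\mathcal{L}_t$. Since $\mathcal{L}_t<\mathcal{U}_t$ near every point of $int(t)$, $I$ is connected: every vertical fibre is an interval meeting $int(t)$, and $int(t)$ itself is connected. $E$ contains everything outside the vertical strip over $t$. Each point of $E$ inside the strip can be joined by a vertical segment avoiding $K_t$ to points above $bd(\mathcal{U}_t)$ or below $bd(\mathcal{L}_t)$, and from there to the unbounded outer region. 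So $E$ is connected and unbounded, and $I$ is bounded. Finally, every point of $K_t$ is a limit of points of $I$, approached vertically from inside, and also a limit of points of $E$, approached vertically from outside. For the segments $V_k$, I approach horizontally from the appropriate side, using the definition of the one-sided limits.

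I expect the main obstacle to be openness and boundary behaviour near the discontinuities $V_k$, where the discontinuity set may be dense. The lower-semicontinuity argument must be done carefully, and the closing segments $s_{a+}, s_{b-}$ need separate treatment at the endpoints $a,b$.
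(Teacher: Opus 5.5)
Your route is the paper's: vertical fibres over $int(t)$, one-sided limits for openness (which you package as semicontinuity of the envelopes), fibres through $int(t)$ for connectedness, and vertical approach at continuity points versus horizontal approach along the $V_k$ for the boundary.

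\textbf{The gap.} The step that does not go through as written is the identification $I=int(H(t))=\bigcup_p s_p-K_t$, which is what part (2) asserts. Your argument shows that limits of points of $J$ on nearby lines $l_{x'}$ are points of $J\cap l_x$ above $t$. That proves only $\mathcal{U}_t(x\pm)\ge\mathcal{U}_t(x)$, i.e.\ lower semicontinuity of $\mathcal{U}_t$. This is the opposite of the inequality $\mathcal{U}_t(x)\ge\min(\mathcal{U}_t(x-),\mathcal{U}_t(x+))$ that you announce as the key fact, and ``$s_p\subset H(t)$'' does not imply that inequality. Yet it is exactly what your claim ``$\mathcal{U}_t(x)$ lies in $V_k$'' requires. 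If $\mathcal{U}_t(x)<\min(\mathcal{U}_t(x-),\mathcal{U}_t(x+))$, the fibre of $I$ over $x$ would contain the point $\mathcal{U}_t(x)\in J$, which is not in $K_t$. Then $I$ would strictly contain $\bigcup_p s_p-K_t$.

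\textbf{The missing idea.} $J$ cannot reach the upper endpoint $P=(x,y_P)=\psi(\theta)$ of $s_x$ along $l_x$ alone. Let $\Gamma_\epsilon=\psi([\theta-\epsilon,\theta+\epsilon])$. Then $\Gamma_\epsilon\not\subset l_x$, because $J$ contains no vertical segment after the rotation by $\theta^*$. (Directly: otherwise $P$ would be an interior point of a segment of $J\cap l_x$, contradicting $s_x\cap J=\emptyset$.) Since the $\Gamma_\epsilon$ are nested, there is a fixed side, say the left, on which $\Gamma_\epsilon$ has points for every $\epsilon$.

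Now take $\delta>0$ smaller than the height of $P$ above $t$. Pick $\epsilon$ with $\Gamma_\epsilon\subset B(P,\delta)$, and a point of $\Gamma_\epsilon$ with abscissa $x-\eta$, $\eta>0$. By connectedness, $\Gamma_\epsilon$ meets every $l_{x'}$ with $x-\eta<x'<x$ inside $B(P,\delta)$, so $\mathcal{U}_t(x')<y_P+\delta$ for such $x'$. Hence $\mathcal{U}_t(x-)\le\mathcal{U}_t(x)$. Combined with your inequality, this gives $\mathcal{U}_t(x)=\min(\mathcal{U}_t(x-),\mathcal{U}_t(x+))$, and the same argument works for $\mathcal{L}_t$.

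The paper's own proof uses this fact silently: in case (ii) of its openness argument it lets the fibre run up to $m_{w'}$. With the fact in place, your argument works.

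\textbf{A smaller omission.} You should also spell out the openness of $E$. It needs the companion facts that $x\mapsto\max(\mathcal{U}_t(x-),\mathcal{U}_t(x+))$ is upper semicontinuous and $x\mapsto\min(\mathcal{L}_t(x-),\mathcal{L}_t(x+))$ is lower semicontinuous. It also needs the endpoint analysis at $a$ and $b$ that you already flagged.
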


\noindent {\bf Proof:} The fact that $K_t$ is a Jordan curve follows from Observation \ref{obs-K-t-Jordan-curve}.

We now prove that $int(K_t)$ is an open, connected and bounded set. 
We have assumed that all open segments reached from our initial
point have finite length (see Assumption $1$ above). Thus, both upper and lower functions
($\mathcal{U}_t$ and $\mathcal{L}_t$) lie in the bounding box $\mathcal{B}$ of the Jordan curve $J$, and hence the set $int(K_t)$ is bounded.
For any point $p \in int(K_t)$, the vertical line segment $v_p$ from $p$ to the horizontal line segment $t'$ lies in $int(K_t)$. Given two distinct points $p_1, p_2 \in int(K_t)$, we can go from $p_1$ to $p_2$ by first moving on line segment $v_{p_1}$ to the horizontal segment $t'$, then moving towards the $x$-coordinate of $p_2$ on the segment $t'$, and finally moving on line segment $v_{p_2}$ to reach the destination. This rectilinear path with two turns lies completely in $int(K_t)$. Hence, $int(K_t)$ is connected.

Consider any point $w \in int(K_t)$. Without loss of generality, assume that
$w$ lies above $t$. Let $w'$ be the perpendicular projection of $w$ on the segment $t$. By definition of $int(K_t)$, $w' \in int(t)$. There are two cases: (i) $w'$ is a point of continuity of $\mathcal{U}_t$, or (ii) $w'$ is a point of discontinuity
of $\mathcal{U}_t$. 

First consider case (i). Let $\epsilon_1 > 0$ be the distance of $w$ from the set $\{w', (w',\mathcal{U}_t(w'))\}$. Let $\epsilon_2 = \min( |w'-a|, |b-w'|)$, where
$a$ and $b$ are the left and right endpoints of $t$. 
Choose $\epsilon_3 = \frac{\min(\epsilon_1, \epsilon_2)}{2}$. Since $\mathcal{U}_t$ is continuous at $w'$, there exists a $\delta > 0$ such that for all $|x-w'| < \delta$ ($x \in int(t)$), $|\mathcal{U}_t(x) - \mathcal{U}_t(w')| < \frac{\epsilon_3}{4}$. Choose $\epsilon = \min(\frac{\epsilon_3}{8}, \frac{\delta}{2})$. Then, the open ball $B(w, \epsilon)$ lies completely in $int(K_t)$.

Now let us consider case (ii). Let $\epsilon_1 > 0$ be the distance of $w$ from the set $\{w', (w',m_{w'})\}$, where $m_{w'}$ is the minimum of $\mathcal{U}_t(w'-)$ and $\mathcal{U}_t(w'+)$). Let $\epsilon_2 = \min( |w'-a|, |b-w'|)$. 
Choose $\epsilon_3 = \frac{\min(\epsilon_1, \epsilon_2)}{2}$. Since the left-hand limit $\mathcal{U}_t(w'-)$ and the right-hand
limit $\mathcal{U}_t(w'+)$ exist at $w'$ and $m_{w'} = \min( \mathcal{U}_t(w'-), \mathcal{U}_t(w'+) )$, there exists a $\delta > 0$ such that for all $|x-w'| < \delta$ ($x \in int(t)$), $\mathcal{U}_t(x) - m_{w'} > -\frac{\epsilon_3}{4}$. Choose $\epsilon = \min(\frac{\epsilon_3}{8}, \frac{\delta}{2})$. Then, the open ball $B(w, \epsilon)$ lies completely in $int(K_t)$.

Since the point $w$ was arbitrary, we conclude that $int(K_t)$ is an open set.

We now prove that $ext(K_t)$ is an unbounded, open and connected set. Let $w$ be any point in $ext(K_t)$. If $l_w \cap K_t$ is empty, then the open segment $s_w$ at $w$ is equal to line $l_w$. If $l_w$ intersects the horizontal segment $t$, then $s_w$ is a half-line. An argument similar to that for $int(K_t)$ proves that $ext(K_t)$ is an open and connected set.

We finally show that $cl(int(K_t)) = int(K_t) \cup K_t$. First, note that $\mathbb{R}^2$ is the disjoint union of $int(K_t)$, $K_t$ and $ext(K_t)$. 

We now prove that for any point $p \in K_t$, every open ball with center at $p$ contains a point of $int(K_t)$ and
a point of $ext(K_t)$. First assume that $p$ does not belong to $s_{a+} \cup s_{b-}$. Let $p'$ be the perpendicular projection of $p$ on $t$. Note that $p' \in int(t)$.
There are two cases: (i) $p'$ is a point of continuity of both $\mathcal{U}_t$ and
$\mathcal{L}_t$ and (ii) $p' \in D(\mathcal{U}_t) \cup D(\mathcal{L}_t)$.

In case (i), $l_p \cap W$ consists of only two points: $(p', \mathcal{U}_t(p'))$ and $(p', \mathcal{L}_t(p'))$. Thus, any open ball with
center at $p$ contains a point of $int(K_t)$ just below $p$ and a point of $ext(K_t)$ just above $p$.

Now consider case (ii). Without loss of generality, assume that $p$ lies above
$t$ and $p' \in D(\mathcal{U}_t)$. Then, $p$ is a point on the vertical segment from $(p', \mathcal{U}_t(p'-))$ and $(p', \mathcal{U}_t(p'+))$.
Without loss of generality, assume that $\mathcal{U}_t(p'-) < \mathcal{U}_t(p'+)$. 

Take any $\epsilon > 0$. Then, since the right-hand
limit exists at point $p'$, there exists a $\delta > 0$ such that for all $p' < x < p'+\delta < b$, $|\mathcal{U}_t(x) - \mathcal{U}_t(p'+)| < \frac{\epsilon}{2}$. Take any point $r$ of continuity of $\mathcal{U}_t$ in the open interval $(x, x+\min(\frac{\epsilon}{2}, \delta))$. Then, the interior of the open segment $s_r$ contains a point of $B(p, \epsilon)$. Thus, $B(p,\epsilon) \cap int(K_t) \neq \phi$.
Applying the same argument to the left-hand limit, we conclude that $B(p, \epsilon) \cap ext(K_t) \neq \phi$.

Thus, $K_t \subset bd(int(K_t))$ and $K_t \subset bd(ext(K_t))$. Any point $p \notin K_t$ belongs to one of the two sets $int(K_t)$ or $ext(K_t)$. Without loss of generality assume that $p \in int(K_t)$. There is an open ball $B(p, \epsilon) \subset int(K_t)$ with center at $p$, for some $\epsilon > 0$. Thus, $p \notin K_t \cup ext(K_t)$. Hence, $K_t = bd(int(K_t)) = bd(ext(K_t))$.

This establishes the Jordan curve theorem for $K_t$.
$\blacksquare$\\

We now prove a property of vertical segments of $K_t$, which we will use in the next section:

\begin{lemma}
\label{lem-vertical-segment-crossing}
Let $w$ be a vertical segment of $K_t$ such that the vertical
line containing $w$ intersects $int(t)$ at a point $q$. Let $a'$ and $b'$ be the upper and lower endpoints of $w$ respectively. Without loss of generality, assume that $a' = \mathcal{U}_t(q+)$ 
and $b' = \mathcal{U}_t(q-)$ (other cases are symmetric). 

Then, for any point $w' \in cl(w) - \{ a' \}$, there exists an $\epsilon > 0$ ($\epsilon$ depends on $w$) such that every
point in the set $\{ \psi(x) ~| ~x \in \mathbb{S}^1 ~and ~\psi^{-1}(w') - \epsilon \leq x \leq \psi^{-1}(w') + \epsilon\}$ lies in 
the closed half-plane to the left of $l_q$.
\end{lemma}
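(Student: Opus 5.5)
The plan is to show that, immediately to the right of $l_q$, a whole open box reaching up to just below $a'$ is swept by open segments of $H(t)$. That box is therefore disjoint from $J$, and continuity of $\psi$ then forces every point of $J$ near $w'$ to lie in the closed half-plane to the left of $l_q$. We read the statement for $w' \in cl(w) \cap J$, since otherwise $\psi^{-1}(w')$ is undefined. The $\epsilon$ we produce depends on $w'$ through the gap between $w'$ and $a'$.

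\textbf{Step 1 (the box).} Let $y_t$ be the $y$-coordinate of $t$ and $y(w')$ that of $w'$. Since $w' \in cl(w) - \{a'\}$ and $w$ runs from $b'$ up to $a'$, we have $y(w') < y(a') = \mathcal{U}_t(q+)$; set $h = \mathcal{U}_t(q+) - y(w') > 0$. Also $y(w') \geq \mathcal{U}_t(q-) > y_t$: for $x \in int(t)$ the segment $s_x$ contains $x$, so $\mathcal{U}_t(x) > y_t$, and the left-hand limit at $q$ cannot fall to $y_t$ because otherwise $(q,y_t) \in int(t)$ would be a limit of points of $J$, hence in $J$ since $J$ is closed. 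By Lemma \ref{lemma-left-limit-exists} the right-hand limit $\mathcal{U}_t(q+)$ exists. So there is $\delta>0$ with $q+\delta$ still in $int(t)$ and $\mathcal{U}_t(x) > \mathcal{U}_t(q+) - h/2$ for all $q < x < q+\delta$.

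For each such $x$, the open segment $s_x$ contains the vertical interval from $(x,y_t)$ up to height $\mathcal{U}_t(x)$, and $s_x \cap J = \phi$. Hence the open box $Q = (q, q+\delta) \times (y_t,\, y(w') + h/2)$ satisfies $Q \cap J = \phi$.

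\textbf{Step 2 (a half-ball free of $J$).} Choose $r$ with $0 < r < \min(\delta,\, h/2,\, y(w') - y_t)$. Then every point of $B(w', r)$ lying strictly to the right of $l_q$ belongs to $Q$, and so is not in $J$. Consequently $B(w',r) \cap J$ lies in the closed half-plane to the left of $l_q$.

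\textbf{Step 3 (continuity of $\psi$).} Since $\psi$ is continuous at $\psi^{-1}(w')$, pick $\epsilon > 0$ such that every $x \in \mathbb{S}^1$ within arc distance $\epsilon$ of $\psi^{-1}(w')$ has $\psi(x) \in B(w', r)$. Each such $\psi(x)$ lies in $J \cap B(w',r)$, hence in the closed left half-plane of $l_q$ by Step 2. This gives the lemma.

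\textbf{Expected obstacle.} The only delicate point is Step 1, and specifically the lower edge of the box. One must make sure $w'$ sits strictly above $t$, which is why the bound $\mathcal{U}_t(q-) > y_t$ is needed, and that the slices $s_x$ genuinely fill $Q$ from $t$ upward. The exclusion of $a'$ is exactly what makes $h > 0$. The symmetric cases (the roles of $q+$ and $q-$ exchanged, or discontinuities of $\mathcal{L}_t$) follow by reflection.
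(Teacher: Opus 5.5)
Your proof is correct, and it uses the same key fact as the paper's: the right-hand limit $\mathcal{U}_t(q+) = y(a')$ strictly exceeds $y(w')$, so no point of $J$ near $w'$ can lie strictly to the right of $l_q$, and continuity of $\psi$ at $\psi^{-1}(w')$ then gives $\epsilon$. The paper argues by contradiction: a sequence $\psi(x_i) \to w'$ from the right would force $\mathcal{U}_t(q+) \le y(w')$. You argue directly with an explicit $J$-free box, and in doing so you state conditions the paper leaves implicit: $w' \in J$, $w'$ lies strictly above $t$, and $\epsilon$ depends on $w'$.
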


\noindent {\bf Proof:} Suppose, for the sake of contradiction, there is no such $\epsilon > 0$.
For any $\epsilon > 0$, let $U_{w', \epsilon}$ denote the set $\{ \psi(x) ~| ~x \in \mathbb{S}^1 ~and ~\psi^{-1}(w') - \epsilon \leq x \leq \psi^{-1}(w') + \epsilon\}$.
Take $\epsilon_1 = \frac{1}{2}$. Then, there exists $x_1 \in U_{w', \epsilon_1}$ such that $\psi(x_1)$ lies strictly to the right of $l_q$.
Take $\epsilon_2 = \frac{|x_1 - \psi^{-1}(w')|}{2}$. Again, there exists $x_2 \in U_{w', \epsilon_1}$ such that $\psi(x_2)$ lies strictly to the right of $l_q$.
Repeating this construction, we get a sequence of points $x_1, x_2, \ldots$ in $\mathbb{S}^1$ such that $\lim_{i \rightarrow \infty} |x_i - \psi^{-1}(w')| = 0$
and for each $i \in \mathbb{N}$, $\psi(x_i)$ lies strictly to the right of $l_q$.

Since $\psi$ is a continuous function, we conclude that $\lim_{i \rightarrow \infty} |\psi(x_i) - w'| = 0$. Since for each $i \in \mathbb{N}$, 
$\psi(x_i)$ lies strictly to the right of $l_q$, we can find a subsequence $x_{j_1}, x_{j_2}, \ldots$ ($j_1 < j_2 < \ldots$) of this sequence
such that for each $k \in \mathbb{N}$, $x$-coordinate of $\psi(x_{j_{k+1}})$ is strictly less than the $x$-coordinate of $\psi(x_{j_{k}})$.
This subsequence can be constructed as follows: take $x_{j_1} = x_1$. Suppose we have already constructed the first $k$ terms of this sequence.
Since $\psi(x_1), \psi(x_2), \ldots$ converges to $w'$, there exist infinitely many points of this sequence which lie to the left of the
vertical line through $x_{j_k}$. We pick any such point as $x_{j_{k+1}}$ ($x_{j_{k+1}}$ must be different from all points picked till now) and repeat.

Let $q_{1}, q_{2}, \ldots$ be the points of $t$ with the same $x$-coordinate as $\psi(x_{j_1}), \psi(x_{j_2}), \ldots$, respectively.
Thus, for each $k \in \mathbb{N}$, the upper endpoint of $s_{q_k}$ has $y$-coordinate less than or equal to the $y$-coordinate of $\psi(x_{j_k})$. 
Let $m$ be the midpoint of line segment joining $w'$ and $a'$. Then, there exists a natural number $N$, such that for all $k > N$, the upper endpoint of $s_{q_k}$ has $y$-coordinate strictly less than $m$. Since $s_{q_k}$ are finite segments and hence lie inside the bounding box $\mathcal{B}$, there exists a convergent
subsequence of $s_{q_1}, s_{q_2}, \ldots$ which converges to an open segment $s^*$ whose upper endpoint does not have $y$-coordinate higher than $m$.
Thus, $\mathcal{U}_t(q+)$ has value less than or equal to the $y$-coordinate of $m$. This contradicts our assumption that $\mathcal{U}_t(q+) = a'$.
$\blacksquare$\\

The structure of $K_t$ is central to our approach - all sweepline algorithms
considered in this paper sweep a region whose boundary is a Jordan curve
with similar structure to $K_t$ (the boundary for a single horizontal sweep).
Before proceeding further, we define a subclass of Jordan curves, which 
we call {\it piecewise-vertical Jordan curves}. All sweepline algorithms considered by us sweep a region whose boundary is a piecewise-vertical Jordan curve.

\begin{definition}
{\bf Piecewise-vertical Jordan curves.} A piecewise-vertical Jordan curve $K$, with respect to a given Jordan curve $J$, is a Jordan curve which is the disjoint union of (a) a subset $\mathcal{P}(K)$ of points of $J$ and (b) a countable set $\mathcal{V}(K)$ of vertical line segments. Further, for each
vertical segment $s \in \mathcal{V}$, there exists a horizontal line segment $u_s$, $int(u_s) \subset \mathbb{R}^2 - J$ such that
$s$ is a vertical segment of the boundary $K_{u_s}$ of the horizontal sweep $H(u_s)$.
\end{definition}

{\it Note.} A piecewise-vertical Jordan curve $K$ can contain vertical
segments belonging to the Jordan curve $J$. In the following, whenever we
refer to the vertical segment of such a curve, we mean a vertical segment in
$\mathcal{V}(K)$.

We now define an equivalence relation $\sim$ on the set of all horizontal sweeps.

\begin{definition}
{\bf Relation on horizontal sweeps.} Let $H(t_1)$ and $H(t_2)$ be two horizontal sweeps. Then, $H(t_1) \sim H(t_2)$ if and only if they have the same set of open segments. In other words, for every open segment $s_p$, $p \in t_1$, there exists a point $q \in t_2$ such that $s_q = s_p$ and vice versa.
\end{definition}

One can verify that $\sim$ is an equivalence relation on the set of all horizontal sweeps.
Let $\mathcal{H}$ be the set of all equivalence classes of horizontal sweeps, according to
the above relation $\sim$.

%% file: extension-horizontal-sweep-001.tex
\subsection{Extension of a horizontal sweep}
\label{sec-extension-horizontal-sweep}

We next define the extension of a horizontal sweep, the building block of
the sweepline algorithm for Jordan curves.

\begin{definition}
{\bf Extension of a horizontal sweep.} Let $H(t)$ be a horizontal sweep. Let $K_t$ be the piecewise-vertical Jordan curve separating $int(H(t))$ and $ext(H(t))$. Let $u_1$ be any vertical segment of $K_t$. Let $p$ be any point of $int(u_1)$. Let $t'$ be a horizontal segment such that (i) one endpoint of $t'$ is $p$ and (ii) $t' - \{p\} \subset ext(K_t)$. 
Then, we say that the horizontal sweep $H(t)$ can be extended by the horizontal sweep $H(t')$.
\end{definition}

Let $E=(H(t), H(t')$ denote the extension of $H(t)$ by $H(t')$.
Without loss of generality, assume that $u_1$ lies above line segment $t$ and $p$ is the left endpoint of $t'$. Further, let $q \in t$ be the point such that $u_1 \subset l_q$. Let $\mathcal{U}_t, \mathcal{U}_{t'}$ denote the upper functions of $H(t)$ and
$H(t')$, respectively. Let $\mathcal{L}_t, \mathcal{L}_{t'}$ denote the lower functions of $H(t)$ and $H(t')$, respectively.

For a point $z \in \mathbb{R}^2 - J$, let $h_z$ be the
horizontal line through point $z$. 
Let $s_{z+}$ be the limiting segment $\lim_{z' \rightarrow z+, ~z' \in h_z} s_{z'}$ and $s_{z-}$ be the limiting segment $\lim_{z' \rightarrow z-, ~z' \in h_z} s_{z'}$.

Let $a_1$ and $b_1$ be the upper and lower endpoints of $u_1$, respectively. We now prove that the upper endpoint of $u_1$ is the upper endpoint of $s_{q-}$ (i.e., $a_1=\mathcal{U}_t(q-)$) and the lower endpoint of $u_1$ is the upper endpoint of $s_{q+}$ (i.e., $b_1=\mathcal{U}_t(q+)$). Suppose the converse is true i.e.,
$a_1 = \mathcal{U}_t(q+)$ and $b_1 = \mathcal{U}_t(q-)$. Then,
for any $\epsilon > 0$, there exists a point $z \in t$, arbitrarily close to $q$ and to the right of $z$, such that $z$ is a point of continuity of $\mathcal{U}_t$ and $|\mathcal{U}_t(z) - \mathcal{U}_t(q)| < \epsilon$. For sufficiently small $\epsilon > 0$, $s_{z}$ intersects $int(t')$ at a point $y$. Thus, $y \in int(H(t))$, a contradiction.

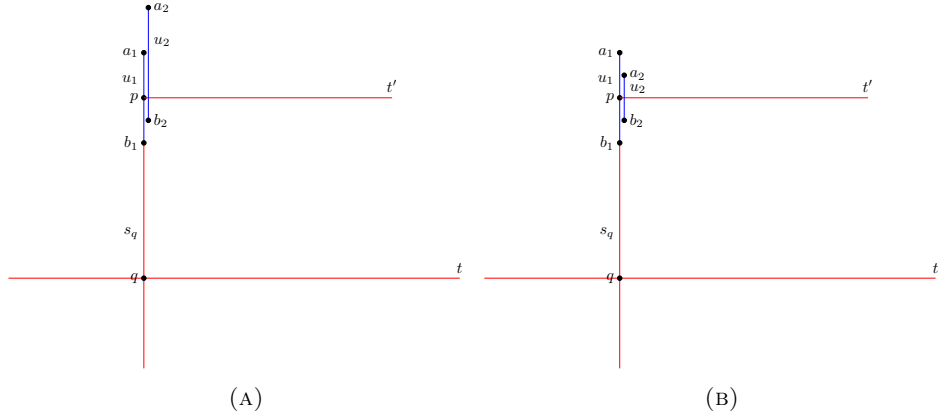
\begin{figure}
\begin{subfigure}{0.5\textwidth}
%    \centering
\scalebox{0.6}{%
\begin{tikzpicture}
    \draw[red] (-3, 0) -- (7, 0) node[black][anchor=south]{$t$};
    \draw[red] (0, -2) -- (0, 3) node[black][pos=0.6][anchor=east]{$s_q$};
    \draw[blue] (0, 3) -- (0, 5) node[black][pos=0.7][anchor=east]{$u_1$};
    \filldraw[black] (0, 5) circle (0.05) node[black][anchor=east]{$a_1$};
    \filldraw[black] (0, 3) circle (0.05) node [black][anchor=east]{$b_1$};
    \filldraw[black] (0, 0) circle (0.05) node [black][anchor=east]{$q$};
    \filldraw[black] (0, 4) circle (0.05) node[black][anchor=east]{$p$};
    \draw[red] (0, 4) -- (5.5,4) node[black][anchor=south]{$t'$};
    \draw[blue] (0.1, 3.5) -- (0.1, 6) node[black][pos=0.7][anchor=west]{$u_2$};
    \filldraw[black] (0.1, 3.5) circle (0.05) node[black][anchor=west]{$b_2$};
    \filldraw[black] (0.1, 6) circle (0.05) node[black][anchor=west]{$a_2$};
\end{tikzpicture}
}
\caption{}
\end{subfigure}%
\begin{subfigure}{0.5\textwidth}
 %   \centering
\scalebox{0.6}{%
\begin{tikzpicture}
    \draw[red] (-3, 0) -- (7, 0) node[black][anchor=south]{$t$};
    \draw[red] (0, -2) -- (0, 3) node[black][pos=0.6][anchor=east]{$s_q$};
    \draw[blue] (0, 3) -- (0, 5) node[black][pos=0.7][anchor=east]{$u_1$};
    \filldraw[black] (0, 5) circle (0.05) node[black][anchor=east]{$a_1$};
    \filldraw[black] (0, 3) circle (0.05) node [black][anchor=east]{$b_1$};
    \filldraw[black] (0, 0) circle (0.05) node [black][anchor=east]{$q$};
    \filldraw[black] (0, 4) circle (0.05) node[black][anchor=east]{$p$};
    \draw[red] (0, 4) -- (5.5,4) node[black][anchor=south]{$t'$};
    \draw[blue] (0.1, 3.5) -- (0.1, 4.5) node[black][pos=0.7][anchor=west]{$u_2$};
    \filldraw[black] (0.1, 3.5) circle (0.05) node[black][anchor=west]{$b_2$};
    \filldraw[black] (0.1, 4.5) circle (0.05) node[black][anchor=west]{$a_2$};
\end{tikzpicture}
}
\caption{}
\end{subfigure}
\caption{Constructing $K_{t,t'}$. (a) The case where $a_1 \leq a_2 < b_1 < b_2$, $s_{p+} = \overline{a_2b_2}$, $s_{p-} = s_{q+}$ and $s_p = \overline{a_1b_2}$. (b) The case where $a_1 \leq a_2 \leq b_2 \leq b_1$, $s_{p-} = s_{q+}$ and $s_p = s_{p+} = \overline{a_2b_2}$. }
\label{fig-constructing-K-t-t'}
\end{figure}

Consider the left vertical boundary $s_{p+}$ of $H(t')$. Let $a_2$
and $b_2$ be the upper and lower endpoints of $s_{p+}$.

We next prove that $b_1 \leq b_2 < a_2$. Suppose, for the sake of
contradiction, that $b_2 < b_1$. Since $b_1 = \mathcal{U}_t(q+)$, 
there exists a point $z \in t$, arbitrarily close to $q$ and
to the right of $z$, such that $z$ is a point of continuity of $\mathcal{U}_t$ and $|\mathcal{U}_t(z) - \mathcal{U}_t(q)| < \epsilon$. For sufficiently small $\epsilon > 0$, $b_2 \in int(s_{z})$. We arrive at a contradiction, since $b_2 \in J$.

Note that the open segment $s_p$ at point $p$ is a subset of $u_2=s_{p+}$. Further, $s_p$ is also a subset of $u_1$, since 
$a_1, b_1 \in J$.

We next show that $s_{p-} = s_{q-}$. If we take any point $z \in t$,
arbitrarily close to $q$ and to the left of $q$, then the line $l_z$ intersects $h_p$ at a point $z'$. The closed line segment 
$\overline{zz'}$ has no point of $J$, and hence $s_{z} = s_{z'}$.
Since $z'$ can be arbitrarily close to $q$, the result follows.

Further, since $s_{p-} \cap J = \phi$ and $s_{p+} \cap J = \phi$
(by the definition of open segment), there is no point of $J$ in the intersection $I(t,t') = s_{q-} \cap s_{p+} = s_{p-} \cap s_{p+}$.

We further note that $s_{p-} \cap s_{p+} = s_p$ and hence $I(t,t')=s_p$.

{\bf Construction of $K_{t, t'}$.} We remove $s_p$ from $K_{t}$ and replace it
with $K_{t'} - s_p$, to obtain a new curve $K_{t,t'}$ (see Figure \ref{fig-constructing-K-t-t'}). We define
$int(E) = int(H(t)) \cup int(H(t')) \cup \{ s_p \}$ and $ext(E) = \mathbb{R}^2 - \{ K_{t,t'} \cup int(E) \}$.

We now formalize the construction of $K_{t,t'}$ in terms of maps. 
Let $\phi_{t}: \mathbb{S}^1 \rightarrow \mathbb{R}^2$ and $\phi_{t'}: \mathbb{S}^1 \rightarrow \mathbb{R}^2$ be continuous and one-to-one maps such that their
images are the Jordan curves $K_{t}$ and $K_{t'}$, respectively.

Let $I_{p,t} = \{ \phi_t^{-1}(z) ~| ~z \in cl(s_p) \}$. $I_{p,t}$ is a closed interval
of $\mathbb{S}^1$, with length strictly less than $2 \pi$.

Let $I_{p,t'} = \{ \phi_{t'}^{-1}(z) ~| ~z \in cl(s_p) \}$. $I_{p,t'}$ is a closed interval of $\mathbb{S}^1$, with length strictly less than $2 \pi$.

Let $a'$ and $b'$ be the lower and upper endpoints of $cl(s_p)$.
Let $\omega: I_{p,t} \rightarrow I_{p,t'}$ be a continuous bijection between these two closed intervals, such that $\omega(\phi_t^{-1}(a')) = \phi_{t'}^{-1}(a')$ and
$\omega(\phi_t^{-1}(b')) = \phi_{t'}^{-1}(b')$.

Define $\phi_{t,t'}: \mathbb{S}^1 \rightarrow \mathbb{R}^2$ as follows:

\begin{enumerate}
\item for $z \in I_{p,t}$, $\phi_{t,t'}(z) = \phi_{t'}(\omega(z))$, and
\item for $z \notin I_{p_t}$, $\phi_{t,t'}(z) = \phi_t(z)$.
\end{enumerate}

\begin{observation}
\label{obs-K-t-t'}
$K_{t,t'}$ is a Jordan curve.
\end{observation}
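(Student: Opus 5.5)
To prove Observation \ref{obs-K-t-t'}, I would show two things: that $\phi_{t,t'}$ is well defined and continuous, and that it is one-to-one. Compactness of $\mathbb{S}^1$ then makes its image $K_{t,t'}$ a Jordan curve.

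\textbf{Continuity.} On the open set $\mathbb{S}^1 - I_{p,t}$ the map agrees with $\phi_t$, which is continuous. On the closed arc $I_{p,t}$ it is the composition $\phi_{t'} \circ \omega$ of continuous maps. The two pieces meet only at the two endpoints $\phi_t^{-1}(a')$ and $\phi_t^{-1}(b')$ of $I_{p,t}$. By the choice of $\omega$, the second formula sends these endpoints to $\phi_{t'}(\phi_{t'}^{-1}(a')) = a'$ and $b'$. These are exactly $\phi_t$ of the same endpoints, so the two formulas agree there. The pasting lemma for the two closed arcs $I_{p,t}$ and $cl(\mathbb{S}^1 - I_{p,t})$ then gives continuity.

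One point must be made precise first. In the formula, $I_{p,t}$ has to be read as the arc complementary to the one traced by $cl(s_p)$. Otherwise $\phi_{t'}\circ\omega$ would only retrace $cl(s_p)$, and $s_p$ would not actually be replaced by $K_{t'} - s_p$. So I would fix $\omega$ as a homeomorphism from the arc $\phi_t^{-1}(cl(s_p))$ onto the complementary arc $\mathbb{S}^1 - \phi_{t'}^{-1}(s_p)$, matching endpoints. Both arcs are proper closed arcs because $cl(s_p)$ is a segment properly contained in each Jordan curve. Such a homeomorphism exists, for example by orienting both arcs and mapping them linearly in arc length.

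\textbf{Injectivity.} This is the main obstacle. The image of $\phi_{t,t'}$ is $(K_t - s_p) \cup (K_{t'} - s_p)$, and injectivity on each piece is inherited from $\phi_t$, $\phi_{t'}$ and the bijectivity of $\omega$. What remains is to show
\[
(K_t - s_p)\cap (K_{t'} - s_p) = \{a',b'\}.
\]
I would argue this in three steps.

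1. Since $t' - \{p\} \subset ext(K_t)$, every open segment $s_z$ with $z\in int(t')$ lies in $ext(H(t))$. These segments are disjoint from the segments $s_y$, $y\in t$, because they are distinct maximal segments on the same vertical line, or on different lines. Hence $int(H(t'))\subset ext(K_t)$, and taking closures, $K_{t'}\subset cl(ext(K_t)) = ext(K_t)\cup K_t$ by Lemma \ref{JCT-horizontal-sweep}.

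2. A point of $K_t\cap K_{t'}$ is either a point of $J$ or lies on a vertical segment of one of the two curves. By Lemma \ref{JCT-horizontal-sweep}, a point of $K_t$ is a limit of points of $int(H(t))$. If it also lay in $K_{t'}$ away from $cl(s_p)$, its vertical line would carry segments from both sweeps. I would use the $x$-monotonicity of $bd(\mathcal U)$ and $bd(\mathcal L)$, together with the facts established above ($s_{p-}=s_{q-}$, $b_1\le b_2<a_2$, $s_p = s_{p-}\cap s_{p+}$), to show that the only common vertical line is $l_q = l_p$.

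3. On $l_p$, the part of $K_t$ is $u_1$ together with the $\mathcal L_t$ side, and the part of $K_{t'}$ is $s_{p+}$. Their intersection outside $s_p$ is exactly the endpoints $a',b'$. For points of $J$ off $l_p$, I would use Lemma \ref{lem-vertical-segment-crossing}: near the vertical segment $u_1$, the curve $J$ stays in the closed half-plane left of $l_q$, while $K_{t'}$ lies in the closed half-plane to the right. Away from $l_p$, the projections of $t$ and $t'$ to the $x$-axis overlap only in an interval to the right of $q$; there, $K_{t'}$ lies strictly above the upper function of $H(t)$, by the disjointness of the swept open segments.

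Combining the three steps gives injectivity, and so $\phi_{t,t'}$ is a continuous bijection of $\mathbb{S}^1$ onto $K_{t,t'}$. Since $\mathbb{S}^1$ is compact and $\mathbb{R}^2$ is Hausdorff, $K_{t,t'}$ is a Jordan curve.
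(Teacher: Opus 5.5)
Your continuity argument is the paper's: you paste at the two endpoints of $I_{p,t}$. Your repair of $\omega$ is correct. It has to carry $\phi_t^{-1}(cl(s_p))$ onto the complementary arc $\mathbb{S}^1-\phi_{t'}^{-1}(s_p)$, because with $\omega: I_{p,t}\to I_{p,t'}$ as written, $\phi_{t'}\circ\omega$ only retraces $cl(s_p)$. Reducing injectivity to $(K_t-s_p)\cap(K_{t'}-s_p)=\{a',b'\}$ is also right, and your Step 1 keeps $K_{t'}$ out of $int(H(t))$.

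The gap is that Steps 2 and 3 do not prove that identity. Disjointness of the swept open segments gives only $\mathcal{L}_{t'}(x)\ge\mathcal{U}_t(x)$ on a common line $l_x$ with $x\ne q$, since two disjoint open segments of one line may share an endpoint. Your ``strictly above'' is exactly what needs proof. The configuration you must exclude is a point $y\in J$ that is at once the upper endpoint of $s_x$ (for $x\in t$) and the lower endpoint of the $H(t')$-segment on $l_x$. More generally, you must exclude a vertical segment of $K_t$ (a jump of $\mathcal{U}_t$, or $s_{b-}$) sharing a point with a jump segment of $\mathcal{L}_{t'}$ on the same line.

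Nothing local rules this out: $J$ could cross $l_x$ transversally at $y$, with $H(t)$ just below and $H(t')$ just above. Neither $x$-monotonicity nor the facts $s_{p-}=s_{q-}$ and $b_1\le b_2<a_2$ addresses it, since those facts concern only $l_q$. Your use of Lemma~\ref{lem-vertical-segment-crossing} also has the side reversed. Since $a_1=\mathcal{U}_t(q-)$, $H(t)$ sweeps the left side of $u_1$. So near $cl(u_1)-\{a_1\}$ the curve $J$ lies in the closed \emph{right} half-plane, the same side as $K_{t'}$, and no separation results.

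What is missing is a global argument that uses that $J$ is a Jordan curve together with the assumed JCT for rectangles; this is the paper's proof. Suppose $y$ is a common point on $l_{x_1}$ with $x_1\in t-\{q\}$. Put $x_2=l_{x_1}\cap t'$, and let $R$ be the rectangle with corners $p,q,x_1,x_2$. For a continuity point $x'$ of $\mathcal{U}_t$ strictly between $q$ and $x_1$, the point $\mathcal{U}_t(x')\in J$ lies inside $R$, while the upper endpoint of $s_{x_2}$ lies outside. Each of the two arcs of $J$ joining these two points must cross $\partial R$, which gives two distinct crossing points.

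However, $int(t)$, $int(t')$, $s_q$ and $s_{x_2}$ contain no point of $J$. Lemma~\ref{lem-vertical-segment-crossing}, applied to $u_1$ and to the vertical segments of $K_t$ and $K_{t'}$ on $l_{x_1}$ through $y$, forbids any crossing except at $y$ itself. That contradiction is what rules out extra common points. The monotone structure of the two boundaries cannot replace it.
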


\begin{figure}
\begin{subfigure}{0.5\textwidth}
\scalebox{0.6}{%
\begin{tikzpicture}
    \draw[red] (-3, 0) -- (7, 0) node[black][anchor=south]{$t$};
    \draw[red] (0, -2) -- (0, 3) node[black][pos=0.6][anchor=east]{$s_q$};
    \draw[blue] (0, 3) -- (0, 5) node[black][pos=0.7][anchor=east]{$u_1$};
    \filldraw[black] (0, 5) circle (0.05) node[black][anchor=east]{$a_1$};
    \filldraw[black] (0, 3) circle (0.05) node [black][anchor=east]{$b_1$};
    \filldraw[black] (0, 0) circle (0.05) node [black][anchor=east]{$q$};
    \filldraw[black] (0, 4) circle (0.05) node[black][anchor=east]{$p$};
    \draw[red] (0, 4) -- (5.5,4) node[black][anchor=south]{$t'$};
    \filldraw[black] (3, 0) circle (0.05) node[black][above, left]{$x_1$};
    \filldraw[black] (3, 4) circle (0.05) node[black][above, left]{$x_2$};
    \draw[red] (3, 5) -- (3,3) node[black][pos=0.3][left]{$s_{x_2}$};
    \filldraw[black] (3, 5) circle (0.05) node[black][above, left]{$x''$};
    \draw[red] (3, -1) -- (3,1) node[black][pos=0.2][left]{$s_{x_1}$};
    \draw[blue] (3, 3) -- (3, 2) node[black][pos=0.7][anchor=east]{$u'_1$};
    \draw[blue] (3, 1) -- (3, 2) node[black][pos=0.7][anchor=east]{$u'_2$};
    \filldraw[black] (3, 3) circle (0.05) node[black][above, right]{$a'_1$};
    \filldraw[black] (3, 1) circle (0.05) node[black][above, right]{$a'_2$};
    \filldraw[black] (3, 2) circle (0.05) node[black][above, right]{$y$};
    \filldraw[black] (2.5, 1) circle (0.05) node[black][above, left]{$x'$};
    
\end{tikzpicture}
}
\caption{{\bf Case I}}
\end{subfigure}
\begin{subfigure}{0.5\textwidth}
\scalebox{0.6}{%
\begin{tikzpicture}
    \draw[red] (-3, 0) -- (7, 0) node[black][anchor=south]{$t$};
    \draw[red] (0, -2) -- (0, 3) node[black][pos=0.6][anchor=east]{$s_q$};
    \draw[blue] (0, 3) -- (0, 5) node[black][pos=0.7][anchor=east]{$u_1$};
    \filldraw[black] (0, 5) circle (0.05) node[black][anchor=east]{$a_1$};
    \filldraw[black] (0, 3) circle (0.05) node [black][anchor=east]{$b_1$};
    \filldraw[black] (0, 0) circle (0.05) node [black][anchor=east]{$q$};
    \filldraw[black] (0, 4) circle (0.05) node[black][anchor=east]{$p$};
    \draw[red] (0, 4) -- (5.5,4) node[black][anchor=south]{$t'$};
    \filldraw[black] (3, 0) circle (0.05) node[black][above, left]{$x_1$};
    \filldraw[black] (3, 4) circle (0.05) node[black][above, left]{$x_2$};
    \draw[red] (3, 5) -- (3,3) node[black][pos=0.3][left]{$s_{x_2}$};
    \filldraw[black] (3, 5) circle (0.05) node[black][above, left]{$x''$};
    \draw[red] (3, -1) -- (3,1) node[black][pos=0.2][left]{$s_{x_1}$};
    \draw[blue] (3, 3) -- (3, 1.5) node[black][pos=0.3][anchor=east]{$u'_1$};
    \draw[blue] (3, 1) -- (3, 2.5) node[black][pos=0.3][anchor=east]{$u'_2$};
    \filldraw[black] (3, 3) circle (0.05) node[black][above, right]{$a'_1$};
    \filldraw[black] (3, 1) circle (0.05) node[black][above, right]{$a'_2$};
    \filldraw[black] (3, 1.5) circle (0.05) node[black][above, right]{$b'_1$};
    \filldraw[black] (3, 2.5) circle (0.05) node[black][above, right]{$b'_2$};
    \filldraw[black] (3, 2) circle (0.05) node[black][above, right]{$y$};
    \filldraw[black] (2.5, 1) circle (0.05) node[black][above, left]{$x'$};
    
\end{tikzpicture}
}
\caption{{\bf Case II}}
\end{subfigure}
\caption{$\phi_{t,t'}$ is one-to-one.}
\label{fig-proof-one-to-one}
\end{figure}
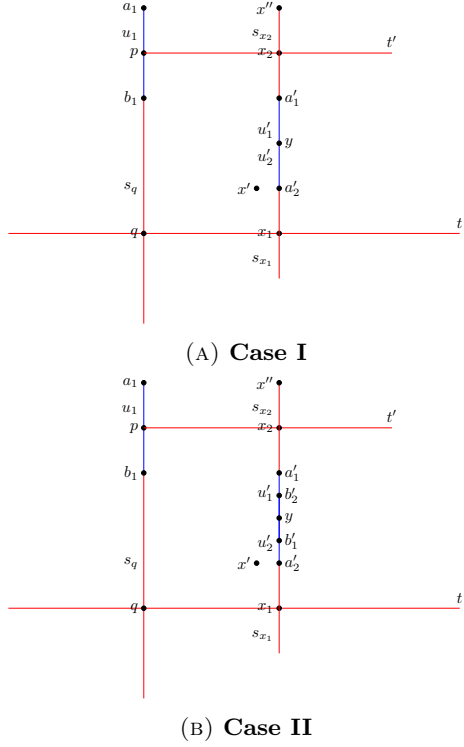

\noindent {\bf Proof:} We first prove that $\phi_{t,t'}$ is a continuous function. Let $y_1 = \phi_t^{-1}(a')$ and $y_2 = \phi_t^{-1}(b')$ be the two endpoints of $I_{p,t}$. Since
$\phi_t$, $\phi_{t'}$ and $\omega$ are continuous functions, we have to prove continuity of $\phi_{t,t'}$ at only these two points. Note that
$\lim_{z \rightarrow y_1, ~z \in I_{p,t}} \phi_{t,t'}(z) = \lim_{z \rightarrow y_1, ~z \in I_{p,t}} \phi_{t'}(\omega(z)) = \phi_{t'}(\omega(y_1)) = a'$, by continuity
of $\phi_{t'}$ and $\omega$. By a similar argument, $\lim_{z \rightarrow y_1, ~z \notin I_{p,t}} \phi_{t,t'}(z) = \phi_t(y_1) = a'$. Thus, $\phi_{t,t'}$ is continuous at $y_1$ and by a similar argument, at $y_2$. This concludes the proof of
continuity $\phi_{t,t'}$.

We now prove that $\phi_{t,t'}$ is one-to-one. Suppose, for the sake of
contradiction, that $\phi_{t,t'}$ is not one-to-one. Then, there exist
two distinct points $z_1, z_2 \in \mathbb{S}^1$ such that $\phi_{t,t'}(z_1) = \phi_{t,t'}(z_2)$. Since $\phi_t$, $\phi_{t'}$ and $\omega$ are one-to-one
functions, this is possible if and only if $z_1 \notin I_{p,t}$ and $z_2 \in I_{p,t}$
or vice versa. Without loss of generality assume that $z_1 \notin I_{p,t}$ and
$z_2 \in I_{p,t}$. Thus, $\phi_{t,t'}(z_1) = \phi_t(z_1)$ and $\phi_{t,t'}(z_2) = \phi_{t'}(\omega(z_2))$.

Note that any point of $bd(H(t))$ belongs to $cl(s_{w+}) \cup cl(s_{w-})$ for some point $w \in cl(t)$.
Similarly, any point of $bd(H(t'))$ belongs to $cl(s_{w+}) \cup cl(s_{w-})$ for some point $w \in cl(t')$.

Let $y = \phi_{t,t'}(z_1) = \phi_{t,t'}(z_2)$ (see Figure \ref{fig-proof-one-to-one}). Let $x_1$ be the point of $t - \{ q \}$ such that $y \in cl(s_{x_1+}) \cup cl(s_{x_1-})$. Let $x_2$ be the point of $t' - \{ p \}$ such that $y \in cl(s_{x_2+}) \cup cl(s_{x_2-})$.
Let $u'_1$ be the vertical segment of $bd(H(t'))$ below $x_2$. Let $u'_2$ be the vertical segment of $bd(H(t))$ above $x_1$. Note that
$u'_1$ and $u'_2$ may consist of a single point.

Let $R$ denote the rectangle formed by the four points $p$, $q$, $x_1$ and $x_2$.
Note that $\mathcal{U}_t(x_1-) \in u'_2$. Thus, there exists a point $x' \in t$, strictly between the two points $q$ and $x_1$, such that
(i) $x'$ is a point of continuity of $\mathcal{U}_t$ and (ii) $\mathcal{U}_t(x') \in J$ lies strictly in the interior of rectangle $R$.
Further, the upper endpoint $x'' \in J$ of $s_{x_2}$ lies strictly outside the rectangle $R$.

We break the proof into two cases:

\begin{enumerate}
\item {\bf Case I.} $y$ is the lower endpoint of $u'_1$ and the upper endpoint of $u'_2$. 

\item {\bf Case II.} One of the following two statements is true: (i) $y$ is not the lower endpoint of $u'_1$ and (ii) $y$ is not the upper endpoint of $u'_2$. 
\end{enumerate}

By the Jordan curve theorem for a rectangle (a rectangle is the simplest rectilinear polygon and we have assumed the truth of Jordan curve
theorem for all rectilinear polygons), $J$ crosses rectangle $R$ at least two times.

Fist, consider {\bf Case I} (see Figure \ref{fig-proof-one-to-one}(a)). Let $a'_1$ be the upper endpoint of $u'_1$ and $a'_2$ be the lower endpoint of $u'_2$.
By Lemma \ref{lem-vertical-segment-crossing}, $J$ cannot cross rectangle $R$ at any point in $u'_1 - \{y\}$, $u'_2 - \{ y \}$ and
$(u_1 - \{ a_1 \})\cap \overline{pq}$. Further, $J$ cannot cross $s_{x_2}$, $int(t)$, $int(t')$ and $s_q$, since by definition these 
open segments belong to $\mathbb{R}^2 - J$. Thus, $J$ can cross rectangle $R$ only at point $y$. Since $\psi$ is a one-to-one function, this 
contradicts the above conclusion that $J$ crosses $R$ at least two times.

Now, we consider {\bf Case II} (see Figure \ref{fig-proof-one-to-one}(b)). Let $a'_1$ and $b'_1$ be the upper and lower endpoints of $u'_1$. Let $b'_2$ and $a'_2$ be the upper and lower endpoints of
$u'_2$.

First, note that $s_{x_2-} \cap s_{x_1} = \phi$ and $s_{x_2+} \cap s_{x_1} = \phi$.
Suppose, for the sake of contradiction, there exists a point $y' \in s_{x_2-} \cap s_{x_1}$. Since $\lim_{z \rightarrow x_2-} s_z = s_{x_2-}$ and $y' \in s_{x_2-}$, there exists a point $z \in t'$ to the left of $x_2$, such that $h_{y'}$ intersects
$s_z$ and the the horizontal segment $s'$ from $y'$ to $s_{z}$ does not intersect $J$. Let $Q$ be the path from $y'$ to $x_2$ which first follows $s'$, then the open segment $s_{z}$ and finally the horizontal segment $t'$. Since $Q \subset \mathbb{R}^2 - J$, $y' \in int(H(t))$ and $x_2 \in int(t') \subset ext(H(t))$, this contradicts the Jordan curve theorem for the horizontal sweep $H(t)$ proved above. A symmetric argument proves that $s_{x_1-} \cap s_{x_2} = \phi$ and $s_{x_1+} \cap s_{x_2} = \phi$.

Since $u'_1 \subset cl(s_{x_1-}) \cup cl(s_{x_1+})$, we conclude that $a'_2 \leq b'_1$ ($\leq$ is the ordering of points on the line $l_{x_1} = l_{x_2}$ in increasing order of their $y$-coordinates). Similarly, since $u'_2 \subset cl(s_{x_1-}) \cup cl(s_{x_1+})$, we conclude that $b'_2 \leq a'_1$. 

Thus, in this case, $u'_1$ and $u'_2$ overlap and $y \in u'_1 \cap u'_2$. The proof is same as the proof of {\bf Case I} above (the only difference is that $J$ may not be able to cross $R$ even at point $y$).

This concludes the proof that the image $K_{t,t'}$ of $\phi_{t,t'}$ is a Jordan
curve.
$\blacksquare$

\begin{lemma}
\label{interior-jct-horizontal-sweep}
{\bf Interior JCT for extension of a horizontal sweep.} Suppose horizontal sweep $H(t')$ can be extended to $E=(H(t), H(t'))$ by the sweepline algorithm.
Then, there exists a piecewise-vertical Jordan curve $K_{t, t'}$ such that $int(E)$ is a bounded, connected open set with $bd(int(E)) = K_t$ and $ext(E)$ is an unbounded open set.

\end{lemma}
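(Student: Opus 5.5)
The plan is to read the statement with $bd(int(E)) = K_{t,t'}$, the evident intent given the construction, and to derive every property from Lemma \ref{JCT-horizontal-sweep} for $H(t)$ and $H(t')$ together with Observation \ref{obs-K-t-t'}. The argument takes four steps.

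\emph{Piecewise-vertical structure.} $K_{t,t'}$ is already a Jordan curve by Observation \ref{obs-K-t-t'}. It is obtained from $K_t$ and $K_{t'}$ by deleting the open segment $s_p$, so it is the union of points of $J$ and the remaining vertical segments of $K_t$ and $K_{t'}$. There are countably many of these, and each is a vertical segment of some $K_{u}$ with $u \in \{t,t'\}$. Where a piece of $u_1$ or $u_2$ survives the deletion of $s_p$, it is still a vertical segment of $K_t$ or $K_{t'}$ respectively. Hence $K_{t,t'}$ is piecewise-vertical.

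\emph{Boundedness, connectedness and openness.} $int(E)$ is bounded because $int(H(t))$ and $int(H(t'))$ both lie in the bounding box $\mathcal{B}$, by Assumption $1$. It is connected because $s_p$ meets the closure of both pieces. Concretely, $p \in s_p \cap t'$, and every point of $s_p$ is a limit of points of $s_{q+} \subset int(H(t))$ and of $s_{p+} \subset int(H(t'))$, using $s_{p-}=s_{q-}$ and $I(t,t')=s_p$. A rectilinear path can therefore go from $int(H(t))$ across $s_p$ into $int(H(t'))$. Openness holds away from $s_p$ by Lemma \ref{JCT-horizontal-sweep}. For $w \in s_p$, I take a small ball and split it by $l_q$. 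The left half lies in segments $s_z$ with $z$ near $q$ to the left, and these lie in $int(H(t))$. The right half lies in segments $s_z$ with $z \in int(t')$ near $p$, and these lie in $int(H(t'))$. Both claims use the existence of the one-sided limits $s_{p-}$ and $s_{p+}$, via Lemma \ref{lemma-left-limit-exists}, together with the fact that $w$ lies strictly inside both limit segments.

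\emph{Boundary and exterior.} I show that $int(E)$ is disjoint from $K_{t,t'}$, and that $K_{t,t'} \subset cl(int(E))$ because $K_t - s_p \subset cl(int(H(t)))$ and $K_{t'} - s_p \subset cl(int(H(t')))$. For the reverse inclusion $bd(int(E)) \subset K_{t,t'}$, let $x$ be a limit point of $int(E)$ not in $int(E)$. Then $x$ is a limit point of $int(H(t))$ or of $int(H(t'))$, so $x \in K_t \cup K_{t'}$, and $x \notin s_p$, so $x \in K_{t,t'}$. Finally, $ext(E)$ is open because it is the complement of the closed set $cl(int(E)) = int(E) \cup K_{t,t'}$. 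It is unbounded because $int(E) \cup K_{t,t'}$ is bounded.

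\emph{Main obstacle.} The hardest point is that $int(H(t))$ and $int(H(t'))$ must be disjoint, and that $int(H(t'))$ must avoid $K_t - s_p$. Without this the union could fail to have $K_{t,t'}$ as its exact boundary. I would argue as in Case II of Observation \ref{obs-K-t-t'}. Suppose a point $y' \in int(H(t)) \cap int(H(t'))$ exists. Then there is a path in $\mathbb{R}^2 - J$ from $y'$ along open segments of $H(t')$ and along $t'$ to a point of $int(t') \subset ext(H(t))$, and this path avoids $K_t$. Since the path stays in $\mathbb{R}^2 - K_t$, it joins the interior and exterior of $K_t$, contradicting Lemma \ref{JCT-horizontal-sweep}. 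The vertical pieces of $K_t$ need separate care, because the path is not disjoint from them by virtue of avoiding $J$. I would handle them using Lemma \ref{lem-vertical-segment-crossing}.
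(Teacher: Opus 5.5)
Your skeleton is the paper's. $K_{t,t'}$ is a Jordan curve by Observation~\ref{obs-K-t-t'}. Openness at points of $s_p$ comes from $s_{p-}=s_{q-}$ and $s_p\subseteq s_{p+}$. Boundedness comes from Assumption~1, and connectedness from passing through $s_p$. The inclusion $bd(int(E))\subseteq K_{t,t'}$ comes from sorting a convergent sequence into $int(H(t))$, $int(H(t'))$ or $s_p$, and $ext(E)$ is open as the complement of the closure. The paper simply asserts $K_{t,t'}\cap int(E)=\emptyset$, so you are right to single out the disjointness of $int(H(t))$ and $int(H(t'))$ as the real issue. However, your sketch leaves exactly the delicate part open. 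Lemma~\ref{lem-vertical-segment-crossing}, which describes how $J$ approaches a vertical piece of $K_t$, is not what closes it.

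The missing step is one-dimensional.
\begin{itemize}
\item For $z\in int(t')$, $s_z$ is an open interval of $l_z$ containing no point of $J$.
\item Every vertical piece of $K_t$ (the jump segments of $bd(\mathcal{U}_t)$ and $bd(\mathcal{L}_t)$, and the sides $s_{a+}$, $s_{b-}$) is a closed segment whose endpoints are limits of points of $J$, hence lie in $J$.
\item So if $s_z$ met such a piece, it would lie in the piece's relative interior. That forces $z\in K_t$, contradicting $z\in int(t')\subseteq ext(K_t)$.
\end{itemize}
Thus $s_z\cap K_t=\emptyset$, and the connected set $s_z$ lies in $ext(K_t)$. Hence $int(H(t'))\subseteq\bigcup_{z\in int(t')}s_z\subseteq ext(H(t))$; the open segments at the two endpoints of $t'$ lie in the vertical sides of $K_{t'}$ and are removed. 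This gives both of your disjointness claims at once, with no path along $t'$ needed. The case you did not list, $int(H(t))\cap K_{t'}=\emptyset$, then follows because $int(H(t))$ is open and $K_{t'}\subseteq cl(int(H(t')))$.

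There is one slip in your connectedness paragraph. $s_{q+}$ lies below $b_1=\mathcal{U}_t(q+)$, hence below $s_p$. Also, $s_{p+}$ is the left side of $K_{t'}$, so it is disjoint from $int(H(t'))$. What you need there, and what your openness paragraph correctly uses, are two families of segments $s_z$:
\begin{itemize}
\item those with $z\in t$ just left of $q$, which converge to $s_{q-}=s_{p-}$;
\item those with $z\in int(t')$ just right of $p$, which converge to $s_{p+}\supseteq s_p$.
\end{itemize}
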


\noindent {\bf Proof:} By Observation \ref{obs-K-t-t'}, $K_{t,t'}$ is a Jordan curve.

We next prove that $int(E)$ is an open set. In the previous section ,we proved that both $int(H(t))$ and $int(H(t'))$ are open sets. Since $int(E) = int(H(t)) \cup int(H(t')) \cup s_p$, it suffices to prove that for every point $z \in s_p$, there exists an $\epsilon > 0$ ($\epsilon$ depends on $z$) such that $B(z,\epsilon) \subset int(E)$. This follows from observing that $s_{p-} \cap s_{p+} = s_p$ and
$s_{p-} = s_{q-}$.

By Assumption $1$ above, all open segments in $H(t)$ and $H(t')$ are finite and hence
lie within the bounding box $\mathcal{B}$. Thus, $int(E)$ is a bounded set.
For any two points $p_1 \in int(t)$ and $p_2 \in int(t')$, we construct a rectilinear
path $Q \subset \mathbb{R}^2 - J$ as follows. Since $s_{q-} = s_{p-}$, there exists
a point $z \in int(t)$ such that (i) $z$ lies to the left of $q$, (ii) line $h_p$ intersects $s_z$ and (iii) the horizontal segment $s'$ from $p$ to $s_z$ does not
contain any point of $J$. The path $Q$ is constructed by first following segment
$t$ from $p_1$ to $z$, then $s_z$ from $z$ to the endpoint of $s'$ on $s_z$, then
following $s'$ to $p$ and finally following $t'$ from $p$ to $p_2$. This proves that
$int(E)$ is a connected set.

Thus, $int(E)$ is a bounded, open and connected set, as claimed.

We now prove that $bd(int(E)) = K_{t,t'}$. Take any convergent sequence
$q_1, q_2, \ldots$ of points of $int(E)$. Let $q^*$ be the limit point of this sequence. Assume that $q^* \notin int(E)$. Then, since $int(E) = int(H(t)) \cup int(H(t')) \cup s_p$, one of the following three cases occurs: 

\begin{enumerate}
\item An infinite subsequence $q_{i_1}, q_{i_2}, \ldots$ of points lies in $int(H(t))$. Then, $q^* \in bd(H(t)) - s_p \in K_{t,t'}$. 

\item An infinite subsequence $q_{i_1}, q_{i_2}, \ldots$ lies in $int(H(t'))$.
Then, $q^* \in bd(H(t')) - s_p \in K_{t,t'}$.

\item An infinite subsequence $q_{i_1}, q_{i_2}, \ldots$ lies in $s_p$.
Then, $q^* \in bd(s_p) \in K_{t,t'}$. 
\end{enumerate}

Thus, we conclude that $bd(int(E)) \subset K_{t,t'}$. Note that $K_{t,t'} = K_t \cup K_{t'} - \{ s_p \} = (K_t - \{s_p\}) \cup (K_{t'}-\{s_p\})$ and $K_{t,t'} \cap int(E) = \phi$. Thus, for every
point $z \in K_{t,t'}$, there exists a convergent sequence $q_1, q_2, \ldots$ such
that (i) for each $i \in \mathbb{N}$, $q_i \in int(E)$ and (ii) $\lim_{i \rightarrow \infty} q_i = z$. Thus, $K_{t,t'} \subset bd(int(E))$. We conclude that
$bd(int(E)) = K_{t,t'}$.

Thus, $cl(int(E)) = int(E) \cup K_{t,t'}$ and hence $ext(E) = \mathbb{R}^2 - cl(int(E))$ is an open set. Since the boundary of 
a set is equal to the boundary of its complement, we conclude that
$bd(ext(E)) = bd(int(E)) = K_{t,t'}$. Since $cl(E) \subset \mathcal{B}$, $ext(E)$ is unbounded.
$\blacksquare$\\

The extension of a horizontal sweep corresponds to two steps of the sweepline algorithm.

%% file: extension-piecewise-vertical-jordan-001.tex
\subsection{Extension of a piecewise-vertical Jordan curve}

We now define the extension of a piecewise-vertical Jordan curve and prove interior Jordan curve theorem for the extension, by generalizing the arguments for extension of a horizontal sweep above. 
This generalization will be used in the definition of a general sweepline algorithm for a Jordan curve (see Section \ref{sec-sweepline-algorithm}).

\begin{definition}
{\bf Extension of a piecewise-vertical Jordan curve.} Let $K$ be a piecewise-vertical Jordan curve, with respect to $J$. Suppose further that the interior Jordan curve theorem for piecewise-vertical Jordan curves (see Lemma \ref{piecewise-vertical-JCT}) has been established for curve $K$. Let $int(K)$ and $ext(K)$ be the interior and exterior of $K$ (since we assume the truth of interior Jordan curve theorem for $K$, $int(K)$ is a bounded, connected and open set, whereas $ext(K)$ is an unbounded open set).
Let $u_1$ be any vertical segment of $\mathcal{V}(K)$. Let $p$ be any point of $int(w)$. Let $t'$ be a horizontal segment such that $t' - \{p\} \subset ext(K)$. 
Then, we say that the piecewise-vertical Jordan curve $K$ can be extended by the horizontal sweep $H(t')$.
\end{definition}

Let $E=(K, H(t'))$ denote the extension of $K$ by $H(t')$.
Let $a_1$ and $b_1$ be the upper and lower endpoints of $u_1$, respectively. Consider the left vertical boundary $s_{p+}$ of $H(t')$. Let $a_2$ and $b_2$ be the upper and lower endpoints of $s_{p+}$.
We next prove that $b_1 \leq b_2 < a_2$.
Note that the open segment $s_p$ at point $p$ is a subset of $u_2=s_{p+}$. Further, $s_p$ is also a subset of $u_1$, since $a_1, b_1 \in J$.

{\bf Construction of $K'$.} We remove $s_p$ from $K$ and replace it
with $K_{t'} - s_p$, to obtain a new curve $K'$ (see Figure \ref{fig-constructing-K-t-t'}). We define
$int(E) = int(K) \cup int(H(t')) \cup \{ s_p \}$ and $ext(E) = \mathbb{R}^2 - \{ K' \cup int(E) \}$.
We now formalize the construction of $K'$ in terms of maps. 
Let $\phi_{K}: \mathbb{S}^1 \rightarrow \mathbb{R}^2$ and $\phi_{t'}: \mathbb{S}^1 \rightarrow \mathbb{R}^2$ be continuous and one-to-one maps such that their
images are the Jordan curves $K$ and $K_{t'}$, respectively.

Let $I_{p,K} = \{ \phi_K^{-1}(z) ~| ~z \in cl(s_p) \}$. $I_{p,K}$ is a closed interval
of $\mathbb{S}^1$, with length strictly less than $2 \pi$. 
Let $I_{p,t'} = \{ \phi_{t'}^{-1}(z) ~| ~z \in cl(s_p) \}$. $I_{p,t'}$ is a closed interval of $\mathbb{S}^1$, with length strictly less than $2 \pi$.

Let $a'$ and $b'$ be the lower and upper endpoints of $cl(s_p)$.
Let $\omega: I_{p,K} \rightarrow I_{p,t'}$ be a continuous bijection between these two closed intervals, such that $\omega(\phi_K^{-1}(a')) = \phi_{t'}^{-1}(a')$ and
$\omega(\phi_K^{-1}(b')) = \phi_{t'}^{-1}(b')$.

Define $\phi_{K'}: \mathbb{S}^1 \rightarrow \mathbb{R}^2$ as follows: (i) for $z \in I_{p,K}$, $\phi_{K'}(z) = \phi_{t'}(\omega(z))$, and (ii) for $z \notin I_{p,K}$, $\phi_{K'}(z) = \phi_K(z)$.
Without loss of generality, assume that $p$ is the left endpoint of $w$.
We set $int(E) = int(K')$ and $ext(E) = ext(K')$.

In the proof of the following observation, we use the Jordan curve theorem for rectilinear polygons, instead of the Jordan curve theorem for rectangles.

\begin{observation}
\label{obs-K'}
$K'$ is a Jordan curve.
\end{observation}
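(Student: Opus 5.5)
I will follow the template of the proof of Observation \ref{obs-K-t-t'}, splitting it into continuity and injectivity of $\phi_{K'}$, and replacing the rectangle $R$ used there by a rectilinear polygon.

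\emph{Continuity.} Since $\phi_K$, $\phi_{t'}$ and $\omega$ are continuous, continuity can only fail at the two endpoints $y_1=\phi_K^{-1}(a')$ and $y_2=\phi_K^{-1}(b')$ of $I_{p,K}$. At $y_1$, both one-sided limits equal $a'$: from inside $I_{p,K}$ because $\omega(y_1)=\phi_{t'}^{-1}(a')$, and from outside because $\phi_K(y_1)=a'$. The same holds at $y_2$, exactly as before.

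\emph{Injectivity.} Suppose $\phi_{K'}(z_1)=\phi_{K'}(z_2)=y$ with $z_1\notin I_{p,K}$ and $z_2\in I_{p,K}$. Then $y\in (K-cl(s_p))\cap(K_{t'}-s_p)$. Let $x_2\in t'-\{p\}$ be the point with $y\in cl(s_{x_2+})\cup cl(s_{x_2-})$, and let $u'_1$ be the vertical segment of $K_{t'}$ through $y$ (possibly a single point).

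On the $K$ side, I use the definition of a piecewise-vertical curve. Either $y\in\mathcal{P}(K)\subset J$, or $y$ lies on a vertical segment $v\in\mathcal{V}(K)$ that is a vertical segment of some $K_{u_v}$. In both cases I need a point $x'$ of $int(K)$ close to $y$ on the side facing $p$. I get it from the fact, assumed for $K$, that $bd(int(K))=K$.

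The main construction is a closed rectilinear polygon $P$ that avoids $J$ except near $y$. It starts at $p$, runs along $t'$ to $x_2$, then vertically inside $cl(s_{x_2\pm})$ toward $y$. It returns through $int(K)$ from near $y$ back to $p$ by a rectilinear path. Such a path exists because $int(K)$ is open and connected, so any two of its points are joined by a polygonal path, which can be made rectilinear. The path meets $p$ through $u_1$, using that $s_{p-}$ lies in $int(K)$.

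Next I find two points of $J$ separated by $P$. One is the upper endpoint $x''\in J$ of $s_{x_2}$, which lies outside $P$. The other is a point of $J$ near $y$ that lies strictly inside $P$; this is the analogue of $\mathcal{U}_t(x')$. By the Jordan curve theorem for rectilinear polygons, the arc of $J$ joining them must cross $P$ at least twice. Indeed $J$ is a closed curve through both points, so it meets $P$ on each of its two arcs.

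Crossings are impossible on the open segments $int(t')$ and $s_{x_2}$, and on the portion inside $int(K)$, since $int(K)\cap J=\phi$. They are also impossible on $u_1-\{a_1\}$ and on $u'_1-\{y\}$. For these I use the local one-sidedness result, Lemma \ref{lem-vertical-segment-crossing}, applied to the horizontal sweeps $H(u_v)$ and $H(t')$ that generate these vertical segments. So $J$ can meet $P$ only at $y$, which contradicts the two required crossings.

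\textbf{Main obstacle.} The hard step is choosing the return path inside $int(K)$ so that it reaches a neighbourhood of $y$ while staying close enough for Lemma \ref{lem-vertical-segment-crossing} to control the crossings there. The difficulty is that $K$ near $y$ may accumulate countably many vertical segments. I would first try to pick $x'$ on an open segment $s_{x'}\subset int(K)$ lying on the vertical line just to the side of $y$. That segment exists because $y$ is a limit of open segments in some generating sweep $H(u_v)$. I would then shrink $P$ until the only unresolved boundary point is $y$, splitting into the same Case I/Case II as before, according to whether $y$ is an endpoint of both relevant vertical segments or the two overlap.
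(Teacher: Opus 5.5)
Your route is the paper's. Continuity is checked only at the two junction parameters. Injectivity is proved by contradiction, with the rectangle of Observation~\ref{obs-K-t-t'} replaced by a rectilinear polygon closed through $int(K)$: $x''$ lies outside, a point of $J$ near $y$ lies inside, and Lemma~\ref{lem-vertical-segment-crossing} excludes every meeting point except $y$. Getting the finite-turn path from openness and connectedness of $int(K)$, instead of from item~2 of Lemma~\ref{piecewise-vertical-JCT}, is fine: the path lies in $int(K)$, so it misses $t'-\{p\}$ and $s_{x_2}$, which are in $ext(K)$.

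However, the step you flag as the main obstacle is left unresolved, and it is exactly the $K$-side of the argument. Your list of excluded meeting points contains $u'_1-\{y\}$ but nothing playing the role of $u'_2-\{y\}$. So nothing controls how $J$ meets $P$ where $P$ passes from $y$ into $int(K)$. The fix you sketch does not work as stated, for two reasons. First, the definition of a piecewise-vertical curve gives a sweep $H(u_v)$ with $v$ on its boundary, but not $int(u_v)\subset int(K)$, so its open segments near $y$ need not lie in $int(K)$. Second, stepping from $y$ across to a neighbouring vertical line can run into $J$, or into other vertical segments of $K$ — precisely the accumulation you are worried about.

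The paper closes this with the strengthened inductive hypothesis, item~3 of Lemma~\ref{piecewise-vertical-JCT}. It gives that the vertical segment $u'_2$ of $K$ through $y$ (possibly a single point, as in the rectangle case) lies on $K_t$ for a horizontal segment $t$ with $int(t)\subset int(K)$. Let $x_1\in t$ be the point on $l_{x_2}$. Take the polygon formed by:
\begin{itemize}
\item $\overline{px_2}\subset t'$;
\item the whole segment $\overline{x_2x_1}$, which passes through $y$;
\item a piece of $t$ ending at $x_1$;
\item a finite-turn path $Q\subset int(K)$ from that piece of $t$ to $p$.
\end{itemize}
Near $y$ this polygon lies only on $l_{x_1}$, namely on $s_{x_2}\cup u'_1\cup u'_2\cup s_{x_1}$. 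So other vertical segments of $K$ accumulating at $y$ never touch it. The pieces $u'_1-\{y\}$ and $u'_2-\{y\}$ are handled by Lemma~\ref{lem-vertical-segment-crossing} for the single sweeps $H(t')$ and $H(t)$. The interior witness is $\mathcal{U}_t(x')$ for a continuity point $x'\in t$ between the start of $Q$ and $x_1$, close to $x_1$. This is your $s_{x'}$, now guaranteed to lie in $int(K)$. With that choice your Case I/Case II split goes through as in Observation~\ref{obs-K-t-t'}.

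Separately, $s_{p-}$ itself is not contained in $int(K)$, since it passes through $p\in K$. What you actually use is that points of $h_p$ just beside $p$, on the side away from $t'$, lie in $int(K)$.
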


\begin{figure}
\begin{subfigure}{0.5\textwidth}
\scalebox{0.6}{%
\begin{tikzpicture}
    \draw[red] (2, 0) -- (7, 0) node[black][anchor=south]{$t$};
    \draw[very thick,orange] (2,0) -- (2,2.5) -- (0, 2.5) -- (0, 1.5) -- (-2, 1.5) -- (-2, 2.5) node[black][left]{$Q$} -- (-2, 5) -- (-1,5) -- (-1, 4) -- (0,4);
    %\draw[red] (0, -2) -- (0, 3) node[black][pos=0.6][anchor=east]{$s_q$};
    \draw[blue] (0, 3) -- (0, 5) node[black][pos=0.7][anchor=east]{$u_1$};
    \filldraw[black] (0, 5) circle (0.05) node[black][anchor=east]{$a_1$};
    \filldraw[black] (0, 3) circle (0.05) node [black][anchor=east]{$b_1$};
    %\filldraw[black] (0, 0) circle (0.05) node [black][anchor=east]{$q$};
    \filldraw[black] (0, 4) circle (0.05) node[black][anchor=east]{$p$};
    \draw[red] (0, 4) -- (5.5,4) node[black][anchor=south]{$t'$};
    \filldraw[black] (3, 0) circle (0.05) node[black][above, left]{$x_1$};
    \filldraw[black] (3, 4) circle (0.05) node[black][above, left]{$x_2$};
    \draw[red] (3, 5) -- (3,3) node[black][pos=0.3][left]{$s_{x_2}$};
    \filldraw[black] (3, 5) circle (0.05) node[black][above, left]{$x''$};
    \draw[red] (3, -1) -- (3,1) node[black][pos=0.2][left]{$s_{x_1}$};
    \draw[blue] (3, 3) -- (3, 2) node[black][pos=0.7][anchor=east]{$u'_1$};
    \draw[blue] (3, 1) -- (3, 2) node[black][pos=0.7][anchor=east]{$u'_2$};
    \filldraw[black] (3, 3) circle (0.05) node[black][above, right]{$a'_1$};
    \filldraw[black] (3, 1) circle (0.05) node[black][above, right]{$a'_2$};
    \filldraw[black] (3, 2) circle (0.05) node[black][above, right]{$y$};
    \filldraw[black] (2.5, 1) circle (0.05) node[black][above, left]{$x'$};
    
\end{tikzpicture}
}
\caption{{\bf Case I}}
\end{subfigure}
\begin{subfigure}{0.5\textwidth}
\scalebox{0.6}{%
\begin{tikzpicture}
    \draw[red] (2, 0) -- (7, 0) node[black][anchor=south]{$t$};
    \draw[very thick,orange] (2,0) -- (2,2.5) -- (0, 2.5) -- (0, 1.5) -- (-2, 1.5) -- (-2, 2.5) node[black][left]{$Q$} -- (-2, 5) -- (-1,5) -- (-1, 4) -- (0,4);
  
    %\draw[red] (-3, 0) -- (7, 0) node[black][anchor=south]{$t$};
    %\draw[red] (0, -2) -- (0, 3) node[black][pos=0.6][anchor=east]{$s_q$};
    \draw[blue] (0, 3) -- (0, 5) node[black][pos=0.7][anchor=east]{$u_1$};
    \filldraw[black] (0, 5) circle (0.05) node[black][anchor=east]{$a_1$};
    \filldraw[black] (0, 3) circle (0.05) node [black][anchor=east]{$b_1$};
    %\filldraw[black] (0, 0) circle (0.05) node [black][anchor=east]{$q$};
    \filldraw[black] (0, 4) circle (0.05) node[black][anchor=east]{$p$};
    \draw[red] (0, 4) -- (5.5,4) node[black][anchor=south]{$t'$};
    \filldraw[black] (3, 0) circle (0.05) node[black][above, left]{$x_1$};
    \filldraw[black] (3, 4) circle (0.05) node[black][above, left]{$x_2$};
    \draw[red] (3, 5) -- (3,3) node[black][pos=0.3][left]{$s_{x_2}$};
    \filldraw[black] (3, 5) circle (0.05) node[black][above, left]{$x''$};
    \draw[red] (3, -1) -- (3,1) node[black][pos=0.2][left]{$s_{x_1}$};
    \draw[blue] (3, 3) -- (3, 1.5) node[black][pos=0.3][anchor=east]{$u'_1$};
    \draw[blue] (3, 1) -- (3, 2.5) node[black][pos=0.3][anchor=east]{$u'_2$};
    \filldraw[black] (3, 3) circle (0.05) node[black][above, right]{$a'_1$};
    \filldraw[black] (3, 1) circle (0.05) node[black][above, right]{$a'_2$};
    \filldraw[black] (3, 1.5) circle (0.05) node[black][above, right]{$b'_1$};
    \filldraw[black] (3, 2.5) circle (0.05) node[black][above, right]{$b'_2$};
    \filldraw[black] (3, 2) circle (0.05) node[black][above, right]{$y$};
    \filldraw[black] (2.5, 1) circle (0.05) node[black][above, left]{$x'$};
  
\end{tikzpicture}
}
\caption{{\bf Case II}}
\end{subfigure}
\caption{$\phi_{K'}$ is one-to-one.}
\label{fig-proof-K'-jordan-curve}
\end{figure}
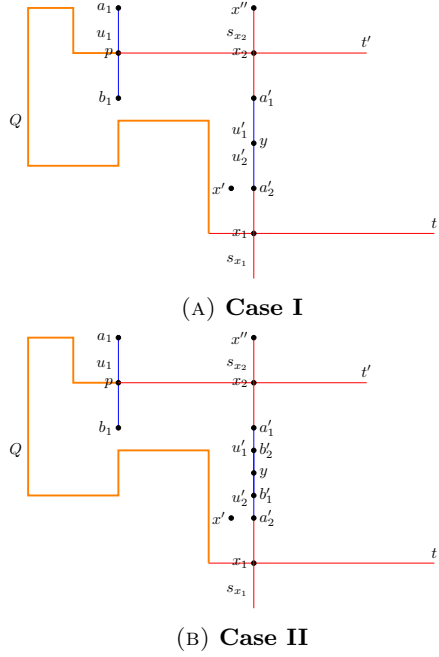

\noindent {\bf Proof:} The proof is on the same lines as the proof of Observation \ref{obs-K-t-t'} (see Figure \ref{fig-proof-K'-jordan-curve}). The main differences are the following. Since the interior Jordan curve theorem is true for $K$, (i) there exists a
horizontal segment $t$ such that $u_2$ is a vertical segment on the boundary of the horizontal sweep $H(t)$ and (ii) there is rectilinear path $Q \subset \mathbb{R}^2 - J$ from $x_2$ to $p$ with a finite number of turns. Instead of rectangle $R$, we consider the rectilinear polygon formed by path $Q$, the line segment $\overline{px_2}$ and the line segment ${x_2x_1}$. In the proof, we use the Jordan curve theorem for rectilinear polygons, instead of the Jordan curve theorem for rectangles.
$\blacksquare$\\

\begin{lemma}
\label{piecewise-vertical-JCT}
{\bf Interior JCT for extension of a piecewise-vertical Jordan curve.} If a piecewise-vertical Jordan curve $K$ can be extended by horizontal sweep $H(t')$ {\bf and} interior JCT (there is recursion here; we derive the truth of interior JCT for $K'$ from the truth of interior JCT for $K$) is true for $K$, then there exists a piecewise-vertical Jordan curve $K'$ such that

\begin{enumerate}
    \item $int(E)$ is a bounded, connected open set with $bd(int(E)) = K'$ and $ext(E)$ is an unbounded open set,
    \item the set $int(E)$ is the union of a collection of open segments $s_p$, $p \in I_{K'} \subset \mathbb{R}^2 - J$. Further, for any two open segments $s_{p_1}$ and $s_{p_2}$ ($p_1, p_2 \in I_{K'}$), there
    is a rectilinear path with a finite number of turns from a
    point $w_1 \in s_{p_1}$ to a point $w_2 \in s_{p_2}$, and
    \item for every vertical segment $u \in \mathcal{V}_{K'}$, there
    exists a horizontal segment $g_u$ such that (i) $int(g(u)) \subset int(K')$ and (ii) $u$ is a vertical segment of the
    Jordan curve $K_{g_u}$ forming the boundary of horizontal sweep $H(g_u)$.
\end{enumerate}
\end{lemma}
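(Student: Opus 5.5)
The plan is to mirror the proof of Lemma \ref{interior-jct-horizontal-sweep}, replacing $H(t)$ by the region $int(K)$ and using the inductive hypothesis that interior JCT holds for $K$. We also use its structural parts (2) and (3), which are assumed for $K$ as part of the recursion.

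\textbf{Step 1: $K'$ and openness.} By Observation \ref{obs-K'}, $K'$ is a Jordan curve. Since $int(E) = int(K) \cup int(H(t')) \cup s_p$ and the first two sets are open (by hypothesis and by Lemma \ref{JCT-horizontal-sweep}), openness reduces to points $z \in s_p$.

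For such $z$, use property (3) for $K$. This gives a horizontal segment $g_{u_1}$ with $int(g_{u_1}) \subset int(K)$ such that $u_1$ is a vertical segment of $K_{g_{u_1}}$. The argument of Section \ref{sec-extension-horizontal-sweep}, with $t$ replaced by $g_{u_1}$ and $q$ the foot of $l_p$ on $g_{u_1}$, then yields $s_{p-} = s_{q-}$ and $s_{p-} \cap s_{p+} = s_p$. Hence a small ball around $z$ splits into a left part inside $H(g_{u_1}) \subset int(K)$, a right part inside $H(t')$, and the segment $s_p$ itself. Boundedness follows from Assumption 1, since every open segment involved lies in $\mathcal{B}$.

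\textbf{Step 2: connectedness and property (2).} Take $I_{K'} = I_K \cup int(t')$. Then $int(E)$ is the union of the open segments $s_x$, $x \in I_{K'}$, together with $s_p$, which is itself an open segment. Two segments of $H(t')$ are joined along $t'$. Two segments of $int(K)$ are joined by hypothesis (2) for $K$. To join a segment of $int(K)$ to one of $H(t')$:
\begin{itemize}
\item go first to $s_z$ for some $z \in g_{u_1}$ just left of $q$, which is possible because $s_z \subset int(K)$;
\item then take the horizontal segment from $s_z$ to $p$ at height $h_p$, which is free of $J$ because $s_{p-} = s_{q-}$;
\item finally continue along $t'$.
\end{itemize}
Each piece has finitely many turns, so the concatenation does too. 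This gives both the connectedness of $int(E)$ and property (2).

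\textbf{Step 3: property (3) and the boundary.} Vertical segments of $K'$ come either from $\mathcal{V}(K)$ minus $s_p$, for which the old $g_u$ still works since $int(K) \subset int(K')$, or from $K_{t'}$, for which we take $g_u = t'$. One subtlety is the leftover pieces of $u_1$ on either side of $s_p$. These are subsegments of the boundary of $H(g_{u_1})$ (or of $H(t')$), so $g_{u_1}$ still serves for them. In particular, $K'$ is piecewise-vertical.

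For $bd(int(E)) = K'$, I would repeat the sequence argument of Lemma \ref{interior-jct-horizontal-sweep}. A limit of points of $int(E)$ that lies outside $int(E)$ is a limit from $int(K)$, from $int(H(t'))$, or from $s_p$. It therefore lies in $(K - s_p) \cup (K_{t'} - s_p) = K'$. Conversely, each point of $K'$ is a limit of points of $int(K)$ or of $int(H(t'))$, and $K' \cap int(E) = \emptyset$ by construction. Then $ext(E) = \mathbb{R}^2 - cl(int(E))$ is open, and it is unbounded because $cl(int(E)) \subset \mathcal{B}$.

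\textbf{Main obstacle.} The delicate step is transferring the local structure near $s_p$: the facts $b_1 \le b_2 < a_2$, $s_{p-} = s_{q-}$ and $K' \cap int(E) = \emptyset$. Previously these were proved using a concrete $H(t)$. Here I intend to obtain them by localizing to $H(g_{u_1})$ via property (3) for $K$. I also need $int(H(t')) \cap int(K) = \emptyset$, which I would derive from $t' - \{p\} \subset ext(K)$ together with the rectilinear-polygon JCT argument used in Observation \ref{obs-K'}.
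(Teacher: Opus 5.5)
Your proposal is correct and follows the paper's route. The paper's proof only says the argument runs along the lines of Lemma~\ref{interior-jct-horizontal-sweep}, and your two devices (localizing near $u_1$ to the sweep $H(g_{u_1})$ via property (3), and connecting via property (2)) are exactly how the paper uses the inductive hypothesis in the proof of Observation~\ref{obs-K'}. The disjointness $int(H(t')) \cap int(K) = \emptyset$ that you flag needs no polygon argument: if some $s_x$ with $x \in int(t')$ met $int(K)$, property (2) would force $s_x$ to coincide with one of the open segments composing $int(K)$, putting $x \in int(K)$ and contradicting $t' - \{p\} \subset ext(K)$.
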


\noindent {\bf Proof:} The proof is on the same lines as the proof of Lemma \ref{interior-jct-horizontal-sweep}. $\blacksquare$

Note that the above lemma is sufficient to prove interior JCT for any finite sequence of horizontal sweeps. As an example, consider the five consecutive horizontal sweeps $(H(t_1), H(t_2), \ldots, H(t_5))$ in Figure \ref{koch}. By Lemma \ref{JCT-horizontal-sweep}, interior JCT is true for the initial horizontal sweep $H(t_1)$ (base case).
By the above lemma, for each $1 \leq i \leq 4$, if interior JCT is true till the first $i$ horizontal sweeps, then it is also true for the first $i+1$ horizontal sweeps (the inductive step). 

%% file: infinite-ray-001.tex
\subsection{Infinite rays of horizontal sweeps}

\label{sec-infinite-ray}

\begin{definition}
An {\bf infinite ray} $r$ is a countably infinite sequence $H(t_1), H(t_2), \ldots$ of horizontal sweeps such that

\begin{enumerate}
    \item $H(t_1)$ can be extended by $H(t_2)$ from a vertical segment of $bd(H(t_1))$. Let $E_2 = (H(t_1), H(t_2))$. Let $K_2$ be the piecewise-vertical Jordan curve bounding $int(E_2)$.

    \item $K_2$ can be extended by $H(t_3)$ from a vertical segment of $bd(H(t_2))$. Let $K_3$ be the piecewise-vertical Jordan curve bounding $E_3 = (H(t_1), H(t_2), H(t_3))$.
    
    \item $K_3$ can be extended by $H(t_4)$ from a vertical segment of $bd(H(t_3))$. Let $K_4$ be the piecewise-vertical Jordan curve bounding $(H(t_1), H(t_2), H(t_3), H(t_4))$.

    \item and so on till infinity.
\end{enumerate}
\end{definition}

{\it Note.} The existence of $K_2, K_3, \ldots$ as well as the truth of JCT for them, follows from Lemma \ref{JCT-horizontal-sweep} (base case on $H(t_1)$) and Lemma \ref{piecewise-vertical-JCT} (the inductive step). An infinite ray can be viewed as a special case of the sweepline algorithm for Jordan curves, where the recursion tree is an infinite path (see Section \ref{sec-sweepline-algorithm}).
In Figure \ref{koch}, $(H(t_1), H(t_2), H(t_3))$ is a finite ray, but $(H(t_1), H(t_2), H(t_4))$ is not a finite ray.

\subsubsection{Structure of terminating and non-terminating rays}
We now study the structure of infinite rays. The first lemma, called the {\it tunnel lemma}, places
restrictions on the structure of region swept by an infinite ray. 

\begin{lemma}
{\bf Tunnel lemma.} Let $s_{i_1}, s_{i_2}, \ldots$ ($i_1 < i_2 < \cdots$) be an infinite sequence of open segments such that (i) for each $k \in \mathbb{N}$, $s_{i_k} \cap t_{i_k} \neq \phi$ and (ii) $\lim_{k \rightarrow \infty} s_{i_k}$ exists. Let $s^*$ be the limit of this
sequence of open segments. Assume that $s^*$ has positive length. 

Then, there exists a natural number $N_0$ and a horizontal sweep $H(t_{s^*})$ such that (i) $s^* \cap int(t_{s^*}) \neq \phi$ and (ii) for all natural numbers $j > N_0$, $s_{i_j} \cap int(t_{s^*}) \neq \phi$.
\end{lemma}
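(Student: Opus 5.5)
The strategy is to produce a single horizontal segment $t_{s^*}$ that crosses the limiting segment $s^*$ and also crosses all sufficiently late segments $s_{i_j}$. Write $s^*$ as the open vertical segment on the line $l^*$ with lower and upper endpoints $b^*$ and $u^*$. Since $s^*$ has positive length, fix its midpoint $m^*$ and put $\delta = d(b^*,u^*)/4$. Convergence $s_{i_k}\to s^*$, meaning convergence of both endpoints, supplies $N_1$ such that for $j>N_1$ the segment $s_{i_j}$ lies on a vertical line $l_{i_j}$ whose $x$-coordinate is within $\delta$ of that of $l^*$. Its lower endpoint lies below $m^*-\delta$ and its upper endpoint above $m^*+\delta$ in $y$-coordinate. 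Every such $s_{i_j}$ therefore meets the horizontal line $h_{m^*}$ in a point $c_j$, and $c_j\to m^*$.

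The first candidate is $t_{s^*}$ equal to the horizontal segment on $h_{m^*}$ of length $2\delta$ centred at $m^*$. This works outright if $int(t_{s^*})\cap J=\phi$: then $H(t_{s^*})$ is a horizontal sweep, $m^*\in s^*\cap int(t_{s^*})$, and $c_j\in s_{i_j}\cap int(t_{s^*})$ for $j>N_1$. The difficulty is that $J$ might meet $h_{m^*}$ near $m^*$. For example, $m^*$ itself might lie on $J$, since $s^*$ is only a limit of open segments. So the plan is to show that $s^*\cap J=\phi$ and that $J$ stays a definite distance away from $s^*$ near its middle.

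First I show that no point $z\in s^*$ lies on $J$. Suppose some $z\in s^*\cap J$. Then by the choice of $\theta^*$ in the preliminaries, $J$ contains no vertical segment. The points $c'_j$ of $s_{i_j}$ at the height of $z$ lie off $J$ and converge to $z$. Near $z$, the curve $J$ must then approach $l^*$ while avoiding the open segments $s_{i_j}$, which have length at least $2\delta$ and accumulate on $s^*$ from one or both sides. I would derive a contradiction the same way Lemma \ref{lemma-left-limit-exists} and Lemma \ref{lem-vertical-segment-crossing} do. Take consecutive segments $s_{i_j}$ on the same side of $l^*$ and the thin rectangles they bound together with pieces of $t_{i_j}$ or of earlier boundaries. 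Use the JCT for rectilinear polygons to extract disjoint arcs of $J$ whose endpoints converge to distinct points, namely $z$ and an endpoint of $s^*$. Observation \ref{obs-jordan-arc-limit} then gives the contradiction.

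Once $s^*\cap J=\phi$ is established, choose any point $m\in s^*$ strictly between $b^*+\delta$ and $u^*-\delta$. Because $J$ is closed, some ball $B(m,\epsilon)$ misses $J$. Take $t_{s^*}$ to be the horizontal segment through $m$ of length $\epsilon$. Convergence of the endpoints of $s_{i_j}$ gives $N_0$ such that for $j>N_0$ the segment $s_{i_j}$ crosses $h_m$ within horizontal distance $\epsilon/2$ of $m$, hence meets $int(t_{s^*})$.

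The main obstacle is the step $s^*\cap J=\phi$. This is where the hypothesis $s_{i_k}\cap t_{i_k}\neq\phi$ must be used, since it makes the $s_{i_k}$ members of successive sweeps of a ray. The rectangles in that argument need sides that are known to avoid $J$, and the sweep segments $t_{i_k}$ together with the ray structure of Lemma \ref{piecewise-vertical-JCT} are what I would use to build them.
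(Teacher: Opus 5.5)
\textbf{Gap: the claim $s^*\cap J=\emptyset$ is false.} The step you flag as the main obstacle is false in general, so the argument you sketch for it cannot succeed. Here is a counterexample.
\begin{itemize}
\item Let $l^*$ be the line $x=0$. Let $J$ be the unit circle modified by a thin spike that enters from the left and touches $l^*$ only at its tip $z=(0,0)$, staying otherwise in $\{x<0\}$.
\item Take the ray with $t_k=[\frac{1}{k+2},\frac{1}{k+1}]\times\{0\}$, where each $t_{k+1}$ starts on the left boundary segment $s_{a+}$ of $H(t_k)$.
\item Its segments $s_{i_k}$ meet $t_{i_k}$, so hypothesis (i), which you expected to need here, holds. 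They converge to $s^*=\{0\}\times(-1,1)$, and yet $z\in s^*\cap J$.
\end{itemize}
The extraction in the style of Lemma~\ref{lemma-left-limit-exists} that you propose needs points of $J$ lying between consecutive $s_{i_j}$. A curve touching $l^*$ from the side opposite to the accumulation supplies none. In this example $z$ is exactly the midpoint of $s^*$, so your first candidate fails as well. The paper's proof, which also fixes $h_m$ through the midpoint, is exposed to the same configuration.

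\textbf{The repair.} You never need the full claim. Your final paragraph uses only one point $m$ in the middle half of $s^*$ with $m\notin J$. Such a point exists by the fact you already quote: after the rotation by $\theta^*$, $J$ contains no vertical segment of positive length, so the middle half of $s^*$ is not contained in $J$. With that $m$ your argument is complete as written:
\begin{itemize}
\item $t_{s^*}\subset B(m,\epsilon)$ misses $J$;
\item $m\in s^*\cap int(t_{s^*})$;
\item endpoint convergence gives $s_{i_j}\cap int(t_{s^*})\neq\emptyset$ for large $j$.
\end{itemize}
Neither hypothesis (i) nor the ray structure is used.

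\textbf{Comparison with the paper.} This is a genuinely different and much shorter route. The paper counts \emph{bad} segments on $h_m$ and excludes infinitely many of them by a Lemma~\ref{lemma-left-limit-exists}-type extraction of disjoint arcs together with Observation~\ref{obs-jordan-arc-limit}. It then rules out accumulation from both sides of $l^*$ with a rectilinear polygon built from the ray path. Your corrected version still yields the tunnel property that Lemma~\ref{lem-W-r} invokes: the whole stretch of $h_m$ between $s^*$ and a late $s_{i_j}$ lies in $B(m,\epsilon)$ and so avoids $J$. What it does not give is one-sided accumulation (the paper's Case II), which the proof of Lemma~\ref{lem-non-term} borrows. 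That, however, is not part of this statement.
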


\begin{figure}
\centering
\begin{tikzpicture}
\node[anchor=south west,inner sep=0] (image) at (0,0) {\includegraphics[width=9cm]{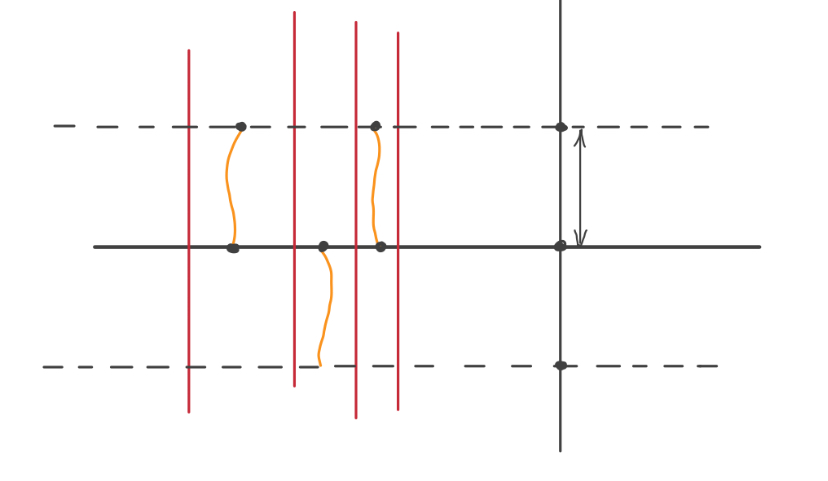}};
\node (A) at (2.7, 2.2) {$p_1$};
\node (D) at (2.8, 3) {$\gamma_1$};
\node (G) at (1.8, 1) {$s_{i_{y_1}}$};
\node (B) at (3.5, 2.8) {$p_2$};
\node (E) at (3.4, 2.1) {$\gamma_2$};
\node (C) at (4.2, 2.2) {$p_3$};
\node (F) at (4.3, 3) {$\gamma_3$};
\node (H) at (4.7, 1) {$s_{i_{y_4}}$};
\node (H1) at (4.1, 1) {$s_{i_{y_3}}$};
\node (H2) at (3.4, 1) {$s_{i_{y_2}}$};
\node (I) at (6.5, 5) {$s^*$};
\node (J) at (6.7, 3.3) {$\frac{|s^*|}{4}$};
\node (H) at (6.3, 2.3) {$m$};
\node (K) at (7.5, 2.8) {$h_m$};

%\foreach \y in {1, 2, ..., 5}
%{
%\draw (0,\y) -- (7.5,\y);
%\node at (-0.5,\y){\y};
%}

%\foreach \x in {1,2,...,7}
%{
%\draw (\x,0) -- (\x,5.5);
%\node at (\x,-0.5){\x};
%}

\end{tikzpicture}
\caption{Tunnel lemma: {\bf Case I}}
\label{fig-tunnel-lemma-1}
\end{figure}

\begin{figure}
\centering
\begin{tikzpicture}
\node[anchor=south west,inner sep=0] (image) at (0,0) {\includegraphics[width=9cm]{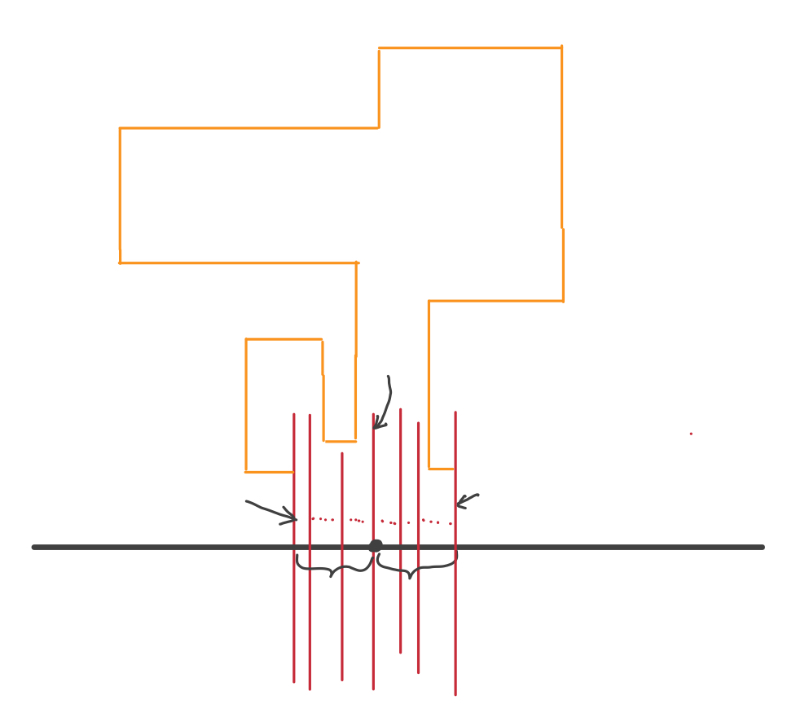}};
\node (A) at (2.6, 2.7) {$s_{i_{w'_{N_2}}}$};
\node (B) at (5.7, 2.7) {$s_{i_{w_{N_1}}}$};
\node (C) at (4.3, 4) {$s^*$};
\node (D) at (3.7, 1.3) {$t''$};
\node (E) at (4.7, 1.3) {$t'$};
\node (F) at (6.2, 8) {$Q$: path from $s_{i_{w_{N_1}}}$ to $s_{i_{w'_{N_2}}}$ in ray $r$};
\node (G) at (6.6, 2.3) {$h_m$};
\node (H) at (4.4, 2.2) {$m$};
\end{tikzpicture}
\caption{Tunnel lemma: {\bf Case II}}
\label{fig-tunnel-lemma-2}
\end{figure}

\noindent {\bf Proof:} We consider two cases:

\begin{enumerate}
\item {\bf Case I.} There exists a natural number $N$ such that for all $j > N$, $s_{i_j}$ and $s_{i_{j+1}}$
lie on the same side of the vertical line containing $s^*$.

\item {\bf Case II.} Case I does not hold.
\end{enumerate}

First, let us consider {\bf Case I}. We can extract a subsequence $s_{i_{j_1}}, s_{i_{j_2}}, \ldots$
($j_1 < j_2 < \cdots$) such that for each natural number $k \in \mathbb{N}$,  $s_{i_{j_{k+1}}}$ lies between $s_{i_{j_{k}}}$ and the vertical line containing $s^*$.

Let $m$ be the midpoint of $s^*$. Let $h_m$ be the horizontal line passing through $m$. 
Since $\lim_{k \rightarrow \infty} s_{i_{j_k}} = s^*$, there exists a natural number $N_1$ such that
for all $k > N_1$, $s_{i_{j_k}}$ intersects line $h_m$, (ii) the distance of the upper endpoint of $s_{i_{j_k}}$ from $h_m$ is at least $\frac{|s^*|}{4}$ and (iii) the distance of the lower endpoint
of $s_{i_{j_k}}$ from $h_m$ is also at least $\frac{|s^*|}{4}$. 

Let $u_{j_k}$ be the line segment of $h_m$ bounded by $s^*$ and $s_{i_{j_k}}$ on either side.
A segment $s_{i_{j_k}}$ is {\it bad} if and only if $u_{j_k}$ contains at least one point of the
Jordan curve $J$.

Suppose there are an infinite number of bad segments. Then, there exists an infinite sequence $y_1, y_2, \ldots$ of natural numbers such that (i) $y_1, y_2, \ldots$ is a subsequence of $j_1, j_2, \ldots$ and (ii) for each $w \in \mathbb{N}$, there is a point $p_w$ of the Jordan curve $J$ in the interior of the line segment of $h_m$ bounded by $s_{i_{y_w}}$ and $s_{i_{y_{w+1}}}$.

By an argument on the same lies as that used in the proof of Lemma \ref{lemma-left-limit-exists} (see Figure \ref{fig-tunnel-lemma-1}), we arrive at a contradiction. In Figure \ref{fig-tunnel-lemma-1},
$p_1, p_2, \ldots$ are points of $J$. The two dashed horizontal lines pass through points of $s^*$
at distance $\frac{|s^*|}{4}$ on either side of its midpoint $m$. Since the upper endpoint of
$s^*$ lies outside the horizontal strip formed by the two dashed horizontal lines, we can find
disjoint Jordan arcs $\gamma_1, \gamma_2, \ldots$ of $J$, as depicted in the figure. There is
an infinite subsequence $\gamma_{i_1}, \gamma_{i_2}, \ldots$ ($1 \leq i_1 < i_2 < \cdots$) such that
the two endpoints of these arcs converge to $m$ and a point at distance $\frac{|s^*|}{4}$ from $m$
on $s^*$. We arrive at a contradiction by applying Observation \ref{obs-jordan-arc-limit} to $\gamma_{i_1}, \gamma_{i_2}, \ldots$. 

Thus, there are a finite number of bad segments.

Let $N'$ be a natural number such that $s_{i_{j_{N'}}}$ is not a bad segment. Let $t'$ be the
horizontal segment $u_{j_{N'}}$. Let $N_0$ be a natural number such that for all $j > N_0$,
$s_{i_j} \cap int(t') \neq \phi$. Set $t_{s^*} = t'$ and conclude the statement of the lemma.

We now consider {\bf Case II}. In this case, there are an infinite number of segments from the sequence $s_{i_1}, s_{i_2}, \ldots$ on both sides of the vertical line containing $s^*$. 
Let $A_1$ be the subsequence consisting of all segments to the left of $s^*$ and $A_2$ be the
subsequence consisting of all segments to the right of $s^*$. Let $A_1$ be equal to $s_{i_{w_1}}, s_{i_{w_2}}, \ldots$ and $A_2$ be equal to $s_{i_{w'_1}}, s_{i_{w'_2}}, \ldots$, where $\{i_{w_j} ~| ~j \in \mathbb{N}\} \cup 
\{ i_{w'_j} ~| ~j \in \mathbb{N}\} = \mathbb{N}$ and $\cup$ denotes disjoint union.

We apply the argument of {\bf Case I} to $A_1$ to obtain a natural number $N_1$ such that for all
$s_{i_{w_j}}$, $j \geq N_1$, the portion $t'$ of $h_m$ between $s_{i_{w_j}}$ and $s^*$ has no points of $J$.
Similarly, we apply the argument of {\bf Case I} to $A_2$ to obtain a natural number $N_2$ such that for all
$s_{i_{w'_j}}$, $j \geq N_2$, the portion $t''$ of $h_m$ between $s_{i_{w'_j}}$ and $s^*$ has no points of $J$.

Without loss of generality, assume that $N_1 < N_2$. Let $Q \subset \mathbb{R}^2 - J$ be the rectilinear path traced by ray $r$ from $s_{i_{w_{N_1}}}$ to $s_{i_{w'_{N_2}}}$ (see Figure \ref{fig-tunnel-lemma-2}). 

Since $Q$ is generated by a finite path of ray $r$, by interior JCT for sweepline algorithms
with a finite number of steps, $Q$ does not contain more than one point from any
open segment. 

We next prove that $Q$ does not contain any point of $s^*$. Suppose $Q$
contains at least one point of $s^*$. Let $u$ be the portion of a horizontal segment of $Q$
such that one endpoint of $u$ belongs to $s^*$. Without loss of generality, assume that
$u$ lies to the left of the vertical line containing $s^*$. Then, there exists a natural number
$N_3$ such that for all $j \geq N_3$, $s_{i_{w_{j}}}$ intersects $u$. Let $Q' \subset \mathbb{R}^2 - J$ be the rectilinear path generated by ray $r$ from $s_{i_{w_{N_1}}}$ to $s_{i_{w_{N_4}}}$, where
$w_{N_4} > \max(w'_{N_2}, w_{N_1}, w_{N_3})$. Then, $Q'$ contains at least two points from an open segment $s_x$, for some point $x \in u$. This contradicts the fact that $Q'$ is generated by a finite path of ray $r$.

Consider the rectilinear polygon $R$ formed by $s_{i_{w_{N_1}}}$, $s_{i_{w'_{N_2}}}$, $t'$, $t''$, and the path $Q$.
Since we have assumed the truth of the Jordan curve theorem for rectilinear polygons, interior and
exterior with respect to $R$ are well-defined. 

Thus, there is an endpoint of $s^*$ in the interior of $R$ and an endpoint of $s^*$ is the 
exterior of $R$. Since the endpoints belong to $J$, the Jordan curve $J$ must 
intersect $R$. Hence, we arrive at a contradiction.
Thus, {\bf Case II} cannot occur.

Hence, the lemma is proved.
$\blacksquare$

Let $s_1, s_2, \ldots$ be an infinite sequence of open segments such that for each $i \in \mathbb{N}$, $s_i \in int(H(t_i))$.
Since all open segments lie within the bounding box $\mathcal{B}$, there exists a convergent subsequence
$s_{i_1}, s_{i_2}, \ldots$ ($i_1 < i_2 < \cdots$). Let $s^*$ be the limit of this subsequence. We say that the closure 
$cl(s^*)$ is a {\it limiting segment} of ray $r$.

Let $W(r)$ be the set of all limiting segments of ray $r$.

\begin{lemma}
\label{lem-W-r}
$|W(r)| = 1$.
\end{lemma}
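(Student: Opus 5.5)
Existence is immediate from the bounding box: any sequence of open segments $s_i$ drawn from $int(H(t_i))$ lies in $\mathcal{B}$, so, viewing a segment as a point of $\mathbb{R}^3$ (its $x$-coordinate and the $y$-coordinates of its two endpoints), Bolzano--Weierstrass gives a convergent subsequence, hence $|W(r)| \geq 1$. The content of the lemma is uniqueness. I would argue by contradiction: suppose $cl(s^*_1) \neq cl(s^*_2)$ are two limiting segments, obtained from subsequences $(s_{i_k})$ and $(s_{j_k})$ with $s_{i_k} \subset int(H(t_{i_k}))$ and $s_{j_k} \subset int(H(t_{j_k}))$.

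First I would settle the degenerate case. If at least one of the two limits, say $s^*_1$, has positive length, I would apply the Tunnel lemma to $(s_{i_k})$. This gives a horizontal segment $t_{s^*_1}$ with $int(t_{s^*_1}) \cap J = \phi$ such that $s_{i_k}$ crosses $int(t_{s^*_1})$ for all large $k$. Then the sweeps $H(t_{i_k})$ for large $k$ all meet the single fixed sweep $H(t_{s^*_1})$, whose interior is an open set bounded by the piecewise-vertical curve $K_{t_{s^*_1}}$ (Lemma \ref{JCT-horizontal-sweep}).

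The main step is to show that the ray cannot return arbitrarily close to $s^*_1$ infinitely often while also accumulating at a different segment $s^*_2$. Between consecutive indices $i_k < j_{k'} < i_{k+1}$, the finite portion of $r$ traces a rectilinear path $Q_k \subset \mathbb{R}^2 - J$. This path leaves the tunnel region around $t_{s^*_1}$, travels near $s^*_2$, and comes back. By interior JCT for finite sweepline algorithms (Lemma \ref{piecewise-vertical-JCT}, applied inductively), $Q_k$ meets each open segment at most once, and each new sweep is attached through a vertical segment lying in the exterior of the region swept so far. I would join $Q_k$ with the pieces of $t_{s^*_1}$ and $s_{i_k}, s_{i_{k+1}}$ to form a rectilinear polygon, exactly as in Case II of the Tunnel lemma, and use JCT for rectilinear polygons. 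The endpoints of $s^*_1$ lie in $J$; one of them should end up inside this polygon and the other outside, so $J$ would have to cross the polygon, which is a contradiction. Equivalently, once $s_{i_{k+1}}$ returns through $t_{s^*_1}$, the sweep $H(t_{i_{k+1}})$ would contain an open segment already swept, contradicting that each extension lies in $ext$ of the previous curve.

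If both limits have length zero, I would use Observation \ref{obs-jordan-arc-limit}. Oscillating between two distinct degenerate limits forces infinitely many disjoint arcs of $J$, namely the pieces of the boundaries $K_n$ crossed by the ray, whose endpoints converge to two different points. That observation forbids this. The arcs would be produced with rectangles as in Lemma \ref{lemma-left-limit-exists}.

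I expect the main obstacle to be the polygon construction in the non-degenerate case. Concretely, I need to check that the returning path $Q_k$ together with the tunnel segment genuinely separates the two endpoints of $s^*_1$, using only the ``no segment visited twice'' property of finite rays. I would try first to copy the Case II argument of the Tunnel lemma almost verbatim, with $s^*_2$ playing the role of the far side.
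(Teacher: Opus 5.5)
\textbf{The main step fails.} Your polygon does not separate the endpoints of $s^*_1$. The Tunnel lemma you invoke rules out exactly its Case II. So for all large $k$ the segments $s_{i_k}$ lie on one side of the vertical line $l$ containing $s^*_1$, and the piece of $t_{s^*_1}$ between $s_{i_k}$ and $s_{i_{k+1}}$ lies strictly on that side of $l$. Your polygon can therefore meet $s^*_1$ only through $Q_k$, and nothing forces $Q_k$ to cross $s^*_1$ an odd number of times. In fact the argument inside the Tunnel lemma shows $Q_k$ misses $s^*_1$ entirely: a horizontal edge of $Q_k$ reaching $s^*_1$ would be crossed by later $s_{i_j}$, so a finite portion of $r$ would visit one open segment twice. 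Hence both endpoints of $s^*_1$ lie on the same side, and no contradiction comes from them.

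Copying Case II verbatim cannot work. There the two approximating segments lie on opposite sides of $s^*$, which puts the midpoint $m$ of $s^*$ on the polygon, and that is precisely the configuration the Tunnel lemma excludes. Your ``equivalently'' remark has the same defect: the open segments over the tunnel piece between $s_{i_k}$ and $s_{i_{k+1}}$ need not have been swept by $r$, so nothing is swept twice.

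\textbf{The missing idea, which the paper uses, is a different witness: an open segment crossed by the excursion itself.}
\begin{itemize}
\item The paper fixes $N_1$ from the tunnel for $s^*_1$ and a far segment $s_{i'_{N_2}}$ near $s^*_2$ with $i'_{N_2} > i_{N_1}$.
\item It then takes $N_3$ with $i_{N_3} > i'_{N_2}$ so that $s_{i_{N_1}}$ and $s_{i_{N_3}}$ lie on the same side of the vertical line of $s_{i'_{N_2}}$, at horizontal distance at least some $\delta > 0$.
\item It trims the ray path to a subpath $Q'$ through $s_{i'_{N_2}}$ that meets $s_{i_{N_1}} \cup s_{i_{N_3}} \cup t'_1$ only at its ends. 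Closing $Q'$ with the tunnel piece $t'_1$ gives a rectilinear polygon $R$ lying in $\mathbb{R}^2 - J$.
\item It picks $p_2$ in the interior of a horizontal edge of $Q'$ whose $x$-coordinate is outside the $x$-range of $t'_1$ and differs from that of $s_{i'_{N_2}}$.
\end{itemize}
Since $Q'$ contains at most one point of any open segment, $s_{p_2}$ meets $R$ only at $p_2$. Its two endpoints, which lie in $J$, are therefore on opposite sides of $R$, so $J$ must meet $R$, which is impossible.

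The $x$-separation is essential, and it can fail when $s^*_1$ and $s^*_2$ lie on the same vertical line. The paper handles that case separately: it takes $s_{i'_{N_2}}$ disjoint from the horizontal line through $s^*_1$, and uses either $s_{i'_{N_2}}$ itself or a segment through the lowest horizontal edge of $R$ attached to it. Your plan anticipates neither the separation requirement nor this case split.

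\textbf{Degenerate case.} Your sketch is not yet an argument. The curves $K_n$ are not arcs of $J$, and no disjoint arcs of $J$ with two distinct endpoint limits are actually produced. You are right to raise this case, though: the paper's proof silently assumes positive length, since it picks points in $int(s^*_1)$ and $int(s^*_2)$ and applies the Tunnel lemma to both sequences.
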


\begin{figure}
\begin{tikzpicture}
%\begin{center}
\node at (3,3) {\includegraphics[width=9cm]{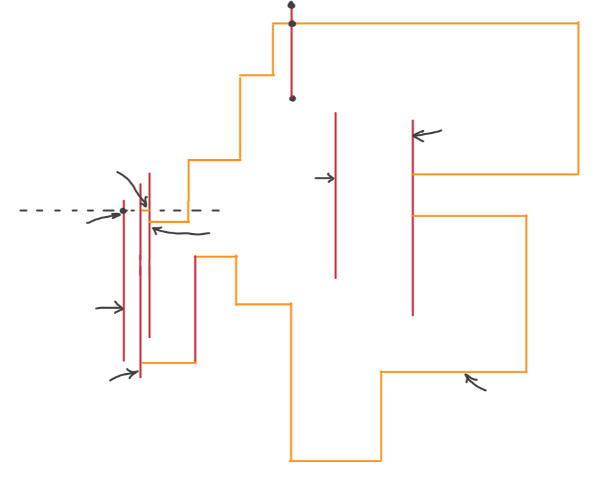}};
%\includegraphics[width=9cm]{W-r-fig1};
%\foreach \y in {1, 2, ..., 7}
%{
%\draw (0,\y) -- (7.5,\y);
%\node at (-0.5,\y){\y};
%};

%\foreach \x in {1,2,...,7}
%{
%\draw (\x,0) -- (\x,7.5);
%\node at (\x,-0.5){\x};
%};
\node (A) at (5.7, 0.7) {$Q'$};
\node (B) at (5.1, 4.6) {$s_{i'_{N_2}}$};
\node (C) at (2.8, 3.7) {$s^*_2$};
\node (D) at (1.7, 3.2) {$s_{i_{N_1}}$};
\node (E) at (-0.1, 0.8) {$s_{i_{N_3}}$};
\node (F) at (-0.3, 2) {$s^*_1$};
\node (G) at (-0.5, 3.1) {$q_1$};
\node (H) at (0, 4.1) {$t'_1$};
\node (I) at (2.5, 6.4) {$s_{p_2}$};
%\node (C) at (5.7, 0.7) {$Q'$};

\end{tikzpicture}
\caption{Geometric cases for the proof that $|W(r)|=1$.}
\label{fig1-W-r}
%\end{center}
\end{figure}

\begin{figure}
%\begin{center}
\begin{tikzpicture}

\node at (3,3) {\includegraphics[width=9cm]{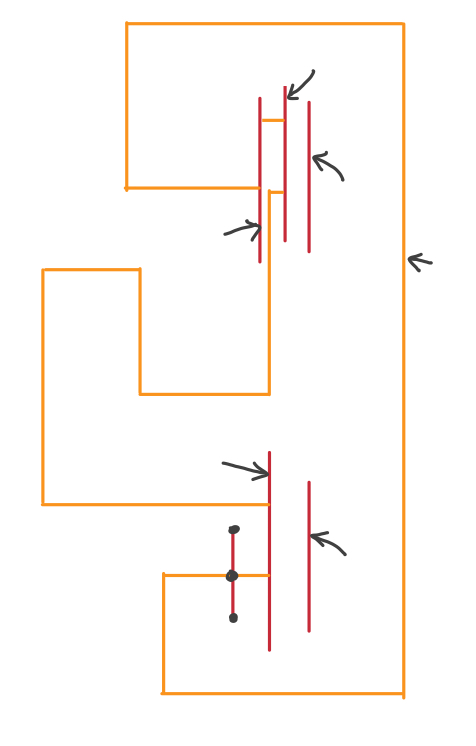}};

%\includegraphics[width=6cm]{W-r-fig2}
%\includegraphics[width=9cm]{W-r-fig1};
%\foreach \y in {-1, 0, 1, 2, ..., 7, 8, 9}
%{
%\draw (0,\y) -- (7.5,\y);
%\node at (-0.5,\y){\y};
%};

%\foreach \x in {1,2,...,7}
%{
%\draw (\x,-1) -- (\x,9.5);
%\node at (\x,-0.5){\x};
%};
\node (A) at (5.1, -0.8) {$s^*_2$};
\node (B) at (2.5, 1.5) {$s_{i'_{N_2}}$};
\node (C) at (1.9, -0.6) {$u$};
\node (D) at (2.6, -0.6) {$s_{p_2}$};
\node (E) at (6.8, 5) {$Q'$};
\node (F) at (5.1, 6.4) {$s^*_1$};
\node (G) at (4.6, 8.9) {$s_{i_{N_3}}$};
\node (H) at (2.5, 5.5) {$s_{i_{N_1}}$};
%\node (A) at (5.7, 0.7) {$Q'$};

\end{tikzpicture}
\caption{Geometric cases for the proof that $|W(r)|=1$ (contd.)}
\label{fig2-W-r}
%\end{center}
\end{figure}

\noindent {\bf Proof:} The main argument of the proof is on the same lines as the impossibility of {\bf Case II}
in the proof of tunnel lemma above. 

Clearly, $|W(r)| \geq 1$. Suppose, for the sake of contradiction, that $|W(r)| > 1$. Let $s^*_1$ and 
$s^*_2$ be two distinct limiting open segments in $W(r)$. 

Let $A_1 = s_{i_1}, s_{i_2}, \ldots$ ($i_1 < i_2 < \cdots$ and $\forall k \in \mathbb{N}, ~s_{i_k} \cap t_{i_k} \neq \phi$) be a sequence of open segments converging to $s^*_1$. Let $A_2 = s_{i'_1}, s_{i'_2}, \ldots$ ($i'_1 < i'_2 < \cdots$ and $\forall k \in \mathbb{N}, ~s_{i'_k} \cap t_{i'_k} \neq \phi$) be a sequence of open segments converging to $s^*_2$.

Let $q_1$ be an arbitrary point in $int(s^*_1)$ and $q_2$ be an arbitrary point in $int(s^*_2)$. Let $h_1$ and
$h_2$ be the horizontal lines through $q_1$ and $q_2$, respectively.
We apply tunnel lemma to $A_1$ to obtain a natural number $N_1$ such that for all
$s_{i_j}$, $j \geq N_1$, the portion $t'$ of $h_1$ between $s_{i_j}$ and $s^*_1$ has no points of $J$ (though the tunnel lemma has been proved for $h_m$, the proof also applies to line $h_1$ through any point $q_1 \in int(s^*_1)$).
Similarly, we apply the tunnel lemma to $A_2$ to obtain a natural number $N_2$ such that for all
$s_{i'_j}$, $j \geq N_2$, the portion $t''$ of $h_2$ between $s_{i'_j}$ and $s^*_2$ has no points of $J$.

Without loss of generality, assume that $i_{N_1} < i'_{N_2}$. Let $N_3$ be a natural number such that (i) $i_{N_3} > i'_{N_2}$ and (ii) there exists a real number $\delta > 0$ such that the $x$-coordinate of any
point of $s_{i'_{N_2}}$ and the $x$-coordinate of any point of $s_{i_{N_1}} \cup s_{i_{N_3}}$ differ in absolute value by at least $\delta$ (both $s_{i_{N_1}}$ and $s_{i_{N_3}}$ lie on the same side of the vertical line containing $s_{i'_{N_2}}$. Clearly, $N_3 > N_1$.

We consider two different geometric cases: (i) $s^*_1$ and $s^*_2$ are on different vertical lines (see Figure \ref{fig1-W-r}), and
(ii) $s^*_1$ and $s^*_2$ are on the same vertical line (see Figure \ref{fig2-W-r}).

Let $Q \subset \mathbb{R}^2 - J$ be the rectilinear path $Q$ given by ray $r$ from $s_{i_{N_1}}$ to $s_{i_{N_3}}$. $Q$ also has a point of open segment $s_{i'_{N_2}}$. Note that $Q$ does not have two points from any open segment. Let $t'_1$ be the portion of $t'$ between $s_{i_{N_1}}$ and $s_{i_{N_3}}$. 

We start from the point $p_1 \in Q \cap s_{i'_{N_2}}$ and walk outwards on $Q$ in both directions till 
we either reach $s_{i_{N_1}}$ or $s_{i_{N_3}}$, or we reach a point of the horizontal segment $t'_1$.
Let $Q'$ be the subpath of $Q$ obtained as a result of this walk. We construct a rectilinear polygon $R \subset \mathbb{R}^2 - J$ using $t'_1$, $s_{i_{N_1}}$, $s_{i_{N_3}}$ and the rectilinear path $Q'$.

Let $p_2$ be a point in the interior of a horizontal segment of $Q'$ such that $x$-coordinate of $p_2$ is
different from any point in $t'_1 \cup \{ s_{i'_{N_2}} \}$. Then, the open segment $s_{p_2}$ cannot cross
rectangle $R$ (if $s_{p_2}$ crosses $R$, it must cross it at a point $p'_2$ of $Q'$ and then $Q'$ will
contain two points from the same open segment, a contradiction). 

Thus, there is an endpoint of $s_{p_2}$ in the interior of $R$ and an endpoint of $s_{p_2}$ in the exterior of $R$ (the interior and exterior of $R$ exist because of JCT for rectilinear polygons). We conclude that $J$ crosses $R$, and hence arrive at a contradiction.

Let us now consider case (ii). Without loss of generality, assume that $s^*_2$ lies below $s^*_1$, on the
common vertical line containing both of these open segments. In addition to ensuring the above properties, 
we choose $N_1$, $N_2$ and $N_3$ such that $s_{i'_{N_2}}$ does not intersect the horizontal line $h_1$.

The construction of $R$ is the same as case (i). If $R$ crosses $s_{i'_{N_2}}$, one endpoint of $s_{i'_{N_2}}$
will lie in the interior of $R$ and the other endpoint will lie in the exterior of $R$. On the other hand, if $R$ 
does not cross $s_{i'_{N_2}}$, let $u$ be the horizontal segment of $R$ with the lowest $y$-coordinate
among all horizontal segments of $R$ with one endpoint at $s_{i'_{N_2}}$. Then, one can find a point
$p_2 \in int(u)$ such that one endpoint of $s_{p_2}$ lies in the interior of $R$ and the other endpoint
lies in the exterior of $R$. Thus, in both cases, the Jordan curve $J$ will intersect $R$, and we again
arrive at a contradiction.

Hence, the lemma is proved. $\blacksquare$

We denote the unique limiting segment in $W(r)$ by $s^*_r$.

\begin{definition}
{\bf Terminating and non-terminating infinite rays.} Ray $r$ is a terminating ray iff $s^*_r$ has length $0$ (i.e.,
$cl(s^*_r)$ consists of a single point of the Jordan curve $J$). Ray $r$ is a non-terminating ray iff $s^*_r$ has 
positive length.
\end{definition}

\begin{definition}
{\bf Interior of ray $r$.} The interior $int(r)$ of ray $r$ is equal to $\cup_{i=2}^{\infty} int(E_i)$.
\end{definition}

\begin{definition}
{\bf Relation $\sim$ on rays of horizontal sweeps.} A ray $r_1$ is related to ray $r_2$ ($r_1 \sim r_2$) iff 
$int(r_1) = int(r_2)$.
\end{definition}

\begin{observation}
$\sim$ is an equivalence relation on the set of all rays of horizontal sweeps.
\end{observation}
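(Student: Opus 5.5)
The plan is to reduce the statement to the corresponding properties of equality of sets. By definition, $r_1 \sim r_2$ holds exactly when $int(r_1) = int(r_2)$, where $int(r) = \cup_{i=2}^{\infty} int(E_i)$ is a well-defined subset of $\mathbb{R}^2$ attached to each ray $r$. So $\sim$ is the pullback of equality along the map $r \mapsto int(r)$. Any relation defined by ``$f(x) = f(y)$'' for a function $f$ is an equivalence relation, so the work is to check that $f$ really is a function and then verify the three axioms.

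First I will check that $int(r)$ is determined by $r$ alone. A ray is a fixed sequence $H(t_1), H(t_2), \ldots$. By the note after the definition of an infinite ray, which rests on Lemma \ref{JCT-horizontal-sweep} and Lemma \ref{piecewise-vertical-JCT}, each extension $E_i$ and its interior $int(E_i)$ are determined recursively: $int(E_i) = int(E_{i-1}) \cup int(H(t_i)) \cup s_{p_i}$, where $p_i$ is the left endpoint of $t_i$ lying on a vertical segment of the previous boundary. The one point to confirm is that $s_{p_i}$ depends only on $p_i$, and not on any choice of the parametrizations $\phi_K$, $\phi_{t'}$, $\omega$. Those parametrizations were used only to describe the curve $K'$ as the image of a map, and never to define the interior set. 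Hence $int(r)$ is a well-defined subset of the plane.

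With that in hand, the axioms follow in order:
\begin{enumerate}
\item Reflexivity: $int(r) = int(r)$.
\item Symmetry: $int(r_1) = int(r_2)$ implies $int(r_2) = int(r_1)$.
\item Transitivity: $int(r_1) = int(r_2)$ and $int(r_2) = int(r_3)$ give $int(r_1) = int(r_3)$.
\end{enumerate}

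I expect no real obstacle here. The only subtle step is the well-definedness just described, namely making sure the recursive construction of $int(E_i)$ involves no arbitrary choices that could make $int(r)$ ambiguous. Since the construction involves no such choices, the observation follows directly.
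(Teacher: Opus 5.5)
Your proposal is correct and matches the paper's approach: the paper states this observation without proof, treating it as immediate because $\sim$ is defined by equality of the sets $int(r)$, which is exactly your reduction. Your extra check that $int(r)$ does not depend on the parametrizations $\phi_K$, $\phi_{t'}$, $\omega$ is sound, though the paper does not spell it out; note also that the extension point $p_i$ need not be the \emph{left} endpoint of $t_i$ (the paper only assumes this without loss of generality), which does not affect your argument.
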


We denote the equivalence class of ray $r$ by $[r]$.

\begin{lemma}
\label{lem-non-term}
Let $r$ be a non-terminating infinite ray of horizontal sweeps. Then, there exists a finite ray $r'$ of horizontal sweeps
such that $r \sim r'$ (i.e., $r' \in [r]$).
\end{lemma}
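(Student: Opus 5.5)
The plan is to use the tunnel lemma to produce a single horizontal sweep that eventually "captures" the tail of the ray, and then to show that everything the ray sweeps after that point is already swept by a finite ray ending in that sweep.

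\textbf{Step 1: a capturing sweep.} Let $r = H(t_1), H(t_2), \ldots$ be non-terminating, so $s^*_r$ has positive length. For each $i$ pick an open segment $s_i$ meeting $int(t_i)$. By Lemma \ref{lem-W-r}, every convergent subsequence of $(s_i)$ has limit $s^*_r$. Since all $s_i$ lie in $\mathcal{B}$, the whole sequence converges to $s^*_r$. The tunnel lemma then gives $N_0$ and a horizontal segment $t^* = t_{s^*}$ such that $s^*_r \cap int(t^*) \neq \phi$ and $s_j \cap int(t^*) \neq \phi$ for all $j > N_0$.

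I expect to need a stronger version: for all large $j$, every open segment of $H(t_j)$ (not just the chosen $s_j$) meets $int(t^*)$. I would get this by contradiction. Otherwise, choose bad segments $s'_j \subset H(t_j)$ along a subsequence. By Lemma \ref{lem-W-r} these also converge to $s^*_r$, and applying the tunnel-lemma argument to them gives a tunnel segment on the same horizontal line $h_m$. That contradicts badness once $t^*$ is chosen maximal in $h_m - J$ around $s^*_r$. So I will take $t^*$ to be the maximal open subinterval of $h_m - J$ containing $s^*_r \cap h_m$.

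\textbf{Step 2: the finite ray.} Fix $N > N_0$ so that the above holds for all $j \geq N$. Every open segment of $H(t_j)$ with $j \geq N$ meets $int(t^*)$, so it is an open segment of $H(t^*)$. Hence $\bigcup_{j \geq N} int(H(t_j)) \subseteq int(H(t^*)) \cup (\text{vertical boundary pieces})$. By the rectilinear-path characterization in part (2) of Lemma \ref{piecewise-vertical-JCT}, $H(t^*)$ is reachable from $E_N$.

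Then I define $r'$ as $H(t_1), \ldots, H(t_N)$, followed by finitely many extensions that sweep $H(t^*)$. Concretely, $t^*$ meets $s_N$, so I split $t^*$ at its crossing with $s_N$ into a left and a right piece. Each piece starts on a vertical segment of $bd(H(t_N))$ or lies inside $int(E_N)$, and I use these pieces as at most two further extensions. By Lemma \ref{piecewise-vertical-JCT}, each step is a legal extension with interior JCT.

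\textbf{Step 3: equality of interiors.} To show $int(r') \subseteq int(r)$: $int(H(t^*))$ consists of open segments through $t^*$. Each such segment through a point of $int(t^*)$ is, by the convergence $s_j \to s^*_r$, hit by some $s_j$ or by a segment of a later sweep. A cleaner route is to show each point of $t^*$ lies in some $int(E_j)$, using the fact that the ray's path $Q$ from the tunnel lemma crosses $h_m$ arbitrarily close to $s^*_r$ from both sides. The reverse inclusion $int(r) \subseteq int(r')$ follows from Step 2.

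\textbf{Main obstacle.} The hardest part is the strengthened capture in Step 1, together with showing that the extensions along $t^*$ are legal, i.e.\ that $t^* - \{p\} \subset ext(E_N)$. Part of $t^*$ may already lie inside $int(E_N)$. I would handle this by replacing $t^*$ with the sub-segment of $t^*$ lying in $cl(ext(E_N))$ and starting on a vertical segment of $K_N$. These are at most two components near $s^*_r$, by the polygon-crossing argument from the proof of Lemma \ref{lem-W-r}.
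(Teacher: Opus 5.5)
Your Step 1 is sound. It does the work of the paper's central claim: past some index $M$, every open segment the ray produces meets the tunnel segment. Re-applying the tunnel lemma to the bad segments against a maximal $t^*$ is a tidy substitute for the paper's case split.

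The genuine gap is in Steps 2--3: the sweep you append covers too much. Because $t^*$ is the maximal component of $h_m - J$ through $m$, the midpoint $m$ of $s^*_r$ lies in $int(t^*)$. Hence $m \in int(H(t^*))$, as is every open segment crossing $t^*$ on the far side of $s^*_r$.

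But no interior point $y$ of $s^*_r$ lies in $int(r)$. Suppose $y \in int(E_j)$, which is open and a union of open segments. For large $k$ the segment $s_{i_k}$, which converges to $s^*_r$, meets a neighbourhood of $y$ inside $int(E_j)$, so $s_{i_k} \subset int(E_j)$. Yet $s_{i_k}$ meets $int(t_{i_k}) \subset ext(E_{i_k-1}) \subseteq ext(E_j)$, a contradiction.

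So $s^*_r$ is a vertical segment of $K_r$ that $r$ never sweeps. Your inclusion $int(r') \subseteq int(r)$ therefore fails, and your ``cleaner route'' (every point of $t^*$ lies in some $int(E_j)$) is false at $m$. The piece of $t^*$ on the other side of $s_N$ has the same defect, since nothing forces $r$ to sweep it. Also, appending two pieces as separate extensions produces a branching tree, not a ray. Note that the paper's tunnel segment is one-sided by construction, running from some $s_{i_k}$ up to $s^*$; making it maximal is exactly what breaks your construction.

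The missing idea is to truncate, as the paper does. The last sweep should run along the part $t'$ of the tunnel segment from a late segment $s_{i_{N'}}$ to $s^*_r$, stopping at $m$. Then $s^*_r$ becomes the limiting boundary segment of $H(t')$ instead of being swept.

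Once you truncate, your capture statement (``meets $int(t^*)$'') no longer gives $int(r) \subseteq int(r')$. You need every segment of a sweep after $i_{N'}$, outside a finite prefix $E_M$, to cross $h_m$ between $s_{i_{N'}}$ and $s^*_r$. This needs the one-sidedness from Case II of the tunnel lemma, which the paper uses in case (i) of its argument for $M$. It also needs a crossing argument ruling out crossings beyond $s_{i_{N'}}$.

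Conversely, $int(H(t')) \subseteq int(r)$ needs every open segment crossing $t'$ to meet the ray's path. Otherwise that segment crosses exactly once the $J$-free rectilinear polygon formed by a piece of $t'$ and the ray's path. Its two endpoints, both on $J$, would then be separated by a polygon that $J$ cannot cross, which is impossible.
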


\begin{figure}
%\begin{center}
\begin{tikzpicture}
\node at (3,3) {\includegraphics[width=9cm]{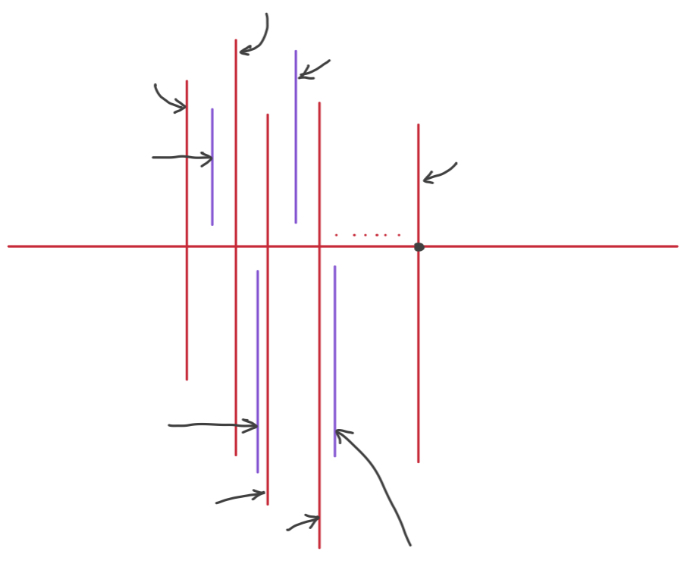}};

%\includegraphics[width=9cm]{fig-case-ii}
%\foreach \y in {-1, 0, 1, 2, ..., 7, 8, 9}
%{
%\draw (0,\y) -- (7.5,\y);
%\node at (-0.5,\y){\y};
%};

%\foreach \x in {1,2,...,7}
%{
%\draw (\x,-1) -- (\x,9.5);
%\node at (\x,-0.5){\x};
%};
\node (A) at (6, 3.8) {$t_{s^*_r}$};
\node (B) at (4.1, 3.3) {$m$};
\node (C) at (4.6, 4.5) {$s^*_r$};
\node (D) at (0.3, 5.5) {$s_{i_N}$};
\node (E) at (2.1, 6.6) {$s_{i_{N+1}}$};
\node (F) at (1, 0) {$s_{i_{N+2}}$};
\node (G) at (2, -0.2) {$s_{i_{N+3}}$};
\node (H) at (0.3, 4.5) {$u_{i'_{j_1}}$};
\node (I) at (0.3, 1.2) {$u_{i'_{j_2}}$};
\node (J) at (2.8, 6) {$u_{i'_{j_3}}$};
\node (K) at (4, -0.8) {$u_{i'_{j_4}}$};
%\node (A) at (5.1, -0.8) {$s^*_2$};
\end{tikzpicture}
\caption{Proof that $u^* \neq s^*_r$, for case (ii). An infinite subsequence of $u_{i'_{j_1}}$, $u_{i'_{j_2}}$,
$\ldots$ lies above (or below) $t_{s^*_r}$ and hence $u^*$ lies above (or below) the horizontal line $h_m$.}
\label{fig-case-ii}
%\end{center}
\end{figure}

\noindent {\bf Proof:} Since $s^*_r \in W(r)$, there exists an infinite sequence
$s_{i_1}, s_{i_2}, \ldots$ ($i_1 < i_2 < \cdots$) of open segments converging to $s^*_r$ such that
for each $k \in \mathbb{N}$, $s_{i_k} \cap t_{i_k} \neq \phi$.

Applying the tunnel lemma to this convergent sequence, we obtain a natural number $N$
and a horizontal segment $t_{s^*_r}$ such that (i) $s^*_r \cap t_{s^*_r} \neq \phi$ and (ii)
for all natural number $k \geq N$, $s_{i_k} \cap t_{s^*_r} \neq \phi$.

Let $U$ be the set of all open segments $u$ in $int(r)$ such that (i) $u$ is generated by the sweepline algorithm corresponding to ray $r$ after
$s_{i_N}$ and (ii) $u$ does not intersect $t_{s^*_r}$. 

We prove that there exists a natural number $M > 0$ such that each $u \in U$ belongs to
$int(E_M)$, where $E_M = (H(t_1), H(t_2), \ldots, H(t_M))$. Suppose, for the sake of contradiction, this is not true. Then, there exists an
infinite sequence $A = u_{i'_1}, u_{i'_2}, \ldots$ ($i'_1 < i'_2 < \cdots$) of open segments in $U$ 
such that for each $k \in \mathbb{N}$, $u_{i'_k} \in int(H(t_{i'_k}))$. 
Since all segments of sequence $A$ lie within the bounding box $\mathcal{B}$, this sequence
has a convergent subsequence $A' = u_{i'_{j_1}}, u_{i'_{j_2}}, \ldots$, where $1 \leq j_1 < j_2 < \cdots$. Let $u^*$ be the limit of this sequence. By definition, $u^* \in W(r)$.
If $u^* \neq s^*_r$, $|W(r)| > 1$, and we arrive at a contradiction. 

Now assume that $u^* = s^*_r$. There are two cases: (i) an infinite number of open segments in sequence $A'$ lie on the opposite side of $s^*_r$ to the open segments in sequence $s_{i_N}, s_{i_{N+1}}, \ldots$, or (ii) only a finite number of open segments in $A'$ lie on the opposite side of $s^*_r$ to the open segments in sequence $s_{i_N}, s_{i_{N+1}}, \ldots$. Case (i) is not possible; the proof is similar to the proof of impossibility of {\bf Case II} of tunnel lemma above. In Case (ii), since none of the open segments in sequence $A'$ intersects $t_{s^*_r}$, we conclude that $u^* \neq s^*_r$ (see Figure \ref{fig-case-ii}). Thus, $|W(r)| > 1$ and we again arrive at a contradiction.

Let $N'$ be a natural number such that $N' > N$ and for all $k \geq N'$, $i_k > M$.
Construct a new ray $r'$ by replacing the portion of ray $r$ generated after open segment
$s_{i_{N'}}$ by the horizontal sweep $H(t')$, where $t'$ is the portion of $t_{s^*_r}$ between
$s_{i_{N'}}$ and $s^*_r$. Thus, $r' \sim r$ and $r$ is a finite ray. Hence, the lemma is proved.
$\blacksquare$

\subsubsection{Interior JCT for rays}

{\bf Construction of $K_r$.} Let $\phi_{t_1}, \phi_{t_2}, \ldots$ be the one-to-one and continuous functions 
from $\mathbb{S}^1$ to $\mathbb{R}^2$, corresponding to Jordan curves $K_{t_1}, K_{t_2}, \ldots$, respectively (for each $i \in \mathbb{N}$, $K_{t_i}$ is the boundary of $H(t_i)$). Let $\phi_1 = \phi_{t_1}$, $\phi_2$, $\ldots$ be
the one-to-one and continuous functions from $\mathbb{S}^1$ to $\mathbb{R}^2$, corresponding
to Jordan curves $K_{t_1}$, $K_2$, $\ldots$. For each $j \geq 2$, $K_j$ is obtained
from $K_{j-1}$ by the extension of a piecewise-vertical Jordan curve (see the construction
of $K'$ in the previous section). Thus, for each $j \geq 2$, there exist intervals
$I_j, I'_j \in \mathbb{S}^1$ and a continuous bijection $\omega_j: I_j \rightarrow I'_j$ such that (i) for $z \in I_j$, $\phi_j(z) = \phi_{t_j}(\omega_j(z)))$ and (ii) for $z \notin I_j$,
$\phi_j(z) = \phi_{j-1}(z)$.

Further, by the definition of ray $r$, $I_1 \supset I_2 \supset I_3 \supset \cdots$. By Cantor's intersection theorem, the nested sequence of closed intervals $cl(I_1) \supset cl(I_2) \supset cl(I_3) \supset \cdots$
has a non-empty intersection. Let $I^* = \cap_{j=1}^{\infty} cl(I_j)$.

Note that $r$ is a terminating ray if and only if $I^*$ consists of a single point. 

If $r$ is a non-terminating ray, there is a finite ray $r'$ such that $r \sim r'$. A finite number of
applications of interior JCT for a piecewise-vertical Jordan curve (see Lemma \ref{interior-jct-horizontal-sweep}) established interior JCT for $r'$. Let $\phi_{r'}: \mathbb{S}^1 \rightarrow \mathbb{R}^2$ be the one-to-one and continuous function defining the boundary Jordan curve of $int(r')$. We define $\phi_r(z)$ as equal to $\phi_{r'}(z)$.

On the other hand, if $r$ is a terminating ray, $K_{r}$ (given by map $\phi_r$) is defined as the limit of $K_{t_1}, K_2, \ldots$ as follows. Take any point $z \in \mathbb{S}^1$. If $z \notin I^*$, there exists a natural number $N_z > 0$ such that $\phi_{i_1}(z) = \phi_{i_2}(z)$ for all $i_1, i_2 \geq N_z$. In this case, we define $\phi_r(z) = \phi_{N_z}(z)$. Now consider the case when $z \in I^*$. Since $r$ is a terminating ray, $I^*$ consists of a single point. We set $\phi(z) = s^*_r$, as the limiting segment $s^*_r$ consists of a single point.

We now prove the following:

\begin{observation}
\label{obs-phi-r}
$K_r$ is a Jordan curve.
\end{observation}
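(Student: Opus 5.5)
To show that $K_r$ is a Jordan curve, I would prove that $\phi_r:\mathbb{S}^1\rightarrow\mathbb{R}^2$ is continuous and one-to-one. Since $\mathbb{S}^1$ is compact, $\phi_r$ is then an embedding and its image is a Jordan curve. The non-terminating case is immediate. By Lemma \ref{lem-non-term} there is a finite ray $r'\sim r$, and $\phi_r=\phi_{r'}$ is the boundary map produced by finitely many applications of Lemma \ref{piecewise-vertical-JCT} (Observation \ref{obs-K'} at each step). So all the work is in the terminating case. There $I^*=\{z^*\}$ is a single point, $\phi_r(z)=\phi_{N_z}(z)$ for $z\neq z^*$, and $\phi_r(z^*)$ is the single point $s^*_r\in J$.

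\textbf{Injectivity.} Take $z_1\neq z_2$, both different from $z^*$. For $N\ge\max(N_{z_1},N_{z_2})$ we have $\phi_r(z_i)=\phi_N(z_i)$, and $\phi_N$ is injective by Observation \ref{obs-K'}, so $\phi_r(z_1)\neq\phi_r(z_2)$.

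For $z_2=z^*$ I would argue by contradiction. Suppose $\phi_r(z_1)=s^*_r$. Fix $N\ge N_{z_1}$; then $z_1\notin cl(I_N)$ for $N$ large, because the closed intervals $cl(I_j)$ shrink to $z^*$. The point $s^*_r$ is a limit of open segments $s_{i_k}$ swept after stage $N$, so these segments lie in $int(E_{i_k})\setminus cl(int(E_N))$ together with the portion of $K_N$ replaced from stage $N$ onward. That portion is $\phi_N(cl(I_N))$, a closed arc. Hence $s^*_r\in\phi_N(cl(I_N))$, the closure of the region added after stage $N$. Since $\phi_r(z_1)=\phi_N(z_1)$ with $z_1\notin cl(I_N)$, injectivity of $\phi_N$ gives $s^*_r\neq\phi_N(z_1)$, a contradiction.

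To make this rigorous I must check one thing: every point swept after stage $N$ lies on the $ext(K_N)$ side, with closure meeting $K_N$ only in $\phi_N(cl(I_N))$. This is exactly the nesting $I_1\supset I_2\supset\cdots$ together with the definition of extension, where $t'-\{p\}\subset ext(K)$.

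\textbf{Continuity.} At $z\neq z^*$, choose $N$ with $z\notin cl(I_N)$. On a neighbourhood of $z$ disjoint from $cl(I_N)$ the maps $\phi_j$ agree with $\phi_N$ for all $j\ge N$, so $\phi_r=\phi_N$ there and $\phi_r$ is continuous at $z$.

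The main obstacle is continuity at $z^*$. Given $\epsilon>0$, I want some $N$ such that $\phi_N(cl(I_N))$, and hence every later replacement curve, lies in $B(s^*_r,\epsilon)$. Then every $z\in cl(I_N)$ has $\phi_r(z)\in cl(B(s^*_r,\epsilon))$, since $\phi_r(z)$ equals some $\phi_M(z)$ with $M\ge N$, and $\phi_M(cl(I_N))$ lies in the closure of the region added after stage $N$.

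So I need the diameter of the region swept after stage $N$ to tend to $0$. I would prove this by contradiction, using the uniqueness of the limiting segment (Lemma \ref{lem-W-r}). Suppose infinitely many stages contain points at distance at least $\epsilon$ from $s^*_r$. Then, since each $int(E_j)$ is a union of open segments, there are open segments in $int(H(t_{i}))$ with $i\to\infty$ containing such points. By Bolzano--Weierstrass in $\mathcal{B}$, a subsequence of these segments converges to a limiting segment of $r$ that reaches distance at least $\epsilon$ from $s^*_r$. This contradicts $W(r)=\{s^*_r\}$ with $s^*_r$ a single point.

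A subtlety is that the boundary curve $\phi_N(cl(I_N))$ includes vertical boundary segments and points of $J$, not only swept segments. These are handled because every boundary point is a limit of points of the swept open segments, which gives the diameter bound for the closure. With this, $\phi_r$ is continuous at $z^*$, and $K_r$ is a Jordan curve.
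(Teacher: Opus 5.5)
Your treatment of the non-terminating case is fine and follows the paper. So are injectivity for $z_1,z_2\neq z^*$ and continuity at $z\neq z^*$. Your diameter argument at $z^*$ is the paper's use of $|W(r)|=1$ in slightly cleaner form.

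The gap is injectivity when one point is $z^*$. The step ``hence $s^*_r\in\phi_N(cl(I_N))$'' does not follow from what you cite. Consider the nesting $I_{j+1}\subset I_j$, the condition $t'-\{p\}\subset ext(K)$, and injectivity of each finite $\phi_M$. Together they only show that, for each finite $M>N$, the closure of the sweeps of stages $N+1,\dots,M$ meets $K_N$ inside $\phi_N(cl(I_N))$. By your own diameter argument, the closure of the infinite union adds at most one further point, the accumulation point $s^*_r$. The finite stages say nothing about where that point sits. Near $\phi_N(z_1)$, which is at positive distance from $\phi_N(cl(I_N))$, the later sweeps lie in $ext(K_N)$. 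So you only get $s^*_r\in cl(ext(K_N))$, and for an honest Jordan curve that closure contains all of $K_N$. The argument therefore assumes exactly what injectivity at $z^*$ asks you to prove.

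A concrete picture shows why something more is needed. Let each $K_M$ be obtained from the previous curve by lengthening a finger. The finger grows into $ext(K_N)$ from $\phi_N(I_N)$, and its tip converges to a point $\phi_N(z_1)$ with $z_1\notin I_N$. Every property you invoke holds:
\begin{itemize}
\item each $\phi_M$ is injective;
\item $\phi_M=\phi_N$ off $I_N$;
\item new material lies in $ext(K_N)$;
\item later stages have diameter tending to $0$.
\end{itemize}
Yet the limit curve is pinched at $\phi_N(z_1)$.

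Excluding this pinch requires using $J$ itself, and that is the content of the paper's proof of this case. There $\phi_r(z_1)$ is an endpoint of an open segment $s_p$. The ray supplies a rectilinear path $Q\subset\mathbb{R}^2-J$, with infinitely many turns, from $s_p$ to $s^*_r$. The paper picks a point $x$ on the first horizontal edge of $Q$ and small squares $S_\epsilon$ centred at $s^*_r$. It then forms the rectilinear polygon $R_\epsilon$ from $s_p$, the part of $Q$ up to its first hit of $S_\epsilon$, and the boundary of $S_\epsilon$.

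The segment $s_x$ meets $R_\epsilon$ only at $x$, so JCT for rectilinear polygons separates its endpoints $a_x$ and $b_x$. Hence both complementary arcs of $J$ from $a_x$ to $b_x$ must cross $R_\epsilon$. Since $s_p$ and $Q$ avoid $J$, the crossings lie within $S_\epsilon$. Letting $\epsilon\to 0$, both arcs contain $s^*_r$, which contradicts injectivity of $\psi$. Some separation argument of this kind, forcing $J$ through $s^*_r$ twice, is the missing idea in your proposal.
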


\begin{figure}
%\begin{center}
\begin{tikzpicture}
\node at (3,3) {\includegraphics[width=9cm]{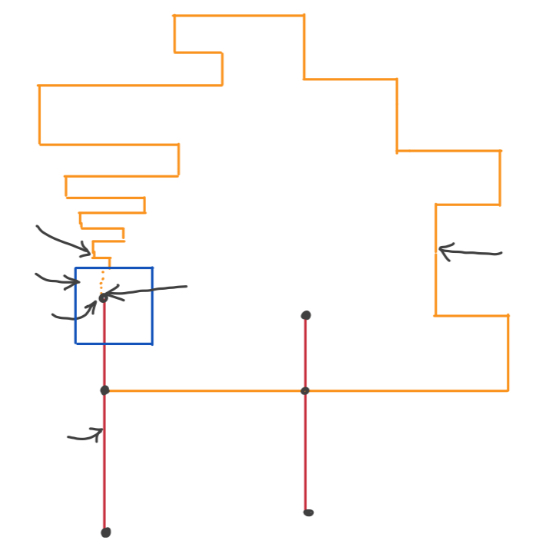}};
%\includegraphics[width=9cm]{fig-inf-rectilinear-polygon}
%\foreach \y in {-1, 0, 1, 2, ..., 7, 8, 9}
%{
%\draw (-2,\y) -- (7.5,\y);
%\node at (-0.5,\y){\y};
%};

%\foreach \x in {-2, -1, 0, 1,2,...,7}
%{
%\draw (\x,-1) -- (\x,9.5);
%\node at (\x,-0.5){\x};
%};
\node (A) at (6.9, 3.5) {$Q$};
\node (B) at (3.8, 1.3) {$x$};
\node (C) at (3.8, 2.5) {$a_x$};
\node (D) at (3.8, -1) {$b_x$};
\node (E) at (-0.4, 0.5) {$s_p$};
\node (F) at (-1.1, 3.1) {$S_{\epsilon}$};
\node (G) at (-1.2, 2.4) {$\phi_r(z_1)$};
\node (H) at (1.8, 3) {$\phi_r(z_2)$};
\node (I) at (-2.5, 4) {infinite number of turns};

\end{tikzpicture}

\caption{$\phi_r$ is a one-to-one function: $r$ is a terminating ray}
\label{phi-r-one-to-one}
%\end{center}
\end{figure}

\noindent {\bf Proof:} We consider the case when $r$ is a terminating ray, as the case of a
non-terminating ray follows from interior JCT for finite rays.

\begin{enumerate}
\item {\it $\phi_{r}$ is well-defined.} Follows from the definition of function $\phi_r$ above.

\item {\it $\phi_{r}$ is a continuous function.} Let $z \notin I^*$. The continuity of $\phi_r$ at
$z$ follows from the continuity of $\phi_{N'_z}$ at point $z$, where $N'_z$ is the smallest
natural number such that $cl(I_{N'_z}) \cap z = \phi$.

Now consider the case when $z \in I^*$ (note that for a terminating ray $r$, $I^*$ consists of a single point). Let $z_1, z_2, \ldots$ be a sequence of points on the unit circle $\mathbb{S}^1$ such that
$\lim_{i \rightarrow \infty} z_i = z$. Since $I_1 \supset I_2 \supset \cdots$ and $\cup_{j=1}^{\infty} cl(I_j) = I^*$, there exists an infinite subsequence $z_{i_1}, z_{i_2}, \ldots$ ($1 \leq i_1 < i_2 < \cdots$) such that
for all natural numbers $k$, $z_{w} \in I_k$ for all natural numbers $w \geq i_k$. In other words,
for all natural numbers $k$, $\phi(z_{w}) \in \cup_{j=k}^{\infty} cl(H(t_j))$ for all natural numbers $w \geq i_k$. Thus, there exists an infinite subsequence $z_{i_{j_1}}, z_{i_{j_2}}, \ldots$ ($j_1 < j_2 < \cdots$)
such that for each $k \in \mathbb{N}$, $z_{i_{j_k}} \in cl(H(t_{u_k}))$, where $u_1 < u_2 < \cdots$
Thus, $\lim_{k \rightarrow \infty} \phi_r(z_{i_{j_k}}) = s^*_r = \phi(z)$, as $z$ is the unique point in the limiting segment $s^*_r$ of ray $r$. 

Since the above argument works for every convergent subsequence of $\phi(z_1), \phi(z_2), \ldots$, we conclude that $\lim_{i \rightarrow \infty} \phi(z_i)$ exists and is equal to $s^*_r = \phi(z)$ (a bounded sequence converges if and only if every convergent subsequence of the bounded sequence converges to the same limit point).

\item {\it $\phi_{r}$ is a one-to-one function.} Suppose, for the sake of contradiction, that $\phi_r$ is
not a one-to-one function. Then, there exist two distinct points $z_1, z_2 \in \mathbb{S}^1$ such that
$\phi_r(z_1) = \phi_r(z_2)$. We consider the following cases:

\begin{enumerate}
	\item {\it Both $z_1, z_2 \notin I^*$.} Let $N''_{z_1, z_2}$ be the smallest natural number
	such that $\{z_1, z_2\} \cap cl(I_{N''_{z_1, z_2}}) = \phi$. Thus, $\phi_r(z_1) =
	\phi_{N''_{z_1, z_2}}(z_1) \neq \phi_{N''_{z_1, z_2}}(z_2) = \phi_r(z_2)$, a contradiction.
	
	\item {\it $z_1 \notin I^*$ and $z_2 \in I^*$, or vice versa.} (see Figure \ref{phi-r-one-to-one})
	Let $N'_{z_1}$ be a natural number such that $\phi_r(z_1) = \phi_{N'_{z_1}}(z_1)$. Then,
	$\phi_r(z_1)$ is an endpoint of open segment $s_p$, where $p \in H(t_k)$ and $1 \leq k \leq 
	N'_{z_1}$.
	
	Since $z_2 \in I^*$, there exists a rectilinear path $Q$ with {\it an infinite number of turns} from
	a point of $s_p$ to $\phi_r(z_2) = s^*_r$. Let $x$ be a point in the interior of the horizontal
	segment $u$ of $Q$ with one endpoint in $s_p$. 
	
	Let $S_{\epsilon}$ be an axis-parallel square of side length $\epsilon > 0$, with center at
	$s^*_r$, such that $\epsilon < \frac{d(x, s_p)}{2}$. Let $Q'_{\epsilon}$ be the subpath of $Q$
	from $p$ till the first point after $p$ at which $Q$ intersects $S_{\epsilon}$. Let $R_{\epsilon}$
	be the rectilinear polygon, with finite number of turns, formed by $s_p$, $Q'_{\epsilon}$ and
	$S_{\epsilon}$. By JCT for rectilinear polygons, $int(R)$ and $ext(R)$ are well-defined.
	
	Since $S_{\epsilon} \cap s_x = \phi$ and $Q'_{\epsilon}$ cannot contain two points of $s_x$ 
	(since $Q'_{\epsilon}$ corresponds to a finite path in ray $r$), one endpoint $a_x$ of 
	$s_x$ lies in $int(R_{\epsilon})$ and the other endpoint $b_x$ lies in $ext(R_{\epsilon})$. 
	Thus, the Jordan curve $J$ crosses $R_{\epsilon}$ two times at points $p_1^{\epsilon}$ and
	$p_2^{\epsilon}$. Both of these points belong to $S_{\epsilon}$, $p_1^{\epsilon}$ lies on the Jordan
	arc of $J$ from $a_x$ to $b_x$ and $p_2^{\epsilon}$ lies on the Jordan arc of $J$ from $b_x$
	to $a_x$.
	
	As $\epsilon \rightarrow 0$, $\lim_{\epsilon \rightarrow 0} p_1^{\epsilon} = \lim_{\epsilon \rightarrow 
	0} p_2^{\epsilon} = s^*_r$. This contradicts the assumption that $J$ is defined by a one-to-one and
	continuous function from $\mathbb{S}^1$ to $\mathbb{R}^2$.
	
	{\it Note.} We have derived a very weak version of JCT for infinite rectilinear polygons, using the
	squares $S_{\epsilon}$, $\epsilon > 0$. In this paper, these infinite rectilinear polygons have at most
	two limit points i.e., points at which an infinite sequence of diminishing turns converges.

	\item {\it Both $z_1, z_2 \in I^*$.} This case cannot occur since $I^*$ consists of a single point.
\end{enumerate}
\end{enumerate}

$\blacksquare$

We set $int(K_r) = \cup_{i=2}^{\infty} int(E_i)$, $ext(K_r) = \mathbb{R}^2 - \big( K_r \cup int(K_r) \big)$ and conclude the following

\begin{theorem}
\label{thm-int-JCT-ray}
{\bf Interior JCT for an infinite ray.} Let $r = H(t_1), H(t_2), \ldots$ be an infinite ray of horizontal sweeps. Then, there exists a piecewise-vertical Jordan curve $K_{r}$ such that

\begin{enumerate}
    \item $int(K_r)$ is a bounded, connected open set with $bd(int(K_r)) = K_r$ and $ext(K_r)$ is an unbounded open set and
    \item the set $int(K_r)$ is the union of a collection of open segments $s_p$, $p \in I_{K_r} \subset \mathbb{R}^2 - J$. Further, 
    for any two open segments $s_{p_1}$ and $s_{p_2}$ ($p_1, p_2 \in I_{K_r}$), there
    is a rectilinear path with a finite number of turns from a point $w_1 \in s_{p_1}$ to a point $w_2 \in s_{p_2}$.
\end{enumerate}
\end{theorem}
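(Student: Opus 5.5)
We split according to whether $r$ is terminating or non-terminating, as in the construction of $K_r$.

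\emph{Non-terminating rays.} By Lemma \ref{lem-non-term} there is a finite ray $r'$ with $r \sim r'$, so $int(r) = int(r')$. The boundary $K_r$ was defined via $\phi_{r'}$. A finite number of applications of Lemma \ref{JCT-horizontal-sweep} (base case) and Lemma \ref{piecewise-vertical-JCT} (inductive step) gives all of items (1) and (2) for $r'$, and hence for $r$. So the real work is the terminating case.

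\emph{Terminating rays: the Jordan curve and item (2).} By Observation \ref{obs-phi-r}, $K_r$ is a Jordan curve. It is piecewise-vertical: every point of $K_r$ other than the single limit point $s^*_r \in J$ lies on some $K_N$, and by definition it stays on $K_{N'}$ for all later $N'$. Hence the vertical segments of $K_r$ form a countable union of countable families, each arising from some $H(u_s)$.

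Item (2) follows from the definition $int(K_r) = \cup_i int(E_i)$. Each $int(E_i)$ is a union of open segments by Lemma \ref{piecewise-vertical-JCT}(2), and $int(E_i)\subset int(E_{i+1})$. So two segments $s_{p_1}, s_{p_2}$ both lie in some common $int(E_M)$, where the finite-turn rectilinear path exists.

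\emph{Terminating rays: item (1).} Connectedness follows the same way, as an increasing union of connected sets. Openness holds because each $int(E_i)$ is open. Boundedness follows from Assumption 1, since all segments lie in $\mathcal{B}$.

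For $bd(int(K_r)) = K_r$ we prove two inclusions.
\begin{enumerate}
\item $K_r \subset bd(int(K_r))$. First, $K_r \cap int(K_r) = \emptyset$: a point of $K_r$ is either on $J$ or on a vertical segment that is never swept at a later stage. Since each $int(E_i)$ is a union of open segments disjoint from $J$, no later sweep can cover a point lying on $J$. Next, each point of $K_r - \{s^*_r\}$ lies in $K_N = bd(int(E_N))$, so it is a limit of points of $int(E_N) \subset int(K_r)$. Finally, $s^*_r$ is the limit of the segments $s_{i_k}\subset int(K_r)$.
\item $bd(int(K_r)) \subset K_r$. Take $q^* = \lim q_j$ with $q_j \in int(K_r)$ and $q^*\notin int(K_r)$.
\begin{enumerate}
\item If infinitely many $q_j$ lie in a single $int(E_M)$, then $q^* \in K_M$. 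We must show $q^*$ is not on a portion of $K_M$ later replaced, since replaced portions are open segments $s_p$ that lie in $int(K_r)$.
\item Otherwise the $q_j$ lie in $H(t_{i_j})$ with $i_j\to\infty$. The corresponding open segments converge to a limiting segment, which by Lemma \ref{lem-W-r} must equal $s^*_r$, a single point. Hence $q^* = s^*_r \in K_r$.
\end{enumerate}
\end{enumerate}

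With $bd(int(K_r)) = K_r$ established, $ext(K_r) = \mathbb{R}^2 - cl(int(K_r))$ is open. It is unbounded because $cl(int(K_r)) \subset \mathcal{B}$.

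The main obstacle is the inclusion $bd(int(K_r)) \subset K_r$ in case (b): deducing that points accumulated from ever-later sweeps must converge to $s^*_r$. This requires upgrading Lemma \ref{lem-W-r} from limits of open segments to limits of arbitrary points in them. We would pass to a subsequence so the segments containing the $q_j$ converge, using Bolzano–Weierstrass in $\mathcal{B}$ on the segment endpoints. We then note that a point of a limit of segments lies in the closure of the limit segment, which here is the single point $s^*_r$.
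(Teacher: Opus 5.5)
Your proposal is correct and follows the paper's route: non-terminating rays are reduced to finite rays via Lemma~\ref{lem-non-term} and the finite-step lemmas, while for terminating rays $K_r$ is a Jordan curve by Observation~\ref{obs-phi-r} and $int(K_r)$ is the increasing union $\cup_i int(E_i)$. The paper states the theorem immediately after this construction with no further argument, so your explicit check of $bd(int(K_r)) = K_r$ adds detail the paper leaves implicit rather than departing from it; that check uses the persistence of boundary points never inside a later replaced $s_p$, and Lemma~\ref{lem-W-r} to force accumulation points from ever-later sweeps to equal $s^*_r$.
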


%% file: sweepline-jordan-001.tex
\subsection{Sweepline algorithm for a Jordan curve}
\label{sec-sweepline-algorithm}
We now define sweepline algorithms with a finite as well as countably infinite number of steps and prove interior JCT for them.

\begin{definition}
A sweepline algorithm with a countably infinite number of steps is a sequence $H(t_1), H(t_2), \ldots$ of horizontal sweeps such that

\begin{enumerate}
    \item $H(t_1)$ can be extended by $H(t_2)$. Let $E_2 = (H(t_1), H(t_2))$. Let $K_2$ be the piecewise-vertical Jordan curve bounding $int(E_2)$.

    \item $K_2$ can be extended by $H(t_3)$. Let $K_3$ be the piecewise-vertical Jordan curve bounding $E_3 = (H(t_1), H(t_2), H(t_3))$.
    
    \item $K_3$ can be extended by $H(t_4)$. Let $K_4$ be the piecewise-vertical Jordan curve bounding $(H(t_1), H(t_2), H(t_3), H(t_4))$.

    \item and so on till infinity.
\end{enumerate}
\end{definition}

{\it Note.} The existence of $K_2, K_3, \ldots$ as well as the truth of interior JCT for them, follows inductively from Lemma \ref{JCT-horizontal-sweep} (base case on $H(t_1)$) and Lemma \ref{piecewise-vertical-JCT} (the inductive step).\\

{\bf Construction of recursion tree for the sweepline algorithm $\mathcal{S}$.} We now describe
the recursion tree $T(\mathcal{S})$ for the sweepline algorithm $\mathcal{S}$. Each node of
$T(\mathcal{S})$ corresponds to a unique horizontal sweep of $\mathcal{S}$. The root of 
$T(\mathcal{S})$ is the initial horizontal sweep $H(t_1)$. For natural numbers $i, j$ ($i < j$), the node corresponding to the horizontal sweep $H(t_j)$ is the child of the node corresponding 
to the horizontal sweep $H(t_i)$ if and only if the piecewise-vertical Jordan curve $K_{j-1}$ is extended by $H(t_j)$ from the portion of a vertical segment belonging to $bd(H(t_i))$.
Although $\mathcal{S}$ has countably infinite horizontal sweeps, it can have 
countably infinite non-terminating rays and an uncountable number of terminating rays.
Let $\rho(\mathcal{S})$ be the set of all maximal rays starting from the root of the recursion tree $T(\mathcal{S})$ (a ray $r$ is maximal if and only if there is no ray $r'$ such that $r$ is a prefix
of $r'$).

{\bf Construction of $K_{\mathcal{S}}$.} Let $\phi_1 = \phi_{t_1}$, $\phi_2$, $\ldots$ be
the one-to-one and continuous functions from $\mathbb{S}^1$ to $\mathbb{R}^2$, corresponding
to Jordan curves $K_{t_1}$, $K_2$, $\ldots$. For each $j \geq 2$, $K_j$ is obtained
from $K_{j-1}$ by the extension of a piecewise-vertical Jordan curve (see the construction
of $K'$ in the previous section). Thus, for each $j \geq 2$, there exists intervals
$I_j, I'_j \in \mathbb{S}^1$ and a continuous bijection $\omega_j: I_j \rightarrow I'_j$ such that (i) for $z \in I_j$, $\phi_j(z) = \phi_{t_j}(\omega_j(z))$ and (ii) for $z \notin I_j$,
$\phi_j(z) = \phi_{j-1}(z)$.

We first make the following observation, which follows from the definition of extension of a piecewise-vertical Jordan curve:

\begin{observation}
For each $j \geq 2$, $I_j \subset I_k$, for some $1 \leq k < j$. 
\end{observation}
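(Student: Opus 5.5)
The statement to prove is: for each $j \geq 2$ there is some $1 \leq k < j$ with $I_j \subset I_k$. The plan is to unwind how $\phi_j$ is built from $\phi_{j-1}$ in the construction of $K'$, and to track which parameter interval each vertical segment of $K_{j-1}$ occupies. The observation is essentially bookkeeping, so I would prove it by induction on $j$ together with a stronger auxiliary claim.

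\textbf{Auxiliary claim.} For every $m \geq 1$ and every vertical segment $u \in \mathcal{V}(K_m)$, the preimage $\phi_m^{-1}(cl(u))$ is a closed interval contained in $I_k$ for some $k \leq m$. Here we adopt the convention $I_1 = \mathbb{S}^1$ (the whole parameter circle of $\phi_1 = \phi_{t_1}$); alternatively, we treat $k=1$ as meaning that $u$ lies on $K_{t_1}$. The index $k$ is that of the horizontal sweep $H(t_k)$ for which $u$ is a vertical segment of $bd(H(t_k))$. This is the same $k$ that determines the parent of $H(t_{m+1})$ in the recursion tree $T(\mathcal{S})$.

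\textbf{Base case $m = 1$.} Every vertical segment of $K_{t_1}$ has its preimage in $\mathbb{S}^1 = I_1$.

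\textbf{Inductive step.} Pass from $K_{m}$ to $K_{m+1}$ using the extension of a piecewise-vertical Jordan curve by $H(t_{m+1})$ from a point $p$ in the interior of a vertical segment $u_1 \in \mathcal{V}(K_m)$. By the construction preceding Observation \ref{obs-K'}, we have
$$I_{m+1} = \phi_m^{-1}(cl(s_p)),$$
and $cl(s_p) \subset cl(u_1)$. Injectivity of $\phi_m$ then gives $I_{m+1} \subset \phi_m^{-1}(cl(u_1))$. By the induction hypothesis, the latter set is contained in some $I_k$ with $k \leq m$. This gives the observation for $j = m+1$.

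To propagate the auxiliary claim to $K_{m+1}$, split the vertical segments of $K_{m+1}$ into two kinds.

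\emph{Segments not touching $I_{m+1}$.} Outside $I_{m+1}$ we have $\phi_{m+1} = \phi_m$, so a segment whose preimage avoids $I_{m+1}$ keeps its old preimage and its old index $k$.

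\emph{Segments coming from $K_{t_{m+1}} - s_p$.} Their $\phi_{m+1}$-preimages lie in $I_{m+1}$ via $\omega_{m+1}^{-1}$, so $k = m+1$ works.

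\textbf{Main obstacle.} The awkward case is the pieces of $u_1$ left after removing $s_p$, i.e. $cl(u_1) - s_p$. These may straddle the endpoints of $I_{m+1}$, or be merged with parts of $K_{t_{m+1}}$ into a longer vertical segment of $K_{m+1}$. Two examples are the situations in Figure \ref{fig-constructing-K-t-t'}, where $b_1 \leq b_2$ and $a_2$ may exceed $a_1$. I would handle this in two steps.

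First, I would show that a merged collinear segment can be regarded as two distinct elements of $\mathcal{V}(K_{m+1})$, one attributed to $H(t_k)$ and one to $H(t_{m+1})$. This is consistent with the definition of piecewise-vertical Jordan curve, which attaches each vertical segment to a single sweep $u_s$.

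Second, I would show that each surviving remnant of $u_1$ has its preimage in $\phi_m^{-1}(cl(u_1)) \setminus int(I_{m+1})$, and hence in the old $I_k$.

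Once this attribution is fixed, the observation is exactly the statement that the parent of $H(t_j)$ in $T(\mathcal{S})$ has index $k < j$, and that $I_j$ nests inside $I_k$.
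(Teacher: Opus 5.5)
Your proposal is correct and is the paper's argument made explicit. The paper says only that the observation follows from the definition of extension: $I_j=\phi_{j-1}^{-1}(cl(s_{p_j}))$ lies inside the parameter arc of the vertical segment of $K_{j-1}$ from which $H(t_j)$ is launched, that arc lies in the interval $I_k$ of the parent sweep, and your induction on which sweep each vertical segment comes from is exactly this bookkeeping. With your convention $I_1=\mathbb{S}^1$ the literal statement holds trivially with $k=1$, so the real content is the parent version you prove; your ``main obstacle'' is also lighter than you expect, since elements of $\mathcal{V}(K)$ need not be maximal collinear pieces and the remnant case follows from injectivity of $\phi_{m+1}$ together with $\phi_{m+1}=\phi_m$ off $int(I_{m+1})$.
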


$K_{\mathcal{S}}$ (given by $\phi_{\mathcal{S}}$) is defined as the limit of $K_1, K_2, \ldots$, using the rays in $\rho(\mathcal{S})$, as follows. Let $z$ be any point in $\mathbb{S}^1$. There are two cases:

\begin{enumerate}
\item {\bf Case I.} There exists a natural number $N_z > 0$ such that for all natural numbers $i_1, i_2 \geq N_z$, $\phi_{i_1}(z) = \phi_{i_2}(z)$. In this case, we set $\phi_{\mathcal{S}}(z) = \phi_{N_z}(z)$.

\item {\bf Case II.} There does not exist a natural number $N_z > 0$ such that for all natural numbers $i_1, i_2 \geq N_z$, $\phi_{i_1}(z) = \phi_{i_2}(z)$. In this case, there exists a 
ray $r \in \rho(\mathcal{S})$ such that (i) $r = (H(t_1), H(t_{j_2}), H(t_{j_3}), \ldots)$ ($1 < j_2 < j_3 < \cdots$) and (ii) $z \in I^*_r$ where $I_{j_1} \supset I_{j_2} \supset \cdots$ and $I^*_r = \cap_{k=1}^{\infty} cl(I_{j_k})$ (we take $j_1 = 1$).

If $r$ is a terminating ray, we define $\phi_{\mathcal{S}}(z) = s^*_r$,
as the limiting segment of $r$ consists of a single point of the Jordan curve $J$.
Now consider the case when $r$ is a non-terminating ray. Let $a^*_r$ and $b^*_r$ be the two endpoints of interval $I^*_r$ and let $a_r$ and $b_r$ be the two endpoints of $s^*_r$. Without loss of generality, assume that $\lim_{i \rightarrow \infty} \phi_i(a^*_r) = a_r$ and $\lim_{i \rightarrow \infty} \phi_i(b^*_r) = b_r$.
We define $\phi_{\mathcal{S}}$ so that it maps the closed interval $I^*_r$ to the limiting segment $cl(s^*_r)$ using any linear function between these two sets, such
that $a^*_r$ is mapped to $a_r$ and $b^*_r$ is mapped to $b_r$.

\end{enumerate}

We now prove the following:

\begin{observation}
\label{obs-phi-S}
(i) $\phi_{\mathcal{S}}$ is well-defined, (ii) $\phi_{\mathcal{S}}$ is a continuous function, and
(iii) $\phi_{\mathcal{S}}$ is a one-to-one function. Thus, $K_{\mathcal{S}}$ is a Jordan curve.
\end{observation}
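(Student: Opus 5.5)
I will follow the template of Observation \ref{obs-phi-r} and reduce each of the three claims to the ray case along the relevant ray of $\rho(\mathcal{S})$.

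\textbf{Well-definedness.} The only ambiguity is in Case II: a priori $z$ could lie in $I^*_r$ for two different maximal rays $r \neq r'$. By the observation that each $I_j$ is contained in its parent's interval, the intervals $I_j$ form a tree under inclusion matching $T(\mathcal{S})$. Sibling intervals $I_j, I_{j'}$ come from disjoint open subsegments $s_p$ of vertical segments of the current curve, so their interiors are disjoint. If $r$ and $r'$ first branch at siblings $H(t_j), H(t_{j'})$, then $I^*_r \subset cl(I_j)$ and $I^*_{r'} \subset cl(I_{j'})$ meet in at most a common endpoint. At such an endpoint $\phi_j$ stabilises, because it is an endpoint of the segment $s_p$ and lies on $J$; so that point falls under Case I, not Case II. Hence the ray $r$ in Case II is unique. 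If $r$ is non-terminating, Lemma \ref{lem-non-term} together with Lemma \ref{lem-W-r} shows that $s^*_r$ and its endpoints $a_r, b_r$ are determined by $r$. The limits $\lim_i \phi_i(a^*_r)$ and $\lim_i \phi_i(b^*_r)$ exist because these endpoints are eventually fixed. So the linear map $I^*_r \to cl(s^*_r)$ is well-defined.

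\textbf{Continuity.} At a Case I point lying in the interior of the stabilised region (some neighbourhood is eventually untouched by the $I_j$), continuity is inherited from $\phi_{N_z}$. At a point of the interior of $I^*_r$ for a non-terminating ray, continuity is linearity. The remaining points are limits of infinitely many shrinking subintervals $I_j$, coming either from one ray or from infinitely many distinct children. For these I would argue as in part 2 of Observation \ref{obs-phi-r}. Take $z_i \to z$ and a convergent subsequence of $\phi_{\mathcal{S}}(z_i)$. If infinitely many $z_i$ lie in distinct subtrees hanging off a fixed vertical segment, their images lie in the closures of sweeps attached at points converging along that segment. The limit is then pinned to the point $\phi_{\mathcal{S}}(z)$, using the structure of $K_{t}$ (Lemma \ref{lem-vertical-segment-crossing}) and Observation \ref{obs-jordan-arc-limit} to rule out two distinct limits. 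Every convergent subsequence then has the same limit, which gives continuity.

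\textbf{Injectivity.} Take $z_1 \neq z_2$ and split into cases as in Observation \ref{obs-phi-r}. If both are Case I, injectivity of $\phi_N$ for large $N$ gives the result. If $z_1$ is Case I and $z_2 \in I^*_r$, I use the construction with squares $S_\epsilon$ around $\phi_{\mathcal{S}}(z_2)$, or for a non-terminating ray the finite ray $r'$ from Lemma \ref{lem-non-term} and interior JCT for it. If $z_1 \in I^*_r$ and $z_2 \in I^*_{r'}$ with $r \neq r'$, take the branching node of $r$ and $r'$. Build a rectilinear polygon from a finite common path, the two sibling horizontal segments and small squares around the two limit points. A segment $s_x$ crossing this polygon once forces $J$ to cross it twice near each limit point. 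Letting $\epsilon \to 0$ then contradicts the injectivity of $\psi$ if the limits coincide.

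The main obstacle is continuity at points accumulated by infinitely many distinct sibling subtrees, since the paper has so far handled only a single nested chain. I would try to show that the diameters of the extensions attached inside a fixed vertical segment tend to zero. The tool is Observation \ref{obs-jordan-arc-limit}, applied to the disjoint arcs of $J$ bounding each extension's $K_{t_j}$.
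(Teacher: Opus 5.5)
Your parts (i) and (iii) follow the paper's lines. The paper first shows that the intervals $I^*_r$ are pairwise disjoint and that the limiting segments $s^*_r$ are pairwise disjoint, then runs the same case split using finite rays from Lemma~\ref{lem-non-term} and the square/infinite-rectilinear-polygon argument. Part (ii), however, has a genuine gap.

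\textbf{Why your continuity argument does not close.} When $z_i \to z \in I^*_r$, the argument of part 2 of Observation~\ref{obs-phi-r} only gives $z_i \in I_j$ for nodes $j$ arbitrarily deep on $r$.
\begin{itemize}
\item For a single ray this suffices, because every sweep below such a node lies on $r$ and hence near $s^*_r$.
\item In a tree it does not. The point $\phi_{\mathcal{S}}(z_i)$ may lie on the boundary of a side subtree that leaves $r$ at depth $d_i$, and nothing you invoke stops such side subtrees from reaching far from $s^*_r$. This is exactly the case the paper treats as central ($d_i \to \infty$). The paper calls bounded $d_i$, which is your sibling case, the simpler one.
\item Lemma~\ref{lem-vertical-segment-crossing} cannot pin down the limit. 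It only says that a parameter-neighbourhood of $J$ around a point of a vertical segment of one $K_t$ stays on one side of $l_q$. It says nothing about where later sweeps go.
\item Your fallback also fails as stated. You propose that diameters of extensions tend to zero, via Observation~\ref{obs-jordan-arc-limit} applied to ``the disjoint arcs of $J$ bounding each $K_{t_j}$''. But $K_{t_j}$ is not an arc of $J$: it contains countably many vertical segments through which descendant sweeps escape, and its $J$-points need not form one arc. So you have no family of disjoint arcs of $J$ with separated endpoint limits to feed into that observation.
\end{itemize}

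\textbf{The missing step.} What you need is a separation argument that manufactures such arcs. The paper argues by contradiction as follows.
\begin{itemize}
\item Pass to a subsequence with $\phi_{\mathcal{S}}(z_{i_k}) \to z^*$, set $\epsilon = d(s^*_r, z^*) > 0$, and take circles $C_1, C_2$ of radii $\epsilon/4, \epsilon/2$ centred at $s^*_r$.
\item For large $k$, the rectilinear path $Q_{i_k}$ traced by ray $r_{i_k}$ after it separates from $r$ starts inside $C_1$ and ends near $z^*$, so it crosses the annulus.
\item These paths cut the annulus into regions. Open segments $s_{p_k}$ with endpoints in different regions force, by JCT for rectilinear polygons, pairwise disjoint arcs $\gamma_{i_k}$ of $J$. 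Each has one endpoint tending to $s^*_r$ and the other on $C_2$.
\item A further subsequence makes the outer endpoints converge, contradicting Observation~\ref{obs-jordan-arc-limit}.
\end{itemize}
Some polygon-crossing argument of this kind, rather than reading arcs off $K_{t_j}$, is also what your bounded-depth sibling case would need.

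\textbf{A smaller point in (i).} The endpoints $a^*_r, b^*_r$ of $I^*_r$ are in general only limits of endpoints of the $I_{j_k}$; they are not eventually fixed. So the existence of $\lim_i \phi_i(a^*_r)$ needs its own argument. The paper simply assumes it.
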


\begin{figure}
%\begin{center}
\begin{tikzpicture}
\node at (3,3) {\includegraphics[width=9cm]{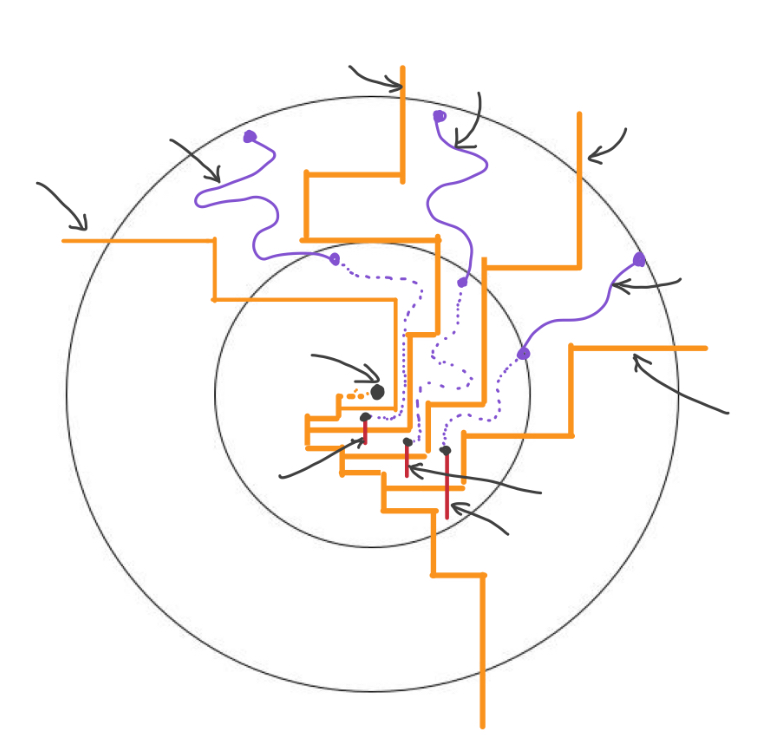}};
%\includegraphics[width=9cm]{fig-inf-rectilinear-polygon}
%\foreach \y in {-1, 0, 1, 2, ..., 7, 8, 9}
%{
%\draw (-2,\y) -- (7.5,\y);
%\node at (-0.5,\y){\y};
%};

%\foreach \x in {-2, -1, 0, 1,2,...,7}
%{
%\draw (\x,-1) -- (\x,9.5);
%\node at (\x,-0.5){\x};
%};
\node (A) at (7.3, 2.5) {$Q_{i_N}$};
\node (B) at (5.8, 6.1) {$Q_{i_{N+1}}$};
\node (C) at (2.3, 6.8) {$Q_{i_{N+2}}$};
\node (D) at (-1.2, 5.5) {$Q_{i_{N+3}}$};
%\node (E) at (-0.4, 0.5) {$Q_{i_{N+4}}$};
\node (F) at (6.5, 4.4) {$\gamma_{i_{N}}$};
\node (G) at (4.2, 6.5) {$\gamma_{i_{N+1}}$};
\node (H) at (0.4, 5.8) {$\gamma_{i_{N+2}}$};

\node (I) at (1.8, 3.2) {$s^*_r$};
\node (J) at (4.6, 1) {$s_{p_N}$};
\node (K) at (5.2, 1.5) {$s_{p_{N+1}}$};
\node (L) at (1.5, 1.7) {$s_{p_{N+2}}$};
\node (M) at (0.7, 3) {$C_1$};
\node (N) at (-1, 3) {$C_2$};
%\node (O) at (4.6, -1) {$Q$};
%\node (I) at (1.8, 3) {$s^*_r$};
%\node (I) at (1.8, 3) {$s^*_r$};

\end{tikzpicture}
\caption{$\phi_S$ is a continuous function. $Q_{i_N}, Q_{i_{N+1}}, \ldots$ are rectilinear paths traced by the sweeps of rays $r_{i_N}, r_{i_{N+1}}, \ldots$ respectively, once they separate from ray $r$. $s_{p_N}, s_{p_{N+1}}, \ldots$ are open segments such that one of their endpoints lies in the different regions in which $Q_{i_N}, Q_{i_{N+1}}, \ldots$ partition the
circle $C_2$.}
\label{fig-phi-S-cont}
%\end{center}
\end{figure}

\noindent {\bf Proof:} (i) follows from the definition of $\phi_{\mathcal{S}}$ above.

We now prove (iii) i.e., that $\phi_{\mathcal{S}}$ is one-to-one. Let $\rho(T(\mathcal{S}))$ denote the
set of all maximal rays of the recursion tree $T(\mathcal{S})$, starting from its root. 
Let $A$ be the set of all intervals $I^*_r \subset \mathbb{S}^1$, where $r \in \rho(T(\mathcal{S}))$.
Let $B$ be the set of all limiting segments $s^*_r \subset \mathbb{R}^2$, where $r \in \rho(T(\mathcal{S}))$.

We first prove that no two intervals in $A$ intersect. Let $T_d(\mathcal{S})$ denote the
subtree of $T(\mathcal{S})$ formed by all nodes at depth at most $d$. As this property is true for all
rays in $\rho(T_d(\mathcal{S}))$ and the intervals for rays in $\rho(T_{d+1}(\mathcal{S})$ form
a refinement of the intervals for rays in $\rho(T_d(\mathcal{S}))$, we derive the non-intersection property
of intervals in $A$.

We next prove that no two limiting segments in $B$ intersect.
Suppose there exist two limiting segments $s^*_{r_1}$ and $s^*_{r_2}$ that have a common point. First, assume both limiting segments correspond to non-terminating rays. By the previous section, there exist finite
rays $r'_1 \sim r_1$ and $r'_2 \sim r_2$. Consider a sweepline algorithm $\mathcal{S}'$ with finite number of horizontal
sweeps over $r'_1 \cup r'_2$. We conclude that $\phi_{\mathcal{S}'}$ is not one-to-one, which contradicts
the interior JCT for extension of a piecewise-vertical Jordan curve proved above.
Now, assume that both limiting segments correspond to terminating rays. Then, there exists a rectilinear
path $Q_1 \subset \mathbb{R}^2 - J$ with countably infinite turns from $p^*$ to $s^*_{r_1}$ and a rectilinear
path $Q_2 \subset \mathbb{R}^2 - J$ with countably infinite turns from $p^*$ to $s^*_{r_2}$. Let $R$ 
be the infinite rectilinear polygon obtained by combining $Q_1$ and $Q_2$. The proof now proceeds in a
manner similar to that for extension of a horizontal sweep. There exists a point $x \in R$ such that
one endpoint $a$ of $s_x$ is in $int(R)$ and the other endpoint $b$ of $s_x$ is in $ext(R)$ (the interior
and exterior exist because of the truth of weak JCT for infinite rectilinear polygons; $R$ has at most two limit points). Then, $J$ must cross
$R$ at least two times. We arrive at a contradiction since $J$ can cross $R$ only at point $s^*_{r_1}$ (note that $s^*_{r_1} = s^*_{r_2}$). 

The case where one limiting segment corresponds to a terminating ray and the other limiting segment
corresponds to a non-terminating ray can be resolved in a similar manner.

Now, suppose for the sake of contradiction, that there exist two distinct points $z_1, z_2 \in \mathbb{S}^1$ such that $\phi_{\mathcal{S}}(z_1) = \phi_{\mathcal{S}}(z_2)$. If both $z_1, z_2 \in A$, the one-to-one
property follows from non-intersection property of $A$ and $B$ and the fact that $\phi_{S}$ maps unique
intervals in $A$ to unique limiting segments in $B$. If both $z_1, z_2 \notin A$, both $z_1$ and $z_2$ occur
at finite depth in the recursion tree $T(\mathcal{S})$. Let $d_{max}$ be the maximum of the depths of
$z_1$ and $z_2$. The one-to-one property follows by interior JCT for the finite sweepline algorithm given
by $T_{d_{max}}(\mathcal{S})$. If $z_1 \in A$ and $z_2 \notin A$ (or vice versa), one-to-one property can be
proved by the application of weak JCT for infinite rectilinear polygons (as used above to prove that $B$ is non-intersecting).

We now prove property (ii) i.e., that $\phi_{\mathcal{S}}$ is a continuous function. We describe the central case here; other cases are simpler. Let $r$ be a non-terminating ray. Let $z_1, z_2, \ldots$ be a sequence of points on the unit circle $\mathbb{S}^1$ such that (i)
for each $i \in \mathbb{N}$, $\phi(z_i)$ belongs to ray $r_i$, and (ii) $\lim_{i \rightarrow \infty} z_i = I^*_r$.
We prove that $\lim_{i \rightarrow \infty} \phi(z_i) = s^*_r$. 

Suppose this is not the case. Let $d_i$ be the maximum depth till which ray $r$
and ray $r_i$ have common prefix from the root.
We next assume that $\lim_{i \rightarrow \infty} d_i = \infty$ (the case where the sequence $d_i$, $i \in \mathbb{N}$, is bounded is simpler).
Thus, there exists a subsequence $z_{i_1}, z_{i_2}, \ldots$ ($i_1 < i_2 < \cdots$) such that
(i) $d_{i_1} < d_{i_2} < \cdots$,  (ii) $\lim_{k \rightarrow \infty} \phi(z_{i_k})$ exists and is not equal to
$s^*_r$. Let $z^* =  \lim_{k \rightarrow \infty} \phi(z_{i_k})$. Let $\epsilon = d(s^*_r, z^*)$.

Let $N$ be a natural number such that all open segments generated by horizontal sweeps of ray
$r$ at depth at least $N$ are within distance $\frac{\epsilon}{4}$ of $s^*_r$ and for each
$k \geq N$, $d(z^*, \phi(z_{i_k})) \leq \frac{\epsilon}{4}$. Let $C_1$ and $C_2$ be circles with center
at $s^*_r$ and radii
$\frac{\epsilon}{4}$ and $\frac{\epsilon}{2}$ respectively.

The resulting scenario is depicted in Figure \ref{fig-phi-S-cont} (we use the Jordan curve theorem for 
rectilinear polygons; even if ray $r_{i_k}$, $k \geq N$, is terminating ray, the portion within the
circle of radius $\frac{\epsilon}{2}$ will have a finite number of turns). Thus, we are able to obtain
disjoint Jordan arcs $\gamma_{i_N}, \gamma_{i_{N+1}}, \ldots$ of $J$ such that one of their
two endpoints converges to $s^*_r$ and the other endpoints lie on the outer circle $C_2$, as depicted in Figure \ref{fig-phi-S-cont}.
We can find a subsequence of these Jordan arcs such that the other endpoint also converges. 
Applying Observation \ref{obs-jordan-arc-limit} leads to a contradiction.
The case where $r$ is a finite ray or a non-terminating ray can be resolved using similar ideas.
$\blacksquare$

Since the intervals $I^*_r$, $r \in \rho(\mathcal{S})$ are non-intersecting, we conclude that

\begin{observation}
An infinite tree of horizontal sweeps has a countable number of non-terminating rays.
\end{observation}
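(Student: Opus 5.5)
The plan is to attach to each non-terminating ray $r \in \rho(\mathcal{S})$ the closed interval $I^*_r \subset \mathbb{S}^1$ and to show that these intervals are pairwise disjoint and each has positive length. A family of pairwise disjoint arcs of positive length on the unit circle is countable. Indeed, each such arc contains a point $e^{i\theta}$ with $\theta$ rational, and distinct arcs contain distinct such points; alternatively one can argue by measure, as in the proof that $M_J$ is countable. So the observation follows once both properties are established.

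Disjointness is already available. In the proof of Observation \ref{obs-phi-S}, part (iii), it was shown that the intervals in the set $A = \{ I^*_r ~|~ r \in \rho(T(\mathcal{S})) \}$ are pairwise non-intersecting. That argument works by induction on the depth $d$ of the truncated tree $T_d(\mathcal{S})$, using the fact that the intervals at depth $d+1$ refine those at depth $d$. I will simply invoke that statement.

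The main step is positivity of length for a non-terminating ray $r$. By Lemma \ref{lem-W-r}, $r$ has a unique limiting segment $s^*_r$, and non-terminating means $cl(s^*_r)$ has positive length, with distinct endpoints $a_r \neq b_r$. By construction of $\phi_{\mathcal{S}}$ in Case II, the endpoints $a^*_r, b^*_r$ of $I^*_r$ satisfy $\lim_{i} \phi_i(a^*_r) = a_r$ and $\lim_i \phi_i(b^*_r) = b_r$. If $I^*_r$ were a single point, then $a^*_r = b^*_r$, and the two limits would coincide, giving $a_r = b_r$, a contradiction.

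I expect this positivity step to be the only delicate point, because it depends on the limits $\lim_i \phi_i(a^*_r)$ being well defined. If needed, I would instead use Lemma \ref{lem-non-term}: it gives a finite ray $r' \sim r$. The nested intervals $I_{j_k}$ of $r$ then stabilize, up to the final sweep, to the interval of $\phi_{r'}$ covering $cl(s^*_r)$. That interval is the preimage of a nondegenerate segment under a one-to-one map, so it has positive length.
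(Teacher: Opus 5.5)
Your route is the same as the paper's. The paper's whole argument is the one sentence saying the intervals $I^*_r$ are pairwise non-intersecting. It takes nondegeneracy of $I^*_r$ for non-terminating rays for granted, via the remark in the construction of $K_r$ that $r$ is terminating iff $I^*$ is a single point, and leaves the count of disjoint arcs implicit. Your disjointness citation and rational-point count are fine, and you correctly single out positivity of $|I^*_r|$ as the real content. But your main argument for positivity is circular.

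The limits $\lim_i\phi_i(a^*_r)=a_r$ and $\lim_i\phi_i(b^*_r)=b_r$ in Case II are not derived from anything. They are asserted while defining $\phi_{\mathcal{S}}$ as a map of $I^*_r$ onto $cl(s^*_r)$, and they would be false if $I^*_r$ were a single point. Using them to exclude $a^*_r=b^*_r$ therefore assumes the conclusion.

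In fact nondegeneracy does not follow from non-termination at all. Along $r$, $I_{j_{k+1}}$ is obtained by pulling the proper sub-segment $cl(s_{p_{j_{k+1}}})$ back through $\phi_{t_{j_k}}$ and then through $\omega_{j_k}$. Both steps involve arbitrary choices: $\omega_{j_k}$ is any continuous bijection of intervals, and $\phi_{t_{j_k}}$, via $\eta$, gives the $n$-th discontinuity an interval of length proportional to $2^{-n}$ for an arbitrary enumeration. So the ratios $|I_{j_{k+1}}|/|I_{j_k}|$ can be made as small as one likes, and $\bigcap_k cl(I_{j_k})$ can be a single point even though $s^*_r$ has positive length.

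Your fallback does not repair this. Lemma \ref{lem-non-term} gives a finite $r'$ with $int(r')=int(r)$, but $\sim$ only compares swept regions. $\phi_{r'}$ parametrizes a different sequence of sweeps and says nothing about how the maps $\phi_j$ of $\mathcal{S}$ distribute $\mathbb{S}^1$, which is what defines $I^*_r$ in Case II. Nor do the $I_{j_k}$ stabilize: each extension along $r$ replaces a proper sub-arc, so $I_{j_{k+1}}\subsetneq I_{j_k}$ for every $k$.

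You also cannot simply assume the parametrizations are chosen so that every non-terminating ray keeps positive length. By disjointness, showing such a choice exists would already prove the observation. What is missing is a choice-independent injection. Attach to each non-terminating ray something that does not depend on the $\omega_j$, for instance a nondegenerate arc of $J$ pulled back by the fixed $\psi$, or a nonempty open subset of the plane, and prove these objects are pairwise disjoint. The paper's one-line proof has exactly the same hole, so you have not deviated from it, but as written neither argument establishes the observation.
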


We set $int(K_{\mathcal{S}}) = \cup_{i=2}^{\infty} int(E_i)$, $ext(K_{\mathcal{S}}) = \mathbb{R}^2 - \big( K_{\mathcal{S}} \cup int(K_{\mathcal{S}}) \big)$ and conclude the following

\begin{theorem}
\label{thm-int-JCT-sweepline}
{\bf Interior JCT for sweepline algorithm with countably infinite steps.} Let $\mathcal{S} = H(t_1), H(t_2), \ldots$ be a sweepline algorithm with a countably infinite number of horizontal sweeps. 
Then, there exists a piecewise-vertical Jordan curve $K_{\mathcal{S}}$ such that

\begin{enumerate}
\item $int(K_{\mathcal{S}})$ is a bounded, connected open set with $bd(int(K_{\mathcal{S}})) = K_{\mathcal{S}}$ and $ext(K_{\mathcal{S}})$ is an unbounded open set and
    
\item the set $int(K_{\mathcal{S}})$ is the union of a collection of open segments $s_p$, $p \in I_{K_{\mathcal{S}}} \subset \mathbb{R}^2 - J$. Further, for any two open segments $s_{p_1}$ and $s_{p_2}$ ($p_1, p_2 \in I_{K_\mathcal{S}}$), there is a rectilinear path with a finite number of turns from a point $w_1 \in s_{p_1}$ to a point $w_2 \in s_{p_2}$.
\end{enumerate}
\end{theorem}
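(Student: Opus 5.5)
The plan is to derive Theorem \ref{thm-int-JCT-sweepline} from Observation \ref{obs-phi-S}, which already says $K_{\mathcal{S}}$ is a Jordan curve, by the same route used for Lemma \ref{interior-jct-horizontal-sweep} and Theorem \ref{thm-int-JCT-ray}. The finite stages $E_i$ serve as an exhausting family: each $int(E_i)$ satisfies the conclusions of Lemma \ref{piecewise-vertical-JCT}. The curve $K_{\mathcal{S}}$ is piecewise-vertical because every point of it is one of three kinds:
\begin{itemize}
\item a point $\phi_{N_z}(z)$ stabilized at a finite stage (Case I), which lies on $J$ or on a vertical segment of some $K_{t_j}$;
\item a single point $s^*_r \in J$, for a terminating ray $r$;
\item a point of $cl(s^*_r)$, for a non-terminating ray $r$.
\end{itemize}
In the third case, Lemma \ref{lem-non-term} gives a finite ray $r' \sim r$ whose last sweep $H(t')$ has $s^*_r$ as a vertical boundary segment. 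Hence $s^*_r$ is a vertical segment of some $K_{u_s}$. There are only countably many such segments, since each finite stage contributes countably many and there are countably many non-terminating rays.

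Next I would verify item (2) and the set-theoretic properties. By Lemma \ref{piecewise-vertical-JCT}, each $int(E_i)$ is a union of open segments. Any two open segments of $int(K_{\mathcal{S}})$ lie in a common $int(E_M)$, and there they are joined by a rectilinear path with finitely many turns. This gives item (2) and connectedness. Openness is immediate because $int(K_{\mathcal{S}})$ is a union of open sets. Boundedness follows from Assumption $1$, since every open segment lies in $\mathcal{B}$.

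The substantive step is $bd(int(K_{\mathcal{S}})) = K_{\mathcal{S}}$; once it holds, $ext(K_{\mathcal{S}}) = \mathbb{R}^2 - cl(int(K_{\mathcal{S}}))$ is open and unbounded, as in Lemma \ref{interior-jct-horizontal-sweep}.

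For $K_{\mathcal{S}} \subset bd(int(K_{\mathcal{S}}))$, I would handle the three kinds of points in turn.
\begin{itemize}
\item Case I points lie on the boundary of some $int(E_i) \subset int(K_{\mathcal{S}})$.
\item A point of $cl(s^*_r)$ is a limit of open segments of the ray $r$.
\item $K_{\mathcal{S}}$ must also be disjoint from $int(K_{\mathcal{S}})$. For stabilized points, a stabilized vertical segment is never extended, so it never becomes part of any $int(E_j)$; points of $J$ are excluded automatically.
\end{itemize}

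For the reverse inclusion, let $q^*$ be a limit of points $q_k \in int(E_{i_k})$ with $q^* \notin int(K_{\mathcal{S}})$. If the $i_k$ are bounded, $q^*$ lies on a finite-stage boundary. That boundary point either stabilized, or it lies in the interior of a vertical segment that was later extended; in the latter case $q^*$ would lie in some $int(E_j)$, a contradiction. If $i_k \to \infty$, I would pass to the rays of $T(\mathcal{S})$ containing the sweeps of $q_k$ and split by whether their common depth with a fixed ray grows.
\begin{itemize}
\item If the common depth is bounded, $q^*$ is a limit within finitely many subtrees and reduces to the bounded case.
\item If it is unbounded, a diagonal argument produces a ray $r$ with $q_k$ approaching $s^*_r$.
\end{itemize}
I expect this unbounded-depth case to be the main obstacle. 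It requires showing that points swept deep in branches diverging from $r$ actually converge into $cl(s^*_r)$, i.e.\ controlling the whole subtree rather than only $r$ itself. For that I would reuse the argument behind the continuity proof in Observation \ref{obs-phi-S}. Suppose the limit stayed at distance $\epsilon$ from $s^*_r$. Then the circles $C_1, C_2$ yield infinitely many disjoint arcs of $J$ crossing the annulus, and Observation \ref{obs-jordan-arc-limit} gives a contradiction. This places $q^*$ in $K_{\mathcal{S}}$.
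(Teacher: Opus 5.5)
Your route is the paper's own. The paper simply concludes the theorem from Observation~\ref{obs-phi-S} and the definition $int(K_{\mathcal{S}})=\cup_{i} int(E_i)$. Your handling of openness, connectedness, boundedness, item (2), and the bounded-index case of $bd(int(K_{\mathcal{S}}))\subset K_{\mathcal{S}}$ is fine.

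The step that fails is your claim that when the common depth stays bounded, $q^*$ ``is a limit within finitely many subtrees and reduces to the bounded case.'' Nodes of $T(\mathcal{S})$ may have countably infinitely many children, as in the Koch example. So when your diagonal argument stops at a node $w$, the $q_k$ may lie in pairwise distinct child subtrees of $w$. Then $i_k\to\infty$, and no single $int(E_i)$ contains the $q_k$. Nothing you have said prevents these sibling subtrees from sweeping long thin channels whose far ends accumulate at a point $q^*$ away from $bd(H(t_w))$.

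This case needs the same annulus argument you reserve for unbounded depth, with a different centre. The siblings attach along pairwise disjoint open pieces of vertical segments of $K_{t_w}$, and these pieces shrink to a point $c^*\in K_{t_w}$, for two reasons:
\begin{enumerate}
\item disjoint pieces of one segment have summable lengths;
\item $\mathcal{U}_{t_w}$ and $\mathcal{L}_{t_w}$ have one-sided limits everywhere (Lemma~\ref{lemma-left-limit-exists}), so only finitely many vertical segments of $K_{t_w}$ are longer than any fixed $\epsilon>0$.
\end{enumerate}
If $q^*\neq c^*$, take circles about $c^*$ of radii $d(c^*,q^*)/4$ and $d(c^*,q^*)/2$. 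These yield infinitely many disjoint arcs of $J$ crossing the annulus, contradicting Observation~\ref{obs-jordan-arc-limit}. So $q^*=c^*\in K_{t_w}$, and your bounded-index argument finishes. The paper itself only calls this case ``simpler'' in the continuity proof of Observation~\ref{obs-phi-S}, but it does not reduce to finitely many subtrees.

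A smaller omission: your disjointness check for $K_{\mathcal{S}}\cap int(K_{\mathcal{S}})$ skips the interiors of limiting segments $s^*_r$ of non-terminating rays. Suppose $x\in int(s^*_r)\cap int(E_j)$. Then a ball about $x$ lies in $int(E_j)$, so some segment of $r$ swept after stage $j$ meets $int(E_j)$. Since $int(E_j)$ is a union of whole open segments, that segment lies in $int(E_j)$. This is impossible, because each later sweep is added in the exterior of the curve it extends.
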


%% file: zorn-lemma-001.tex
\subsection{Applying Zorn's lemma: a partial order on countably infinite trees}
\label{sec-zorn-lemma}

\begin{definition}
{\bf Equivalence of two sweepline algorithms.} Let $\mathcal{S} = H(t_1), H(t_2), \ldots$
and $\mathcal{S}' = H(t'_1), H(t'_2), \ldots$ be two sweepline algorithms with countably
infinite horizontal sweeps. Then, $\mathcal{S} \sim \mathcal{S}'$ if and only if 
$int(K_{\mathcal{S}}) = int(K_{\mathcal{S}'})$.
\end{definition}

One can verify that $\sim$ is an equivalence relation. Let $\mathcal{T}$ be the set of equivalence classes of sweepline algorithms (with countably infinite steps), according to 
relation $\sim$.

\begin{lemma}
\label{lem-canonical}
{\bf Canonical form of a sweepline algorithm.} Let $\mathcal{S}$ be a sweepline algorithm. Then,
there exists a sweepline algorithm $\mathcal{S}'$ such that 

\begin{enumerate}
	\item $\mathcal{S'}$ is equivalent to $\mathcal{S}$ i.e., $\mathcal{S} \sim \mathcal{S}'$, and
	\item all vertical segments of $K_{\mathcal{S}'}$ are generated by nodes (horizontal sweeps) at finite depth in the recursion tree $T(\mathcal{S}')$. (Equivalently, there are no
	non-terminating infinite rays starting from the root of $T(\mathcal{S}')$.)
\end{enumerate}

We call $\mathcal{S}'$ a canonical form of sweepline algorithm $\mathcal{S}$.

{\it Note.} There can be more than one canonical forms of the sweepline algorithm $\mathcal{S}$.
\end{lemma}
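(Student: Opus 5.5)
The plan is to build $\mathcal{S}'$ from $\mathcal{S}$ by replacing every non-terminating infinite ray of $T(\mathcal{S})$ with the finite ray supplied by Lemma \ref{lem-non-term}. The interior must stay the same, and the replacements must still fit together as one legal sweepline algorithm, i.e.\ a countable sequence of extensions as in Section \ref{sec-sweepline-algorithm}.

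The first step is to enumerate the non-terminating maximal rays. By the observation following Observation \ref{obs-phi-S}, there are only countably many of them; call them $r_1, r_2, \ldots$. For each $r_k$, the proof of Lemma \ref{lem-non-term} gives three things:
\begin{itemize}
\item a depth $M_k$;
\item a tunnel segment $t_{s^*_{r_k}}$ crossing $s^*_{r_k}$;
\item a finite ray $r'_k \sim r_k$, obtained by truncating $r_k$ after depth $M_k$ and adding one horizontal sweep $H(t'_k)$ along $t_{s^*_{r_k}}$.
\end{itemize}
The construction of $\mathcal{S}'$ then keeps every node of $T(\mathcal{S})$ that is not a descendant of the truncation node of some $r_k$ below depth $M_k$. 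At each truncation point we attach $H(t'_k)$. Beneath $H(t'_k)$ we reattach the subtrees that hung off $r_k$ below depth $M_k$, re-rooted at the vertical segments of $bd(H(t'_k))$ from which they can now be started. To obtain the linear order that the definition requires, we interleave the countably many pieces by a diagonal (dovetailing) enumeration. Each horizontal sweep is then preceded by its parent, so every step is an extension of a piecewise-vertical Jordan curve, and Lemma \ref{piecewise-vertical-JCT} applies at every finite stage.

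The second step is to check that $int(K_{\mathcal{S}'}) = int(K_{\mathcal{S}})$. For the inclusion $\subseteq$, each added sweep $H(t'_k)$ lies in $int(r_k) \subseteq int(K_{\mathcal{S}})$, because $r'_k \sim r_k$. For the inclusion $\supseteq$, every open segment of $\mathcal{S}$ is of one of two kinds. Either it is kept, or it lies on $r_k$ or a side branch below depth $M_k$. Open segments of the second kind either lie in $int(r'_k)$ by Lemma \ref{lem-non-term}, or lie in a reattached subtree. I would organise this as transfinite-free induction on depth in $T(\mathcal{S})$, using the fact that $int(\cdot)$ is a union of open segments (Theorem \ref{thm-int-JCT-sweepline}(2)).

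The third step is to verify that $T(\mathcal{S}')$ has no new non-terminating rays. A new infinite ray of $T(\mathcal{S}')$ corresponds to an infinite ray of $T(\mathcal{S})$ that was rerouted through finitely many $H(t'_k)$. If it passes through infinitely many of them, then its nodes, and hence its limiting segment, are shared with a ray of $T(\mathcal{S})$ that is not among the $r_k$, so it is terminating. If it passes through only finitely many, it eventually agrees with a terminating ray of $T(\mathcal{S})$.

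The main obstacle is the reattachment in the first step. A side subtree that originally grew from a vertical segment of some $H(t_j)$ on $r_k$, with $j > M_k$, may start from a vertical segment that no longer appears on the new boundary. It might, for instance, be partly swallowed by $H(t'_k)$. My first attempt would be to discard such subtrees and replace each one by sweeps started from the vertical segments of $bd(H(t'_k))$ through the same open segments. Lemma \ref{lem-vertical-segment-crossing} and the tunnel lemma would be used to show these vertical segments exist, and the reconstruction would be repeated recursively. If that becomes unwieldy, the fallback is to apply Zorn's lemma within the class $[\mathcal{S}]$: choose a representative that minimises the set of non-terminating rays, and derive a contradiction via Lemma \ref{lem-non-term} if any such ray remains.
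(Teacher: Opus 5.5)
Your overall plan (enumerate the countably many non-terminating rays, replace each by a finite ray from Lemma~\ref{lem-non-term}, reattach what hangs off them) is the paper's plan. However, the step you flag as the main obstacle is a genuine gap, and the per-ray recipe in your first step does not work as written, because the non-terminating rays share prefixes. Whenever $r_j$ and $r_k$ still share nodes below the truncation node of one of them, the deeper truncation node is among the deleted descendants of the shallower one. There is then nowhere to attach its replacement sweep. The tail of that ray is carried, unreplaced and still non-terminating, into a subtree you reattach beneath the other replacement sweep. Lemma~\ref{lem-non-term} would have to be reapplied to the new ray through that sweep. Since $r_{k+1}$ may branch off $r_k$ below $r_k$'s truncation node for every $k$, this recursion need not stop, and you give neither a limit construction nor a proof that the limit is a legal sweepline algorithm. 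Your third step silently assumes every non-terminating tail has been replaced, which is exactly what fails here. The Zorn fallback does not rescue this. The non-terminating rays of different representatives of $[\mathcal{S}]$ live in different trees, so ``minimising the set'' needs an order you have not defined, and its chain condition is the same limit construction in disguise. Deriving a contradiction from a surviving non-terminating ray would also need one replacement-with-reattachment, which is the missing step.

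The paper avoids both problems by treating all the $r_i$ at once. It takes $T'$ to be the union of $r_1, r_2, \ldots$ and builds $T''$ incrementally. For each $r_i$ it finds the vertical segment $u_i$ of the current $bd(T'')$ at which $r_i$ first leaves the region already swept. From the node $v_i$ generating $u_i$ it hangs a finite ray covering only the unswept remainder of $r_i$, so shared prefixes are never deleted or duplicated. The components $T_1, T_2, \ldots$ of $T(\mathcal{S}) \setminus T'$ contain no non-terminating ray, since any such ray would be some $r_i$. Each $T_i$ is attached to whichever node of $T''$ generates the boundary segment $u'_i$ from which $\mathcal{S}$ launches its root; this need not be a replacement sweep.

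Your ``swallowing'' worry does not arise. Interiors of distinct sweeps of $\mathcal{S}$ are pairwise disjoint, and every replacement sweep lies inside the region of the ray it replaces. So launching points stay on the new boundary, and only the generating node changes.

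In your favour, the paper never checks that its final tree has no new non-terminating rays: each $v_{i+1}$ may lie on $r'_i$, giving an infinite chain of replacements. Your third step is the check that is needed, but argue it via Lemma~\ref{lem-W-r} applied to the new ray. When the new ray passes through infinitely many replacements, it still contains infinitely many original sweeps of one maximal ray of $T(\mathcal{S})$ that is not among the $r_k$. That ray is therefore terminating, and $|W(\cdot)|=1$ forces the new ray's limiting segment to be the same point. Saying the new ray's nodes are shared with an old ray is false as stated, since the replacement sweeps are new nodes.
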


\noindent {\bf Proof:} Let $T(\mathcal{S})$ be the recursion tree of $\mathcal{S}$. $T(\mathcal{S})$ has at most a 
countable number of non-terminating rays. Let $r_1, r_2, \ldots$ be an enumeration of the non-terminating infinite rays of $T(\mathcal{S})$. Let $T'$ be the subtree of $T(\mathcal{S})$ obtained by taking the union of rays $r_1, r_2, \ldots$.

We now construct a recursion tree $T''$, equivalent to $T'$, as follows. Initialize $T''$ to the single horizontal sweep corresponding to root of $T'$. For each $i=1,2, \ldots$, we repeat the following steps. Suppose, if we run the sweepline algorithm corresponding to ray $r_i$, it crosses the current boundary $bd(T'')$ at vertical segment $u_i$. Let $v_i$ be the node
of $T''$ which generates $u_i$. Let $r'_i$ be a finite ray which sweeps the same region as the region swept by ray $r_i$ {\it after it crosses $bd(T'')$ at $u_i$} (since $r_i$ is a non-terminating ray, such a finite ray $r'_i$ must exist). We add ray $r'_i$ as a child of node $v_i$.

We construct a final recursion tree $T_{final}$, from recursion tree $T''$ constructed above.
When we remove all nodes of $T'$ from $T(\mathcal{S})$, we are left with a countable number of subtrees
$T_1, T_2, \ldots$. For each $i \in \mathbb{N}$, let $w_i$ be the root node of $T_i$. Suppose
we execute the sweepline algorithm $\mathcal{S}$ again.
For each $i=1, 2, \ldots$, let $u'_i$ be the vertical segment of $bd(T'')$ such that $\mathcal{S}$ executes the horizontal sweep corresponding to node $w_i$, immediately after crossing $u'_i$.

To construct $T_{final}$, for each $i \in \mathbb{N}$, we attach $T_i$ as a child of node $v'_i$, where $v'_i$ is the node of $T''$ which generates $u'_i$.
The sweepline algorithm $\mathcal{S}'$ corresponding to $T_{final}$ is the canonical form of $\mathcal{S}$.
$\blacksquare$

We now define a partial order on $\mathcal{T}$:

\begin{definition}
{\bf The partial order $<$.} Let $\mathcal{S}_1$ and $\mathcal{S}_2$ be two sweepline algorithms.
Then, $\mathcal{S}_1 < \mathcal{S}_2$ if and only if {\bf for every} sweepline algorithm $\mathcal{S}'_1$ such that $\mathcal{S}'_1$ is the canonical form of $\mathcal{S}_1$, there exists a sweepline algorithm $\mathcal{S}'_2$ such that

\begin{enumerate} 
\item $\mathcal{S}'_2$ is the canonical form of $\mathcal{S}_2$, and
\item the recursion tree $T(\mathcal{S}'_1)$ is a {\it proper} subgraph of the recursion tree 
$T(\mathcal{S}'_2)$ and the root of $T(\mathcal{S}'_1)$ is the same as the root of $T(\mathcal{S}'_2)$.

\end{enumerate}
\end{definition}

One can prove that $<$ is a partial order.

\begin{observation}
{\bf Properties of $<$.}
\begin{enumerate}
\item If $\mathcal{S}_1 \sim \mathcal{S}'_1$, $\mathcal{S}_2 \sim \mathcal{S}'_2$ and $\mathcal{S}_1 < \mathcal{S}_2$, then
$\mathcal{S}'_1 < \mathcal{S}'_2$.
\item {\bf Anti-symmetry.} If $\mathcal{S}_1 < \mathcal{S}_2$ and $\mathcal{S}_2 < \mathcal{S}_1$, then $\mathcal{S}_1 \sim \mathcal{S}_2$.
\item {\bf Transitivity.} If $\mathcal{S}_1 < \mathcal{S}_2$ and $\mathcal{S}_2 < \mathcal{S}_3$, then
$\mathcal{S}_1 < \mathcal{S}_3$.
\end{enumerate}

\end{observation}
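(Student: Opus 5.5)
The three properties of $<$ all follow from the definition, which is phrased entirely through canonical forms. The plan is to reduce each property to facts about canonical forms and about rooted subtrees. The key preliminary fact is that the set of canonical forms of $\mathcal{S}$ depends only on the equivalence class $[\mathcal{S}]$. Suppose $\mathcal{S} \sim \mathcal{S}'$ and $\mathcal{S}''$ is a canonical form of $\mathcal{S}$. Then $int(K_{\mathcal{S}''}) = int(K_{\mathcal{S}}) = int(K_{\mathcal{S}'})$, and condition (2) of Lemma \ref{lem-canonical} is a property of $\mathcal{S}''$ alone. Hence $\mathcal{S}''$ is also a canonical form of $\mathcal{S}'$. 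I will state this as a preliminary remark and use it throughout.

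\emph{Property (1).} Suppose $\mathcal{S}_1 \sim \mathcal{S}'_1$, $\mathcal{S}_2 \sim \mathcal{S}'_2$ and $\mathcal{S}_1 < \mathcal{S}_2$. Take any canonical form $\mathcal{C}_1$ of $\mathcal{S}'_1$. By the preliminary remark, $\mathcal{C}_1$ is a canonical form of $\mathcal{S}_1$. The definition of $\mathcal{S}_1 < \mathcal{S}_2$ then gives a canonical form $\mathcal{C}_2$ of $\mathcal{S}_2$ with $T(\mathcal{C}_1)$ a proper rooted subgraph of $T(\mathcal{C}_2)$. Again by the remark, $\mathcal{C}_2$ is a canonical form of $\mathcal{S}'_2$. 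This is exactly $\mathcal{S}'_1 < \mathcal{S}'_2$.

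\emph{Property (3), transitivity.} Take any canonical form $\mathcal{C}_1$ of $\mathcal{S}_1$. From $\mathcal{S}_1 < \mathcal{S}_2$ we get a canonical form $\mathcal{C}_2$ of $\mathcal{S}_2$ with $T(\mathcal{C}_1) \subsetneq T(\mathcal{C}_2)$ and the same root. Because $\mathcal{S}_2 < \mathcal{S}_3$ quantifies over \emph{every} canonical form of $\mathcal{S}_2$, it applies to $\mathcal{C}_2$. This yields a canonical form $\mathcal{C}_3$ of $\mathcal{S}_3$ with $T(\mathcal{C}_2) \subsetneq T(\mathcal{C}_3)$ and the same root. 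Proper rooted subgraph containment is transitive, so $T(\mathcal{C}_1) \subsetneq T(\mathcal{C}_3)$, which gives $\mathcal{S}_1 < \mathcal{S}_3$.

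\emph{Property (2), anti-symmetry.} This is the step I expect to be the main obstacle, because the definition is strict. Chaining the two relations as in transitivity produces a canonical form $\mathcal{C}_1$ of $\mathcal{S}_1$ and a canonical form $\mathcal{C}'_1$, also of $\mathcal{S}_1$, with $T(\mathcal{C}_1) \subsetneq T(\mathcal{C}_2) \subsetneq T(\mathcal{C}'_1)$.

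\begin{itemize}
\item[(a)] First I would show that the swept region is monotone under rooted subtree inclusion. If $T(\mathcal{C}) \subseteq T(\mathcal{C}')$ with the same root, then every horizontal sweep of $\mathcal{C}$ is a sweep of $\mathcal{C}'$. Hence $int(K_{\mathcal{C}}) = \cup_i int(E_i) \subseteq int(K_{\mathcal{C}'})$. Applied to the chain, this gives $int(K_{\mathcal{S}_1}) \subseteq int(K_{\mathcal{S}_2}) \subseteq int(K_{\mathcal{S}_1})$.
\item[(b)] This forces $int(K_{\mathcal{S}_1}) = int(K_{\mathcal{S}_2})$, that is, $\mathcal{S}_1 \sim \mathcal{S}_2$, which is the claimed conclusion.
\item[(c)] To reconcile this with strictness, I would read the order on equivalence classes as ``$<$ or $\sim$''. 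The statement then says that two classes related both ways coincide, and steps (a) and (b) give precisely that. The delicate point is ensuring that a proper extension of a canonical tree inside the same class does not cause trouble. Since the conclusion only concerns swept regions, the monotonicity argument in (a) suffices and no finer analysis of the trees is needed.
\end{itemize}
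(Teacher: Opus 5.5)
Your proofs of (1) and (3) match the paper's. For (3) you use the same chaining of canonical forms. For (1), your preliminary remark spells out the paper's ``follows from the definition'': a canonical form depends only on the class, because condition (2) of Lemma~\ref{lem-canonical} is a property of the algorithm alone.

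For (2) your route differs. The paper uses \emph{strict} monotonicity: $\mathcal{S}_1 < \mathcal{S}_2$ forces $int(K_{\mathcal{S}_1}) \subsetneq int(K_{\mathcal{S}_2})$, and symmetrically the reverse. So the hypothesis is contradictory and (2) holds vacuously. You use only non-strict monotonicity (adding nodes to a rooted recursion tree never shrinks the swept region) and derive $int(K_{\mathcal{S}_1}) = int(K_{\mathcal{S}_2})$ directly. This is valid and needs less. Steps (a)--(b) already prove the statement exactly as written, so the reinterpretation in (c) as ``$<$ or $\sim$'' is unnecessary.

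What the paper's route buys is the fact your (c) worries about: a proper extension within one class cannot occur. The reasoning runs as follows.
\begin{itemize}
\item A proper rooted subgraph of a tree that is itself a tree must omit some node.
\item The omitted sweep $H(t)$ has $t$, minus its attaching endpoint, in the exterior of the region swept before it.
\item Hence $int(H(t))$ is disjoint from every other sweep's interior. Being open, it is not covered by the countably many connecting segments $s_p$.
\item So it adds points outside $int(K_{\mathcal{S}_1})$.
\end{itemize}
This makes $<$ irreflexive and asymmetric on $\mathcal{T}$, i.e.\ a genuine strict order. It is also the same strictness the paper later needs in the chain lemma, where the inner measure $\zeta$ must be strictly increasing along $<$.
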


\noindent {\bf Proof:} $(1)$ Follows from definition of $<$.

$(2)$ Since $\mathcal{S}_1 < \mathcal{S}_2$, $int(K_{\mathcal{S}_1}) \subsetneq int(K_{\mathcal{S}_2})$. Further, since $\mathcal{S}_2 < \mathcal{S}_1$, $int(K_{\mathcal{S}_2}) \subsetneq int(K_{\mathcal{S}_1})$. We arrive at a contradiction.

$(3)$ Let $\mathcal{S}'_1$ be a canonical form of $\mathcal{S}_1$. Since $\mathcal{S}_1 < \mathcal{S}_2$, there exists a canonical form $\mathcal{S}'_2$ of $\mathcal{S}_2$ such that $T(\mathcal{S}'_1)$
is a proper subgraph of $T(\mathcal{S}'_2)$ (with the same root node). Since $\mathcal{S}_2 < \mathcal{S}_3$, there exists a canonical form $\mathcal{S}'_3$ of $\mathcal{S}_3$ such that $T(\mathcal{S}'_2)$
is a proper subgraph of $T(\mathcal{S}'_3)$ (with the same root node). Thus, $T(\mathcal{S}'_1)$ is a proper subgraph of $T(\mathcal{S}'_3)$ and hence $\mathcal{S}_1 < \mathcal{S}_3$.
$\blacksquare$

\begin{lemma}
Every chain $\mathcal{C}$ in $(\mathcal{T}, <)$ has a maximal element.
\end{lemma}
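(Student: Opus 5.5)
The statement is to be read as asserting that every chain $\mathcal{C}$ in $(\mathcal{T},<)$ has an \emph{upper bound} in $\mathcal{T}$, which is what Zorn's lemma needs. The plan is to reduce $\mathcal{C}$ to a countable increasing sequence, and then take the union of nested canonical recursion trees. If $\mathcal{C}$ has a largest element there is nothing to prove, so assume it does not. Let $U=\bigcup_{\mathcal{S}\in\mathcal{C}} int(K_{\mathcal{S}})$. Each $int(K_{\mathcal{S}})$ is open by Theorem \ref{thm-int-JCT-sweepline}, and $\mathbb{R}^2$ is second countable (Lindel\"of). Hence countably many members $\mathcal{D}_1,\mathcal{D}_2,\ldots$ of $\mathcal{C}$ already have union $U$. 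Since $\mathcal{C}$ is a chain, set $\mathcal{C}_k=\max(\mathcal{D}_1,\ldots,\mathcal{D}_k)$. This gives $\mathcal{C}_1\le\mathcal{C}_2\le\cdots$ in $\mathcal{C}$ with $\bigcup_k int(K_{\mathcal{C}_k})=U$. Discarding repetitions, we may assume the inequalities are strict.

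Next I build the union tree. Fix a canonical form $\mathcal{C}'_1$ of $\mathcal{C}_1$, which exists by Lemma \ref{lem-canonical}. By the definition of $<$ applied to $\mathcal{C}_1<\mathcal{C}_2$, there is a canonical form $\mathcal{C}'_2$ of $\mathcal{C}_2$ whose recursion tree properly contains $T(\mathcal{C}'_1)$ with the same root. Inductively we obtain nested trees $T(\mathcal{C}'_1)\subsetneq T(\mathcal{C}'_2)\subsetneq\cdots$, all rooted at the sweep through $p^*$. Let $T^*=\bigcup_k T(\mathcal{C}'_k)$. It is a countable union of countable sets, hence countable. Enumerate its nodes so that every parent precedes its children; for example, list the nodes of $T(\mathcal{C}'_1)$ first, then the new nodes of $T(\mathcal{C}'_2)$, and so on, with each block ordered as in its own algorithm. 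I will check that this sequence $\mathcal{S}^*$ of horizontal sweeps is a legitimate sweepline algorithm. Each extension step uses a vertical segment of the current boundary, and the new segment $t'$ lies in the exterior of the current region. This should hold because the same step is legal inside some finite stage $\mathcal{C}'_k$. Moreover the current region of $\mathcal{S}^*$ at that moment is contained in $int(K_{\mathcal{C}'_k})$, and the step's segment avoids that set. Theorem \ref{thm-int-JCT-sweepline} then applies to $\mathcal{S}^*$, so $\mathcal{S}^*$ defines an element of $\mathcal{T}$ with $int(K_{\mathcal{S}^*})=U$.

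It remains to show that $\mathcal{S}^*$ bounds every $\mathcal{D}\in\mathcal{C}$. If $\mathcal{D}\le\mathcal{C}_k$ for some $k$, then transitivity gives $\mathcal{D}\le\mathcal{C}_{k+1}<\mathcal{S}^*$, once each $\mathcal{C}_k<\mathcal{S}^*$ is established. Otherwise $\mathcal{D}>\mathcal{C}_k$ for all $k$. Then $int(K_{\mathcal{D}})\supseteq U$, and since $int(K_{\mathcal{D}})\subseteq U$ we get equality, so $\mathcal{D}\sim\mathcal{S}^*$. To prove $\mathcal{C}_k<\mathcal{S}^*$, the definition of $<$ quantifies over \emph{every} canonical form of $\mathcal{C}_k$. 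So I must redo the nesting construction starting from an arbitrary canonical form of $\mathcal{C}_k$ and show that the resulting union tree is equivalent to $\mathcal{S}^*$; by property (1) of $<$ this suffices. I also need the union tree $T^*$ itself to be canonical. It may acquire non-terminating infinite rays that were finite in every $T(\mathcal{C}'_k)$, so I will replace them by finite rays using Lemma \ref{lem-non-term} and the surgery in the proof of Lemma \ref{lem-canonical}. That surgery leaves $int(K_{\mathcal{S}^*})=U$ unchanged, but it may destroy the containments $T(\mathcal{C}'_k)\subset T^*$.

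I expect the main obstacle to be reconciling canonicalisation with the subtree relation. I would first try to prove directly that no new non-terminating ray appears in $T^*$. Suppose $r$ were such a ray. By Lemma \ref{lem-non-term} there is a finite $r'\sim r$. Each prefix of $r$ lies in some finite-depth part of a $T(\mathcal{C}'_k)$. The tunnel lemma would then force the limiting segment $s^*_r$ to be a vertical segment of $K_{\mathcal{C}'_k}$ for some $k$, generated at finite depth. Since $r$ continues past $s^*_r$ in later stages, this should contradict the nesting $T(\mathcal{C}'_k)\subset T(\mathcal{C}'_{k+1})$.
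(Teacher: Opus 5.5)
Your architecture matches the paper's. You pass to a countable cofinal subchain, take the union of nested canonical recursion trees, and rerun the nesting from an arbitrary canonical form to meet the ``for every canonical form'' clause of $<$ (the paper reruns with $\mathcal{S}$ inserted into its sequence). Your Lindel\"of reduction is a valid substitute for the paper's inner-measure argument.

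\textbf{The gap.} To get $\mathcal{C}_k<[\mathcal{S}^*]$ you need, for each canonical form $\mathcal{F}$ of $\mathcal{C}_k$, a \emph{canonical} form of $\mathcal{S}^*$ whose tree properly contains $T(\mathcal{F})$. Your plan is to show that $T^*$ acquires no new non-terminating ray, and that claim is false. Take the following example:
\begin{itemize}
\item $J$ is the circle of radius $2$ about the origin, and $t_n=[1-2^{1-n},1-2^{-n}]\times\{0\}$.
\item $\mathcal{C}_k$ is the finite ray $H(t_1),\ldots,H(t_k)$ (padded by a fixed terminating branch toward $(-2,0)$ if you insist on infinitely many steps), and $\mathcal{C}'_k=\mathcal{C}_k$.
\end{itemize}
Each $\mathcal{C}'_k$ is canonical and is a proper rooted subtree of the next. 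Yet $T^*$ contains the infinite ray $r=H(t_1),H(t_2),\ldots$, whose limiting segment is the chord $x=1$, of positive length. That chord is not a vertical segment of any $K_{\mathcal{C}'_k}$ (the rightmost one is the chord $x=1-2^{-k}$), and $r$ never continues past it. So the tunnel-lemma step in your last paragraph fails. In fact no canonical tree can contain all the $T(\mathcal{C}'_k)$ at once, since it would contain $r$.

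\textbf{The missing idea.} The definition of $<$ never asks for a single tree containing every stage. It asks, one $\mathcal{F}$ at a time, for \emph{some} canonical form of $[\mathcal{S}^*]$ containing $T(\mathcal{F})$, and different $k$ may use different canonical forms. So fix $\mathcal{F}$ and rerun the nesting from $T(\mathcal{F})$ along $\mathcal{C}_k<\mathcal{C}_{k+1}<\cdots$. This gives a tree $T_{\mathcal{F}}\supsetneq T(\mathcal{F})$ with interior $U$. Then canonicalize only below $T(\mathcal{F})$. Because $T(\mathcal{F})$ is canonical, each non-terminating infinite ray of $T_{\mathcal{F}}$ has only a finite prefix in $T(\mathcal{F})$. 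Hence the surgery of Lemma~\ref{lem-canonical} (via Lemma~\ref{lem-non-term}) can be applied to the tails of these rays after they leave $T(\mathcal{F})$, leaving $T(\mathcal{F})$ intact. In the example this yields $H(t_1),\ldots,H(t_k)$ followed by the single sweep $H([1-2^{-k},1]\times\{0\})$.

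The paper's proof does not supply this step either: it asserts $\mathcal{S}<\mathcal{S}'$ ``by construction'' without checking that the union tree is canonical.

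\textbf{Two smaller slips.}
\begin{itemize}
\item Listing all of $T(\mathcal{C}'_1)$ before the new nodes of $T(\mathcal{C}'_2)$ is not an $\omega$-sequence when $T(\mathcal{C}'_1)$ is infinite, as it is for elements of $\mathcal{T}$. You need a dovetailed, parent-first enumeration.
\item The new step's segment does not avoid $int(K_{\mathcal{C}'_k})$, since it lies inside it. It avoids only the sweeps of the other nodes of $T(\mathcal{C}'_k)$.
\end{itemize}
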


\noindent {\bf Proof:} We assume that $\mathcal{C}$ has an infinite number of sweepline algorithms.
The case where $\mathcal{C}$ has a finite number of sweepline algorithms, is a special case.

Note that the interior of any horizontal sweep $H(t)$, where the horizontal segment $t$ has positive length, contains
an open ball of positive radius. Let $\mu_{*}(int(H(t))$ be the inner measure of $int(H(t))$ (see \cite{kolmogorov-fomin} for a definition of inner measure). Thus, $\mu_{*}(int(H(t))$ is a positive real number. 

Let $\zeta: \mathcal{T} \rightarrow \mathbb{R}$ be the map which maps a sweepline algorithm $\mathcal{S} \in \mathcal{T}$ to its inner measure $\mu_{*}(\mathcal{S})$. Since all sweepline algorithms in $\mathcal{T}$ consist
of only finite open segments, they stay within the bounding box $\mathcal{B}$, and hence $\mu_{*}(\mathcal{S}) < \mu(\mathcal{B})$, where $\mu$ is the Lebesgue measure. Thus, the range of $\zeta$ lies within the finite open interval
$I = (0, \mu(\mathcal{B}))$.

Further, if $\mathcal{S}_1 < \mathcal{S}_2$, let $H(t)$ be an additional horizontal sweep in $\mathcal{S}_2$ that is
not in $\mathcal{S}_1$. Then, $\mu_{*}(\mathcal{S}_2) \geq \mu_{*}(\mathcal{S}_1) + \mu_{*}(int(H(t)) > \mu_{*}(\mathcal{S}_1)$.

Thus, $\zeta$ is an increasing function from chain $\mathcal{C}$ to interval $I$.
Thus, there exists a countably infinite chain $\mathcal{S}_1 < \mathcal{S}_2 < \cdots$ of sweepline algorithms from
chain $\mathcal{C}$ such that for every sweepline algorithm $\mathcal{S} \in \mathcal{C}$, there exists
a natural number $i$ such that $\mathcal{S}_i \leq \mathcal{S} < \mathcal{S}_{i+1}$. (Note that this property is not true for chains of any arbitrary partial order over an uncountable set, see for example, Aronszajn line \cite{aronszajn}.)

The maximal element $\mathcal{S}^*$ of chain $\mathcal{C}$ is constructed inductively as follows:

\begin{enumerate}
\item Initialize $T(\mathcal{S}^*)$ to the recursion tree $T(\mathcal{S}'_1)$, where $\mathcal{S}'_1$ is any canonical form of $\mathcal{S}_1$.

\item For $i = 2, 3, \ldots$:
	\begin{enumerate}
		\item Since $\mathcal{S}_{i-1} < \mathcal{S}_i$ and current $\mathcal{S}^*$ is a canonical 
		form of 
		$\mathcal{S}_{i-1}$, there exists a canonical form $\mathcal{S}'_i$ of $\mathcal{S}_i$ such 
		that $T(\mathcal{S}^*)$ is a proper subgraph of $T(\mathcal{S}'_i)$.
		\item Set $T(\mathcal{S}^*)$ to $T(\mathcal{S}'_i)$.
		
	\end{enumerate}
\end{enumerate}

Since the infinite loop will run a countable number of times, the final recursion tree $T(\mathcal{S}^*)$
will have countably infinite nodes (or, horizontal sweeps). Hence, $\mathcal{S}^*$ is also a
sweepline algorithm. 

Let $\mathcal{S} \in \mathcal{C}$, be any sweepline algorithm in chain $\mathcal{C}$. By the definition of $\mathcal{S}_i$'s, $i \in \mathbb{N}$, there exists a natural number $i_0$ such that
$\mathcal{S}_{i_0} \leq \mathcal{S} < \mathcal{S}_{i_0+1}$. 
Run the above algorithm, with the sequence $\mathcal{S}_1 < \mathcal{S}_2 < \cdots < \mathcal{S}_{i_0} \leq \mathcal{S} < \mathcal{S}_{i_0+1} < \mathcal{S}_{i_0+2} < \cdots$. Let $\mathcal{S}'$ be the final sweepline algorithm obtained at the end of the algorithm. By construction, $\mathcal{S} < \mathcal{S}'$. Since $\mathcal{S}' \sim \mathcal{S}^*$, we conclude that $\mathcal{S} < \mathcal{S}^*$. 

Hence, $\mathcal{S} < \mathcal{S}^*$, for each $\mathcal{S} \in \mathcal{C}$. Thus, 
$\mathcal{S}^*$ is a maximal element of chain $\mathcal{C}$ in $\mathcal{T}$.
$\blacksquare$

\begin{lemma}
There is a maximal element $\mathcal{S}_{max}$ in $\mathcal{T}$.
\end{lemma}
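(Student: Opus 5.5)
The plan is to apply Zorn's lemma directly to the partially ordered set $(\mathcal{T}, <)$. Zorn's lemma requires three inputs: $\mathcal{T}$ is non-empty, $<$ (with its reflexive closure $\leq$ given by $\sim$) is a partial order on $\mathcal{T}$, and every chain has an upper bound. The conclusion is then that $\mathcal{T}$ has a maximal element $\mathcal{S}_{max}$.

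For non-emptiness, Assumption $1$ supplies the point $p^*$. The set $l_{p^*} - J$ is open, and the same holds for points near $p^*$, so I pick a short horizontal segment $t_1$ with $p^* \in int(t_1)$ and $int(t_1) \cap J = \phi$. This uses that $\mathbb{R}^2 - J$ is open because $J$ is compact. The single sweep $H(t_1)$, viewed as a (trivially padded) sweepline algorithm, gives an element of $\mathcal{T}$, and Lemma \ref{JCT-horizontal-sweep} guarantees its interior is well defined. Only open segments in $[s_{p^*}]$ are involved, so they are all finite.

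The partial-order axioms are the Observation on properties of $<$. Item (1) shows that $<$ is well defined on $\sim$-classes. Item (2) is antisymmetry and item (3) is transitivity. Irreflexivity follows because a proper subgraph relation strictly increases $int(K_{\mathcal{S}})$.

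The chain condition is the preceding lemma. It constructs, for any chain $\mathcal{C}$, an $\mathcal{S}^*$ with $\mathcal{S} < \mathcal{S}^*$ (hence $\leq$) for all $\mathcal{S} \in \mathcal{C}$, so $\mathcal{S}^*$ is an upper bound in $\mathcal{T}$. For the empty chain, the element built from $H(t_1)$ serves as the bound.

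The main obstacle is checking that the bound produced by the chain lemma really lies in $\mathcal{T}$. It must be a sweepline algorithm with countably many sweeps whose root is the fixed sweep through $p^*$. Countability holds because it is a countable union of countable recursion trees. The common root holds because each step of the chain construction extends a tree with the same root. Finally, Theorem \ref{thm-int-JCT-sweepline} applies to it, so its interior is well defined and the order comparisons make sense. With these checks in place, Zorn's lemma yields $\mathcal{S}_{max}$.
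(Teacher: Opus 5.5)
Your proposal is correct and follows the paper. The paper's entire proof is "apply Zorn's lemma to $(\mathcal{T},<)$", resting implicitly on the Observation on properties of $<$ and the preceding chain lemma; you just make explicit the non-emptiness (via $H(t_1)$ with $p^* \in int(t_1)$), irreflexivity, and empty-chain checks that the paper leaves unstated. One small caution: take only the non-strict bound $\mathcal{S} \leq \mathcal{S}^*$ from the chain lemma, as you ultimately do, because a chain with a largest element, such as $\{\mathcal{S}_{max}\}$, has no strict upper bound, and for such chains the largest element itself is the required bound.
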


\noindent {\bf Proof:} Apply Zorn's lemma \cite{kolmogorov-fomin} on $\mathcal{T}$. $\blacksquare$

\begin{lemma}
$int(K_{\mathcal{S}_{max}})$ is a bounded, connected open set with $bd(K_{\mathcal{S}_{max}}) = J$. 
\end{lemma}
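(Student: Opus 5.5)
Since $\mathcal{S}_{max}$ is a sweepline algorithm with countably many steps, Theorem \ref{thm-int-JCT-sweepline} applies to it. It gives directly that $int(K_{\mathcal{S}_{max}})$ is bounded, connected and open, that $bd(int(K_{\mathcal{S}_{max}})) = K_{\mathcal{S}_{max}}$, and that $K_{\mathcal{S}_{max}}$ is a piecewise-vertical Jordan curve. Writing it as the disjoint union $\mathcal{P}(K_{\mathcal{S}_{max}}) \cup \mathcal{V}(K_{\mathcal{S}_{max}})$, the claim $bd(K_{\mathcal{S}_{max}}) = J$ reduces to two facts. The first is that $\mathcal{V}(K_{\mathcal{S}_{max}})$ is empty, so that $K_{\mathcal{S}_{max}} \subset J$. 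The second is that a Jordan curve contained in the Jordan curve $J$ must equal $J$.

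\textbf{No vertical segments.} I argue by contradiction. Suppose $u \in \mathcal{V}(K_{\mathcal{S}_{max}})$, and pass to a canonical form $\mathcal{S}'$ of $\mathcal{S}_{max}$ given by Lemma \ref{lem-canonical}. By that lemma, $u$ is generated by a node $H(t_i)$ at finite depth of $T(\mathcal{S}')$. Pick $p \in int(u)$. Since $u \cap J$ contains at most the endpoints of $u$, and $int(K_{\mathcal{S}_{max}})$ lies locally on one side of $u$ (by the structure of $K_t$ in Lemma \ref{JCT-horizontal-sweep}), there is a short horizontal segment $t'$ with left (or right) endpoint $p$ and $t' - \{p\} \subset ext(K_{\mathcal{S}_{max}})$. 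Here I use that $ext$ is open and that nearby points on the other side of $u$ are not in $int$; this follows from $u \subset bd(int)$ together with $\mathbb{R}^2 = int \cup K \cup ext$ disjointly, since a point near $u$ but off $u$ cannot lie on $J$ for $t'$ short.

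Next I form $\mathcal{S}''$ by adding $H(t')$ as a child of $H(t_i)$ in $T(\mathcal{S}')$. Enumerating this node at any position gives a sweepline algorithm with countably many steps; the extension step is legal because $u$ persists as a vertical segment of every later $K_j$. Then $int(K_{\mathcal{S}''}) \supsetneq int(K_{\mathcal{S}_{max}})$, because it contains $int(t')$. Moreover $T(\mathcal{S}')$ is a proper subgraph of $T(\mathcal{S}'')$ with the same root, and $\mathcal{S}''$ is still canonical, since the new node has finite depth. Hence $\mathcal{S}_{max} < \mathcal{S}''$, contradicting maximality.

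\textbf{Main obstacle.} The subtle point is the quantifier ``for every canonical form'' in the definition of $<$. I have to check that adding $H(t')$ to \emph{any} canonical form $\mathcal{S}'_1$ of $\mathcal{S}_{max}$ yields a canonical form of $\mathcal{S}''$. I will do this using the fact that $u$ is generated at finite depth in every canonical form, which holds by Lemma \ref{lem-canonical}(2) applied to each canonical form.

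\textbf{Equality with $J$.} Once $\mathcal{V}$ is empty, $K_{\mathcal{S}_{max}} \subset J$. Both $\psi$ and $\phi_{\mathcal{S}_{max}}$ are injective and continuous from $\mathbb{S}^1$, so $\psi^{-1} \circ \phi_{\mathcal{S}_{max}}$ is a continuous injection $\mathbb{S}^1 \to \mathbb{S}^1$ (compact to Hausdorff, so $\psi^{-1}$ is continuous on $J$). A continuous injection of the circle into itself is surjective: a proper image would lie inside an arc, and a circle cannot embed in an interval, as one sees by removing two points. Hence $K_{\mathcal{S}_{max}} = J$, and $bd(int(K_{\mathcal{S}_{max}})) = J$.
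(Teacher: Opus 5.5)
Your proposal follows the paper's own argument: a vertical segment of $K_{\mathcal{S}_{max}}$ is placed at finite depth by passing to a canonical form, and it then admits a one-sweep extension that contradicts maximality. You are more careful than the paper on two points, namely the ``for every canonical form'' quantifier in $<$ and the upgrade from $K_{\mathcal{S}_{max}} \subset J$ to $K_{\mathcal{S}_{max}} = J$. One step has a justification that does not work, namely $t'-\{p\}\subset ext(K_{\mathcal{S}_{max}})$: the facts $u\subset bd(int(K_{\mathcal{S}_{max}}))$ and that $\mathbb{R}^2$ is the disjoint union of $int$, $K$ and $ext$ do not rule out interior points on both sides of $u$ near $p$ (a slit), so you need a further argument. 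For instance, a finite-turn rectilinear path inside the interior joining the two sides would close up through $p$ into a $J$-free rectilinear polygon that meets $cl(s_p)$ only at $p$, which would put the two endpoints of $s_p$ (both on $J$) on opposite sides and contradict JCT for rectilinear polygons; the paper's proof simply asserts this step as well.
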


\noindent {\bf Proof:} Suppose, for the sake of contradiction, that $bd(K_{\mathcal{S}_{max}}) \neq J$.
Then, $T(\mathcal{S}_{max})$ has at least one non-terminating ray $r$. Let $I^*_r$ be the limiting
segment of ray $r$. Let $\mathcal{S}'$ be any canonical form of the sweepline algorithm
$\mathcal{S}_{max}$. Then, $I^*_r$ is a vertical segment of $K_{\mathcal{S}'} = K_{\mathcal{S}_{max}}$. Further, $I^*_r$ is the limiting segment of a {\bf finite} ray $r'$ of $\mathcal{S}'$. 

We can extend $\mathcal{S}'$ by a horizontal sweep $H(t)$, such that one endpoint of $t$
is in $int(I^*_r)$ and the rest of the segment $t$ lies in $ext(K_{\mathcal{S}'})$. 

Let $\mathcal{S}''$ be the extended sweepline algorithm.
Then, by definition, $\mathcal{S}' < \mathcal{S}''$. We conclude that $\mathcal{S}_{max} < \mathcal{S}''$. This contradicts the maximality
of $\mathcal{S}_{max}$, and we arrive at a contradiction.

Hence, the statement of the lemma is true.
$\blacksquare$

%% file: root-segment-001.tex
\subsection{Finding the root segment $s_{p^*}$}
\label{sec-root-segment}

Let $p_{min}$ be a point with minimum $x$-coordinate among all points of $J$. Let $p_{max}$ be a point with maximum $x$-coordinate
among all points of $J$. Since $J$ is a closed set, $p_{min}$ and
$p_{max}$ exist. Further, $p_{min}$ has $x$-coordinate strictly
less than $p_{max}$. For otherwise, $J$ will be a subset of the vertical line through $p_{min}$, a contradiction.

Let $l$ be a line in the middle of the vertical lines through $p_{min}$ and $p_{max}$. Let $J_1$ be the arc of $J$ from $p_{min}$
to $p_{max}$ and let $J_2$ be the remaining arc of $J$ from $p_{max}$
to $p_{min}$.

\begin{lemma}
\label{lem-root-seg}
There exists a point $p^* \in l - l \cap J$ such that the upper and lower endpoints of $s_{p^*}$ belong to different arcs in the set $\{ J_1, J_2\}$.
\end{lemma}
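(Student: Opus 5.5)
The plan is to look only at the vertical line $l$ and study the compact set $l \cap J$. Both $J_1$ and $J_2$ are arcs from $p_{min}$ to $p_{max}$. Since $p_{min}$ lies strictly left of $l$ and $p_{max}$ lies strictly right of $l$, continuity of $\psi_x$ and the intermediate value theorem show that each of $J_1$ and $J_2$ meets $l$. So $C_1 = l \cap J_1$ and $C_2 = l \cap J_2$ are nonempty compact subsets of $l$. Because $\psi$ is one-to-one and $J_1 \cap J_2 = \{p_{min}, p_{max}\}$, and neither point lies on $l$, the sets $C_1$ and $C_2$ are disjoint. This is the only place where the Jordan property is used.

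Next I will apply the one-dimensional observation already made in the paper: $l - J$ is a disjoint union of at most countably many open intervals. Exactly two of them, above $\max(C_1 \cup C_2)$ and below $\min(C_1 \cup C_2)$, are unbounded. Every bounded complementary interval is an open segment $s_{p}$ whose endpoints lie in $l \cap J = C_1 \cup C_2$. I must show that at least one such bounded interval has one endpoint in $C_1$ and the other in $C_2$.

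Suppose, for contradiction, that every bounded complementary interval has both endpoints in $C_1$ or both in $C_2$. Let $K = C_1 \cup C_2$, a compact subset of the interval $[\min K, \max K]$. I claim $C_1$ would then be both open and closed in $K$. Take $x \in C_1$. Since $C_2$ is compact and disjoint from $C_1$, there is $\delta>0$ with $(x-\delta,x+\delta) \cap C_2 = \emptyset$, so $C_1$ is relatively open in $K$; symmetrically $C_2$ is too. That alone is not a contradiction, since $K$ need not be connected. The real argument uses the order structure instead. Let $a = \sup\{y \in K : y \le \text{all of } \ldots\}$; more cleanly, set $y_0 = \min K$ and, without loss of generality, assume $y_0 \in C_1$. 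Define $c = \inf C_2$ (so $c > y_0$) and $b = \sup (C_1 \cap [y_0, c])$, which lies in $C_1$ by compactness, with $b < c$. The open interval $(b,c)$ contains no point of $K$: points of $C_2$ are $\ge c$, and points of $C_1$ in $[y_0,c]$ are $\le b$. Hence $(b,c)$ is a bounded complementary interval, i.e.\ $(b,c) = s_{p^*}$ for any $p^*$ in it. Its lower endpoint $b$ lies in $J_1$ and its upper endpoint $c$ lies in $J_2$, which proves the lemma directly. No contradiction argument is needed after all.

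The step that needs the most care is the existence and closedness claims: $b \in C_1$ because $C_1 \cap [y_0,c]$ is compact and nonempty, and $c \in C_2$ because $C_2$ is compact. The inequality $b<c$ then follows from disjointness. I also need to check that $b_{p^*}=b$ and $u_{p^*}=c$ in the paper's definition of $s_p$ via suprema and infima, which is immediate since $(b,c)\cap J=\emptyset$ and $b,c\in J$. The orientation (which arc is lower) is symmetric, so the case $y_0 \in C_2$ is handled by swapping the roles of $J_1$ and $J_2$.
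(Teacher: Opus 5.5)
Your argument is correct, and it takes a different and simpler route than the paper. The paper argues by iterative refinement and contradiction. It starts from a segment $\overline{x_1y_1}$ of $l$ whose endpoints lie in different arcs, obtained from the highest and lowest points of $l\cap J$. It then picks $p_1\notin J$ in the middle half of $\overline{x_1y_1}$; this is where it invokes that, after the rotation, $J$ contains no vertical segments. If both endpoints of $s_{p_1}$ lie in the same arc, it replaces $\overline{x_1y_1}$ by a subsegment of at most $\frac{3}{4}$ the length whose endpoints again lie in different arcs. If this never terminated, one would get points of $J_1\cap l$ and $J_2\cap l$ arbitrarily close, contradicting $d(J_1\cap l, J_2\cap l)>0$.

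You instead make a single extremal choice. With $\min K\in C_1$, you take $c=\min C_2$ and $b=\max(C_1\cap(-\infty,c])$, and check directly that the gap $(b,c)$ misses $K=l\cap J$. Both proofs rest on the same facts: each arc meets $l$ by the intermediate value theorem (the paper's ``JCT for a vertical line''), $C_1$ and $C_2$ are compact, and they are disjoint by injectivity of $\psi$. Yours has three advantages: it is direct rather than by contradiction, it exhibits $p^*$ explicitly, and it needs no non-degeneracy assumption on vertical segments of $J$, since $b<c$ already gives a nonempty gap. The paper's version buys only a closer resemblance to its sweep-and-refine style.

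When you write it up, drop the abandoned sentence defining $a$, the clopen digression, and the ``suppose for contradiction'' framing. None of them is used by your final argument.
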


\noindent {\bf Proof:} Note that $l \cap J$ is a closed set. Let $M_1$ and $M_2$ be the points with highest and lowest
$y$-coordinates in $l \cap J$. Since $l \cap J$ is a closed
set and $J$ lies completely within the bounding box $\mathcal{B}$,
$M_1$ and $M_2$ exist and $M_1 \neq M_2$.

If $M_1$ and $M_2$ belong to different arcs in $\{J_1, J_2\}$, we
set $x_1 = M_1$ and $y_1=M_2$. Suppose both belong to arc $J_1$.
Then, since $J_2$ intersects line $l$ (by Jordan curve theorem
for a vertical line), $\overline{x_1y_1} \cap J_2 \neq \phi$.
Further, since $d(J_1 \cap l, J_2 \cap l) > 0$, there is a point
$w \in J_2$ in the interior of $\overline{x_1y_1}$. We set $x_1 = M_1$ and $y_1 = w$. The case when both $M_1$ and $M_2$ belong to arc $J_2$
is symmetric.

By the above construction, $x_1$ and $y_1$ belong to different arcs in $\{J_1, J_2\}$. Let $\overline{x'_1y'_1}$ be
the middle half of segment $\overline{x_1y_1}$. 
Let $p_1$ be any point in $\overline{x_1y_1} - J$. Since $J$ has no vertical line segments, such a point must exist. Consider the open segment $s_{p_1}$. Let $a_1$ (upper) and $b_1$ (lower) be its two endpoints. If $a_1$ and $b_1$ belong to different arcs in $\{J_1, J_2\}$, we are done. Now there are two cases:

\begin{enumerate}
    \item Both $a_1, b_1 \in J_1$. If $x_1 \in J_2$, set $x_2 = x_1$
    and $y_2 = a_1$. If $y_1 \in J_2$, set $x_2 = b_1$ and $y_2 = y_1$. Note that $x_2$ and $y_2$ belong to different arcs.

    \item Both $a_1, b_2 \in J_2$. This case is symmetric to the above case.
\end{enumerate}

Repeating the above process, we get a sequence of segments
$\overline{x_1y_1}, \overline{x_2y_2}, \overline{x_3y_3}, \ldots$
such that for each $i \in \mathbb{N}$, (i) $x_i$ and $y_i$ belong to different arcs in $\{J_1, J_2\}$ and (ii) length of $\overline{x_{i+1}y_{i+1}}$ is at most $\frac{3}{4}$ times the length of $\overline{x_iy_i}$. 

Thus, there are arbitrarily close pairs of points on line $l$, belonging to two different arcs $J_1$ and $J_2$. We arrive at a contradiction, and hence the lemma is proved.
$\blacksquare$

\begin{lemma}
\label{lem-correct-root-seg}
Let $Q$ be any rectilinear path with a finite number of turns starting at $p^*$ such that $Q \subset \mathbb{R}^2 - J$. Let $q$ be the other endpoint of $Q$. Then, the open segment $s_q$ is of finite length.
\end{lemma}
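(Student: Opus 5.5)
We argue by contradiction, using the two arcs $J_1$ (from $p_{min}$ to $p_{max}$) and $J_2$ (from $p_{max}$ to $p_{min}$) together with Lemma \ref{lem-root-seg}. The first step is to close the path $Q$ into a rectilinear polygon. Let $a^*$ (upper) and $b^*$ (lower) be the endpoints of $s_{p^*}$. By Lemma \ref{lem-root-seg}, one of them lies on $J_1$ and the other on $J_2$. Suppose, for contradiction, that $s_q$ has infinite length, say $s_q$ contains the upward ray from $q$ (the downward case is symmetric). Then $s_q$ contains a point $q'$ above the bounding box $\mathcal{B}$. Likewise, $p_{min}$ and $p_{max}$ can be joined to the region outside $\mathcal{B}$ by horizontal rays going left and right; these meet $J$ only at their starting points, by the extremality of the $x$-coordinates.

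Next we build a closed rectilinear curve that avoids $J$ and separates the two arcs. Take $Q$ from $p^*$ to $q$, then the vertical segment from $q$ up to $q'$. From $q'$, follow a rectilinear path outside $\mathcal{B}$ back around to the vertical line $l$ through $p^*$. Which way around we go is chosen according to the case below. Finally, return to $p^*$ along $s_{p^*}$ extended; we need care here, because $s_{p^*}$ is finite, so we close instead through $a^*$ or $b^*$ approximately. The cleaner route is to consider the curve
\[
C = Q \cup \overline{qq'} \cup (\text{path outside } \mathcal{B} \text{ from } q' \text{ to a point } o \text{ above } \mathcal{B} \text{ on } l).
\]
The curve $C$ is a rectilinear path from $p^*$ to $o$ in $\mathbb{R}^2 - J$ with finitely many turns. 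Adding a small detour, we may assume $C$ is simple.

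The key step is to close $C$ into a rectilinear polygon $P$ by the vertical segment of $l$ from $p^*$ up to $o$. That segment passes through $a^*$ and possibly other points of $l\cap J$, so instead we use the segment of $l$ from $p^*$ down to a point $o'$ below $\mathcal{B}$, and go around outside $\mathcal{B}$ from $o'$ to $o$, on the left or on the right. By the JCT for rectilinear polygons, $P$ has an interior and an exterior. Now $p_{min}$ and $p_{max}$ connect to far-away points by horizontal rays, and these points lie on opposite sides of whichever outer detour we chose. Choosing the detour suitably, we place $p_{min}$ and $p_{max}$ on the same side of $P$. The lower part of $l$ from $p^*$ to $o'$ crosses $J$ at $b^*$, and the arc containing $b^*$ leaves that edge transversally into the other side. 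This gives a parity argument: $J_1$ and $J_2$ each run from $p_{min}$ to $p_{max}$, and $J$ meets $P$ only on the segment of $l$ from $b^*$ down to $o'$. An ordinary crossing-parity count then shows that $J_1$ or $J_2$ must also reach $a^*$'s side without crossing $P$. The contradiction should come from the arc ending at $a^*$: because $Q$ started at $p^*$ between $a^*$ and $b^*$, the points $a^*$ and $b^*$ lie on opposite local sides of $P$ near $p^*$. Hence the arc of $J$ through $a^*$ and the arc through $b^*$ cannot both connect $p_{min}$ to $p_{max}$ while avoiding $P$ except along the lower part of $l$.

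The main obstacle is this last parity step. Points of $J$ on the closing edge along $l$ can be a Cantor-like set, so "crossing" is not well defined there. My first attempt would avoid this entirely by closing through $a^*$ alone: route the return path upward from $p^*$ only to just below $a^*$, then horizontally around. This uses the fact that $s_{p^*}$ is open and $J$ has no vertical segments, together with the exact argument of Lemma \ref{lem-vertical-segment-crossing}, to perturb the polygon so that it meets $J$ in exactly the single point $a^*$. A Jordan arc from the inside of $P$ to the outside must meet $P$. Since $J_1$ and $J_2$ share only the endpoints $p_{min},p_{max}$, and only one of them contains $a^*$, the other arc yields the contradiction.
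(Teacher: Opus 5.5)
There is a genuine gap, and it is exactly at the step you flag. Closing the path along $l$ makes $P$ meet $J$ in a piece of $l\cap J$. That set can be Cantor-like, so no crossing-parity count is available.

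Your fallback, perturbing $P$ so that $P\cap J=\{a^*\}$, is not justified. Lemma~\ref{lem-vertical-segment-crossing} concerns points of vertical segments of some $K_t$ and says nothing about $J$ near $a^*$. In general such a $P$ need not exist. Any rectilinear polygon through $a^*$ contains a segment leaving $a^*$ upward, leftward or rightward. But $J$ can have a branch at $a^*$ whose distance to $a^*$ decreases to $0$ while its polar angle oscillates between the right, upward and left directions, never reaching the downward one, so $s_{p^*}$ is undisturbed. Then each of those three rays meets $J$ arbitrarily close to $a^*$.

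Worse, even a polygon with $P\cap J=\{a^*\}$ would give nothing. $J\setminus\{a^*\}$ is connected, so it lies entirely on one side of $P$. A contradiction needs two points of $J\setminus\{a^*\}$ on opposite sides of $P$, and you never produce them; in your first variant you even arrange $p_{min}$ and $p_{max}$ to lie on the same side. The sentence ``the other arc yields the contradiction'' assumes exactly what has to be shown.

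The missing idea is a way to make the defining property of $p^*$ actually enter the argument: that $a^*$ and $b^*$ lie on different arcs $J_1$, $J_2$. Your construction never uses it essentially, and it cannot be dropped. For a horseshoe-shaped $J$ opening to the right, $l$ contains a finite segment between the two arms with both endpoints on the same arc. A horizontal path escapes from that segment to infinity, so the conclusion fails there.

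The paper uses the hypothesis through the chord $s_{p^*}$. It splits $[s_{p^*}]$ according to whether paths leave $s_{p^*}$ to the left (the set $L$) or to the right (the set $R$).
\begin{itemize}
\item If both sides reach infinite segments, the two escaping paths, together with $s_{p^*}$ and the infinite segments, form a barrier that the arc of $J$ from $a^*$ to $b^*$ must cross.
\item If only the left side does, the paper takes the curve $J'=J_1^l\cup s_{p^*}\cup J_2^l$ through $p_{min}$. This is a Jordan curve precisely because $a^*\in J_1$ and $b^*\in J_2$. The contradiction then comes from combining the region bounded by $J'$ with the path escaping on the left.
\end{itemize}
What your single polygon closed along $l$ lacks is some such use of the closed curves formed by $\overline{a^*b^*}$ and the two halves of $J$ through $p_{min}$ and $p_{max}$. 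In that picture, the horizontal escape rays at $p_{min}$ and $p_{max}$ decide which side is which.
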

\begin{figure}
%\begin{center}
\begin{tikzpicture}
\node at (3,3) {\includegraphics[width=9cm]{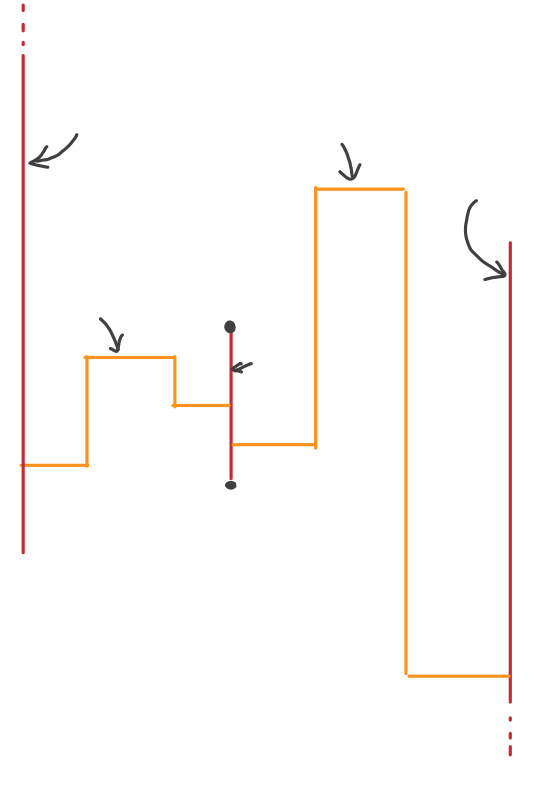}};
%\foreach \y in {-1, 0, 1, 2, ..., 7, 8, 9}
%{
%\draw (-2,\y) -- (7.5,\y);
%\node at (-0.5,\y){\y};
%};

%\foreach \x in {-2, -1, 0, 1,2,...,7}
%{
%\draw (\x,-1) -- (\x,9.5);
%\node at (\x,-0.5){\x};
%};
\node (A) at (2.8, 3.5) {$s_{p^*}$};
\node (B) at (6, 6.3) {$s_{q_2} \in R$};
\node (C) at (-0.1, 7.5) {$s_{q_1} \in L$};
\node (D) at (-0.1, 4.4) {$Q_1$};
\node (E) at (4, 7.4) {$Q_2$};
\node (F) at (2.2, 4.4) {$a \in J_1$};
\node (G) at (2.2, 1.3) {$b \in J_2$};
\node (H) at (-0.8, 9.7) {$+\infty$};
\node (I) at (6.6, -2.9) {$-\infty$};

\end{tikzpicture}
\caption{Ruling out Case $1$. The arc of $J$ from $a$ to $b$ must cross the rectilinear
polygon formed by $s_{p^*}$, $Q_1$, $Q_2$, and the two infinite open segments $s_{q_1}, s_{q_2}$.}
\label{fig-case-1}
%\end{center}
\end{figure}

\begin{figure}
%\begin{center}
\begin{tikzpicture}
\node at (3,3) {\includegraphics[width=9cm]{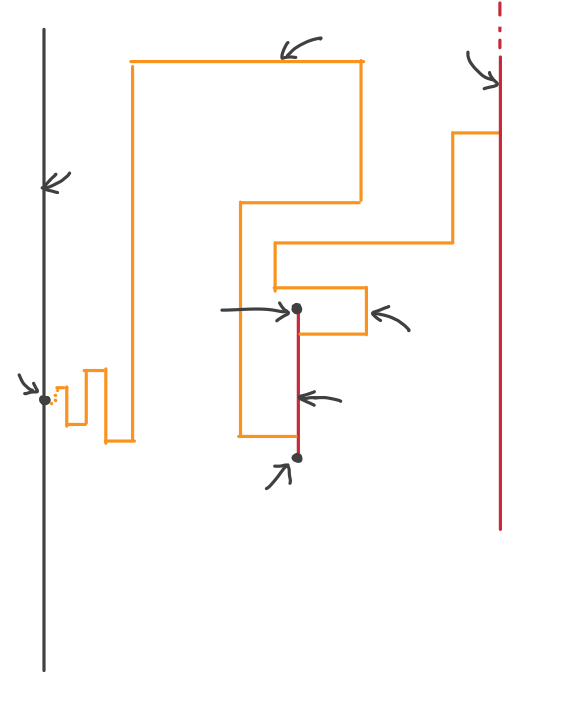}};
%\foreach \y in {-1, 0, 1, 2, ..., 7, 8, 9}
%{
%\draw (-2,\y) -- (7.5,\y);
%\node at (-0.5,\y){\y};
%};

%\foreach \x in {-2, -1, 0, 1,2,...,7}
%{
%\draw (\x,-1) -- (\x,9.5);
%\node at (\x,-0.5){\x};
%};
\node (A) at (4.2, 2.3) {$s_{p^*}$};
\node (B) at (5.8, 7.9) {$s_{q_2} \in R$};
%\node (C) at (3.5, 8.2) {$s_{q_1} \in L$};
\node (D) at (3.5, 8.2) {$Q_1$};
\node (E) at (5.2, 3.5) {$Q_2$};
\node (F) at (1.6, 3.8) {$a \in J_1$};
\node (G) at (2.5, 0.7) {$b \in J_2$};
\node (H) at (6.5, 8.8) {$+\infty$};
%\node (I) at (6.6, -2.9) {$-\infty$};
\node (J) at (-1.4, 2.8) {$p_{min}$};
\node (K) at (-0.3, 6) {$h_1$};
\end{tikzpicture}
\caption{Ruling out Case $2$. $J$ must cross $Q_1 \cup Q_2 \cup \{s_{p^*} \}$ at least once to go
from $a$ to $b$ and at least once to go back from $b$ to $a$. It cannot go to the left of the
vertical line $h_1$ or to the right of the infinite open segment $s_{q_2} \in R$. By the (very weak version)
of the JCT for infinite rectilinear polygons proved above, it can cross both times only at point $p_{min}$, a contradiction.}
\label{fig-case-2}
%\end{center}
\end{figure}

\noindent {\bf Proof:} Let $[s_{p^*}]$ denote the equivalence class of open segment $s_{p^*}$ i.e., the set of all open segments $s_{q}$ such that there is a rectilinear path $Q \subset \mathbb{R}^2 - J$ with finite number of turns from a point of $int(s_{p^*})$ to a point of $int(s_q)$.

Let $L$ be the set of all open segments $s_q \in [s_{p^*}]$ such that there exists a rectilinear
path $Q \subset \mathbb{R}^2 - J$ with a finite number of turns from $int(s_{p^*})$ to $int(s_q)$ such that the first horizontal edge of $Q$ goes to the left of the vertical line $l_{p^*}$. The set $R$ of 
open segments $s_q \in [s_{p^*}]$ consists of all segments $s_q$ such that there exists a rectilinear
path $Q \subset \mathbb{R}^2 - J$ with a finite number of turns from $int(s_{p^*})$ to $int(s_q)$ such that the first horizontal edge of $Q$ goes to the right of the vertical line $l_{p^*}$.
Note that $L \cap R$ can be non-empty.

There are three cases:

\begin{enumerate}
\item Each of  the two sets $L$ and $R$ contains at least one open segment of infinite length.
\item $L$ contains an open segment of infinite length, but $R$ contains only open segments of finite length.
\item $R$ contains an open segment of infinite length, but $L$ contains only open segments of finite length.
\item Both $L$ and $R$ contain only open segments of finite length. 
\end{enumerate}

Case $1$ cannot occur, as depicted in Figure \ref{fig-case-1}. 

Cases $2$ and $3$ are symmetric. Hence, we only consider Case $2$ here.
Let $J_1^l$ be the portion of $J_1$ from $p_{min}$ to the endpoint of $s_{p^*}$ in $J_1$.
Let $J_2^l$ be the portion of $J_2$ from the endpoint of $s_{p^*}$ in $J_2$ to $p_{min}$.
Then, the concatenation of Jordan arcs $J_1^l$, $s_{p^*}$ and $J_2^l$, in this order, form
a Jordan curve $J'$. 

The maximal sweepline algorithm, among all sweepline algorithm which start from
a horizontal sweep with one (right) vertical boundary at $s_{p^*}$, will sweep a finite, connected, open region $int(J')$ with boundary $J'$. This follows from the interior JCT for Jordan curve proved above, under the assumption that all reachable open segments are finite. In this case, this is true since
$L$ has been assumed to have only finite open segments.

In particular, in Case $2$, there exists a rectilinear path $Q'$ such that $int(Q') \subset \mathbb{R}^2 - J$, with at most countably infinite number of turns, from $p^*$ to $p_{min}$. We arrive at a contradiction, as depicted in Figure \ref{fig-case-2}. (It may be the case that, instead of $p_{min}$ we visit $p_{max}$ and the boundary of the region swept is $J - J'$. The proof in this case is same as above.)

Thus, only Case $4$ can cccur, and hence the statement of the lemma is proved. $\blacksquare$\\

%% file: exterior-jct-001.tex
\section{Proving the existence of exterior}
In this section, we prove the following theorem:

\begin{theorem}
\label{thm-ext-jct}
{\bf Exterior Jordan Curve Theorem.} There exists an unbounded, open and connected set $ext'(J) \subseteq \mathbb{R}^2 - ( int(J) \cup J)$ such that (i) $ext'(J)$ contains all infinite open segments and (ii) $bd(ext'(J)) = J$.

\end{theorem}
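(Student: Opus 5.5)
We follow the outline in the introduction: invert the plane about a point of $int(J)$ and apply the interior Jordan curve theorem (Theorem \ref{thm-int-jct}) to the image of $J$. Fix $c \in int(J)$ and let $\iota$ be inversion in a circle centred at $c$, so $\iota$ is a homeomorphism of $\mathbb{R}^2 - \{c\}$ onto itself exchanging a neighbourhood of $c$ with a neighbourhood of infinity. Since $c \notin J$, $J^\iota = \iota(J)$ is a Jordan curve (with parametrization $\iota \circ \psi$), bounded and at positive distance from $c$. First apply the rotation of the preliminaries to $J^\iota$, so that no vertical line meets it in a segment; the rotation commutes with everything below up to a homeomorphism. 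Theorem \ref{thm-int-jct} then gives a bounded, connected open set $int(J^\iota)$ with $bd(int(J^\iota)) = J^\iota$.

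The key step is to show $c \notin cl(int(J^\iota))$, i.e. that $int(J^\iota)$ is not the image of a neighbourhood of infinity. I would argue that the sweep producing $int(J^\iota)$ can be started from any root point $p^*$ supplied by Lemma \ref{lem-root-seg} for $J^\iota$, and that the swept region is the equivalence class $[s_{p^*}]$ (finite-turn rectilinear paths in $\mathbb{R}^2 - J^\iota$). Now $\iota(int(J) - \{c\})$ is a connected open subset of $\mathbb{R}^2 - J^\iota$ whose closure contains a punctured neighbourhood of infinity, hence it contains infinite open segments. By Lemma \ref{lem-correct-root-seg}, such segments are not in $[s_{p^*}]$. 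Since connected open sets are rectilinearly path connected with finitely many turns, $int(J^\iota) \cap \iota(int(J)) = \emptyset$. Therefore $int(J^\iota) \subset \iota(\mathbb{R}^2 - (int(J) \cup J))$, and, being bounded, it stays away from infinity and from $c$ (since $c$ has a neighbourhood in $\iota(int(J))$ together with infinity, and $int(J^\iota)$ is bounded away from the large-$|x|$ region).

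Define $ext'(J) = \iota^{-1}(int(J^\iota))$. It is open and connected as a homeomorphic image, and disjoint from $int(J) \cup J$ by the previous step. Because $\iota$ is a homeomorphism of $\mathbb{R}^2 - \{c\}$ and $c \notin cl(int(J^\iota))$, we get $bd(ext'(J)) = \iota^{-1}(J^\iota) = J$. For unboundedness and (i): $int(J^\iota)$ has $c$ in its boundary or not? It does not, but it must contain points arbitrarily near $c$ unless some other component exists. This is where the plan needs care. I would instead show directly that every infinite open segment $s$ of $J$ lies in $ext'(J)$. Take the region $G$ of points outside $\mathcal{B}$; it is connected, lies in $\mathbb{R}^2 - J$, and every infinite open segment meets it. Then $\iota(G)$ is a connected punctured neighbourhood of $c$ in $\mathbb{R}^2 - J^\iota$, so it lies in a single class, and that class has $c$ as a boundary point. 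To conclude, I would choose the root $p^*$ for $J^\iota$ inside $\iota(G)$. This is done by rerunning Lemma \ref{lem-root-seg} on a vertical line through $c$: a vertical segment through $c$ hits $J^\iota$ above and below, and its endpoints lie on different arcs after choosing the arcs appropriately. Then Lemma \ref{lem-correct-root-seg} shows the class is bounded, and Theorem \ref{thm-int-jct}'s construction applied from this root gives the set. Thus $\iota(G) \subset int(J^\iota)$, which gives (i) and unboundedness at once.

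The main obstacle is this last step: guaranteeing that the interior produced for $J^\iota$ is the one containing a punctured neighbourhood of $c$, rather than $\iota(int(J))$. Equivalently, the root-segment choice must be controllable. I expect to spend most effort adapting Lemmas \ref{lem-root-seg} and \ref{lem-correct-root-seg} so that the root can be taken on the vertical line through $c$. If that fails, the fallback is to compare inner measures, or to use disjointness from $\iota(int(J))$ as above and show that the remaining bounded classes near $c$ coincide.
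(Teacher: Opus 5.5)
\textbf{Verdict: there is a genuine gap.} Your route is the paper's: invert about a point of $int(J)$, apply the interior theorem to the image, and pull back. However, the step that carries the content is never established, and your second paragraph aims at the wrong target.

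\textbf{The key step in your second paragraph is backwards.} The set you need is the component of $\mathbb{R}^2-J^\iota$ that \emph{contains} a punctured neighbourhood of $c$, because that set is $\iota$ of a neighbourhood of infinity. So the proposed key step $c\notin cl(int(J^\iota))$ points the wrong way:
\begin{itemize}
\item If it held, $\iota^{-1}(int(J^\iota))$ would be bounded.
\item It contradicts your own later goal $\iota(G)\subset int(J^\iota)$.
\item Its justification is false. A punctured neighbourhood of $c$ in the inverted plane is the image of a neighbourhood of infinity, which lies outside $int(J)$, not in $\iota(int(J))$.
\end{itemize}
The boundary computation in your third paragraph rests on this claim. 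You also need to delete $c$ before applying $\iota^{-1}$.

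\textbf{Your final plan has the right target, but its justification fails.} Taking the root for $J^\iota$ at $c$ is correct. Rerunning Lemma~\ref{lem-root-seg} on the vertical line through $c$ does not deliver it. The proof of Lemma~\ref{lem-correct-root-seg} needs the endpoints of the root segment to lie on different arcs joining the leftmost and rightmost points of the curve. The endpoints of $s_c$ need not separate those extreme points: a curve winding around $c$ can have both extreme points on the same arc cut off by the endpoints of $s_c$.

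\textbf{The missing idea: no root-segment lemma is needed.} Assumption~1 for $J^\iota$ with root $c$ follows directly from what you already know about $int(J)$. Put $V=\iota(int(J)-\{c\})$.
\begin{itemize}
\item $V$ is open.
\item $V$ contains $\{y: |y-c|>R\}$ for some $R$, because $int(J)\supseteq B(c,\delta)$.
\item $V$ is closed in $\mathbb{R}^2-J^\iota$, because $cl(int(J))=int(J)\cup J$, and $c\notin cl(V)$ since $int(J)$ is bounded.
\end{itemize}
Hence every path in $\mathbb{R}^2-J^\iota$ starting at $c$ misses $V$ and stays in $cl(B(c,R))$. In particular every open segment of $[s_c]$ is finite.

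Now run the sweepline/Zorn construction of Section~3 from the root $c$, after a generic rotation about $c$. This yields $int(J^\iota)$. It is open, connected, disjoint from $J^\iota$, and has boundary $J^\iota$, so it is closed in $\mathbb{R}^2-J^\iota$. It therefore equals the component $W$ of $c$. Set $ext'(J)=\iota^{-1}(W-\{c\})$. Then:
\begin{itemize}
\item $W\cap V=\emptyset$ gives disjointness from $int(J)\cup J$.
\item $B(c,\eta)\subseteq W$ for some $\eta>0$ gives unboundedness.
\item An infinite open segment $s$ of $J$ misses the bounded component $int(J)$, and $\iota(s)\cup\{c\}$ is connected in $\mathbb{R}^2-J^\iota$. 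So $s\subseteq ext'(J)$, which is (i).
\item $bd(ext'(J))=J$, because $c$ is interior to $W$ while $int(J)$ is a neighbourhood of $c$ missing $ext'(J)$. This is (ii).
\end{itemize}

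For comparison, the paper's proof simply asserts $\mathbf{0}\in int(i(J))$ and does not argue (i), (ii) or disjointness. You correctly located that crux; the argument above is what closes it.
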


\noindent {\bf Proof:} The existence of $int(J)$ was proved above. We apply a translation $\tau$ so that the point $p^*$ moves to the origin of the Euclidean plane. Next, we apply the inversion map
$i: \mathbb{R}^2 \rightarrow \mathbb{R}^2$ to the Euclidean plane ($i$ sends circles of radius $r$ to circles of radius $\frac{1}{r}$ for $r \geq 0$).

First, note that the image $i(J)$ of the Jordan curve is also a Jordan curve. This is because
the map $\psi': \mathbb{S}^1 \rightarrow \mathbb{R}^2 - \{ {\bf 0} \}$ defined as the composition $\psi' = i \circ \tau \circ \psi$ is both one-to-one and continuous, since $\psi: \mathbb{S}^1 \rightarrow \mathbb{R}^2 - \{p^*\}$, $\tau: \mathbb{R}^2 - \{ p^* \} \rightarrow \mathbb{R}^2 - \{ {\bf 0} \}$ and $i: \mathbb{R}^2 - \{ {\bf 0} \} \rightarrow \mathbb{R}^2 - \{ {\bf 0} \}$ are all one-to-one and continuous maps. 

By applying the interior Jordan curve theorem to $i(J)$, we obtain a open, bounded and connected
region $int(i(J))$. 

We define $ext'(J) = i^{-1} ( int(i(J)) - \{ {\bf 0} \})$. $ext'(J)$ is an open set since $int(i(J))$ is an open set and $i^{-1}$ is a continuous map.

Since ${\bf 0} \in int(i(J))$, there exists an $\epsilon > 0$ such that the open ball $B(p, \epsilon) \subset int(i(J))$. Thus, the set $ext'(J)$ contains points with arbitrarily large $x$- and $y$-coordinates. Hence, $ext'(J)$ is an unbounded set.

Note that $int(i(J))$ is path connected. Take any two points $x, y \in int(i(J)) - \{ 0 \}$. Let $\epsilon_1 > 0$ be a real number such that $\epsilon_1 < \frac{\epsilon}{2}$ and $\min ( d(x, {\bf 0}), d(y, {\bf 0}) ) > \epsilon_1$. Let $Q \subset \mathbb{R}^2 - i(J)$ be a rectilinear path with a finite number of turns between points $x$ and $y$. If $Q$ does not intersect $bd(B({\bf 0}, \epsilon_1))$, we set $Q' = Q$. If $Q$ intersects $bd(B({\bf 0}, \epsilon_1))$, let $a$ and $b$ be the first and last points of intersection (since $cl(B({\bf 0}, \epsilon_1))$ is a closed set, these points exist). Let $Q'$ be the path 
obtained by following $Q$ from start to $a$, following the circle $bd(B({\bf 0}, \epsilon_1))$ from
$a$ to $b$, and then following the portion of $Q$ from $b$ to the end. Note that $Q' \subset \mathbb{R}^2 - i(J) \cup \{ {\bf 0} \}$. Thus, $i^{-1}(Q')$ is a continuous path from $i^{-1}(x)$ to $i^{-1}(y)$ and
$i^{-1}(Q') \subset \mathbb{R}^2 - J$. Thus, $ext'(J)$ is path connected.

Hence, the theorem is proved.
$\blacksquare$

%% file: jct-001.tex
\section{The Jordan curve theorem}
\label{sec-JCT}

\begin{theorem}
\label{thm-sweep-jct}
{\bf Jordan curve theorem (using sweepline algorithm).} $\mathbb{R}^2 - J$ is the disjoint union of two open and connected sets $int(J)$ and $ext(J)$, where $int(J)$ is a bounded set and $ext(J)$ is an unbounded set.
\end{theorem}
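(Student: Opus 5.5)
We set $ext(J) = \mathbb{R}^2 - (J \cup int(J))$, where $int(J)$ is given by Theorem \ref{thm-int-jct}. The aim is to show $ext(J) = ext'(J)$, with $ext'(J)$ as in Theorem \ref{thm-ext-jct}.

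\textbf{Easy properties.} Openness of $ext(J)$ follows from Theorem \ref{thm-int-jct}. Indeed, $bd(int(J)) = J$ gives $cl(int(J)) = int(J) \cup J$, so $ext(J) = \mathbb{R}^2 - cl(int(J))$ is open. Boundedness of $int(J)$ is part of Theorem \ref{thm-int-jct}, and unboundedness of $ext(J)$ follows because it contains the complement of the bounding box $\mathcal{B}$. Disjointness of the two sets is by definition. So the only real content is connectedness of $ext(J)$. Since $ext'(J) \subseteq ext(J)$ is connected, it suffices to show $ext(J) \subseteq ext'(J)$. In words: there is no third component $C$ of $\mathbb{R}^2 - J$ besides $int(J)$ and $ext'(J)$.

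\textbf{Ruling out a third component.} Suppose $C$ is such a component, and pick $p_0 \in C$.
\begin{enumerate}
\item The open segment $s_{p_0}$ lies in $C$. It is finite, since by Theorem \ref{thm-ext-jct} every infinite open segment lies in $ext'(J)$.
\item Hence the equivalence class $[s_{p_0}]$ contains only finite open segments; all of its segments lie in $C$ because rectilinear paths stay in one component. So Assumption $1$ holds with $p^*$ replaced by $p_0$.
\item We then rerun Sections $3.1$--$3.6$ from $p_0$. Zorn's lemma yields a maximal sweepline algorithm $\mathcal{S}_{max}$ with $int(K_{\mathcal{S}_{max}}) \subseteq C$ and $bd(int(K_{\mathcal{S}_{max}})) = J$.
\item Now take any point $x \in J$. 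Every ball around $x$ meets $int(K_{\mathcal{S}_{max}}) \subseteq C$, and also meets $int(J)$, since $bd(int(J)) = J$.
\end{enumerate}

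\textbf{Deriving the contradiction.} This alone is not contradictory, so a sharper argument is needed. I would choose $x$ to be the endpoint of the root segment $s_{p^*}$, say its upper endpoint $a \in J_1$. Near $a$, the vertical line $l_{p^*}$ has $s_{p^*} \subset int(J)$ directly below $a$. I would then use the sweep structure of $\mathcal{S}_{max}$: since its boundary is exactly $J$ with no vertical segments, $C$ must approach $a$ through open segments $s_q$, $q \in C$, whose endpoints converge to $a$. I would build a rectilinear polygon $R$ from $s_{p^*}$, a finite-turn path in $int(J)$, a short horizontal segment just below $a$, and the limiting open segments of $C$. Applying JCT for rectilinear polygons, together with Observation \ref{obs-jordan-arc-limit} in the same way as in the tunnel lemma, should produce infinitely many disjoint arcs of $J$ whose endpoints converge to two different points. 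That contradicts injectivity of $\psi$.

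A cleaner route I would try first is symmetric to the proof of Theorem \ref{thm-ext-jct}: invert about $p_0 \in C$. This makes $C$ bounded and containing the origin, while $int(J)$ and $ext'(J)$ become two further regions, one of them containing the infinite segments. Applying Lemma \ref{lem-root-seg} and Lemma \ref{lem-correct-root-seg} to $i(J)$ then gives a contradiction.

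\textbf{Main obstacle.} The hard step is the final incompatibility of three regions all having boundary $J$. Everything so far only produces regions with boundary $J$, never a separation statement, and the available tools (JCT for rectilinear and "weak infinite" polygons) must be used to derive it. I expect this to require a careful local argument near a single point of $J$, in the style of Lemma \ref{lem-correct-root-seg} Case $1$. There, $s_{p^*}$ together with segments from two different regions would form a polygon that an arc of $J$ must cross, yet has nowhere to cross.
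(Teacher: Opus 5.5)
Your setup is sound. Reducing connectedness of $ext(J)$ to ruling out a third component $C$ is correct. So are showing that $s_{p_0}$ and its whole class are finite, and rerunning the Zorn argument from $p_0$ to get a region with boundary $J$ inside $C$; this matches the paper, which also treats the extra component $A$ as swept by a maximal sweepline algorithm from some $q$. But the proposal stops exactly where the proof has to happen. You rightly note that three open sets sharing the boundary $J$ is not yet a contradiction, and neither of your suggested routes supplies the missing separation argument.

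The inversion route is wrong as stated. Inverting about $p_0\in C$ sends a punctured neighbourhood of $p_0$ to a neighbourhood of infinity, so $i(C)$ becomes the \emph{unbounded} region, and it is the image of $ext'(J)$ that becomes the bounded region around the origin. More importantly, inversion only relabels which of the three regions is unbounded. Lemmas \ref{lem-root-seg} and \ref{lem-correct-root-seg} only produce a root point whose class has finite segments; they say nothing about how many components have boundary $i(J)$. The first route, a polygon near the endpoint $a$ of $s_{p^*}$, gives no reason why the extracted arcs of $J$ would have endpoints converging to two \emph{different} points. As written it is a hope, not an argument.

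The missing idea, which is the paper's key step, is a theta-curve argument using two boundary points rather than one.
\begin{enumerate}
\item Pick distinct $a,b\in J$. Join them by a rectilinear path $Q_1\subset int(J)$ and a rectilinear path $Q_2$ in the third region; these may have countably many turns accumulating at $a$ and $b$.
\item Then $R=Q_1\cup Q_2$ is a closed infinite rectilinear polygon with $R\cap J=\{a,b\}$, and the paper applies JCT to $R$.
\item Every point of $R$ lies on the boundary of both $int(R)$ and $ext(R)$. So each of these regions meets both $int(J)$ and the third region, and being connected, neither can be disjoint from $J$.
\item The two open arcs of $J-\{a,b\}$ are connected and miss $R$. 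Hence one arc lies in $int(R)$ and the other in $ext(R)$.
\item The set $ext'(J)$ is connected, unbounded and disjoint from $R$, so it lies in $ext(R)$. The arc of $J$ inside the open set $int(R)$ is therefore not in $cl(ext'(J))$. This contradicts $bd(ext'(J))=J$ from Theorem \ref{thm-ext-jct}; the paper phrases this last step via finiteness of the open segments ending on that arc.
\end{enumerate}
Without a two-point separation step of this kind, your proposal does not establish that $ext(J)$ is connected.
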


\begin{figure}[h]
%\begin{center}
\begin{tikzpicture}
\node at (3,3) {\includegraphics[width=9cm]{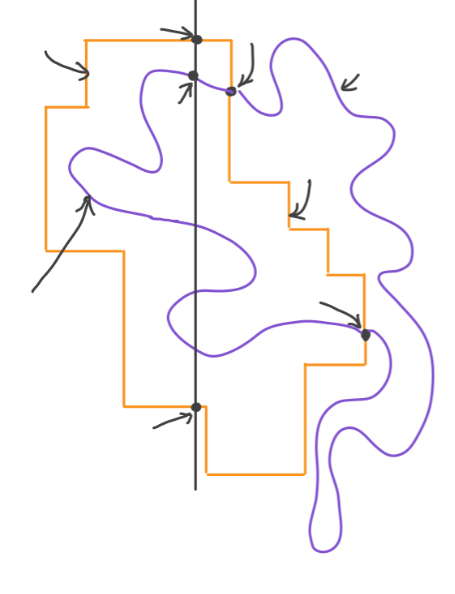}};
%\foreach \y in {-1, 0, 1, 2, ..., 7, 8, 9}
%{
%\draw (-2,\y) -- (7.5,\y);
%\node at (-0.5,\y){\y};
%};

%\foreach \x in {-2, -1, 0, 1,2,...,7}
%{
%\draw (\x,-1) -- (\x,9.5);
%\node at (\x,-0.5){\x};
%};
\node (A) at (3.5, 8.2) {$a$};
\node (B) at (4.6, 3.1) {$b$};
\node (D) at (4.5, 5.6) {$Q_1$};
\node (E) at (-0.4, 8.1) {$Q_2$};
\node (F) at (-1.1, 3) {$J_1$};
\node (G) at (5.7, 7.5) {$J_2$};
\node (H) at (1.4, 8.2) {$\beta_1$};
\node (J) at (1.4, 0.4) {$\beta_2$};
\node (K) at (2, 6.5) {$x$};
\end{tikzpicture}
\caption{Only two components. An open segment with one endpoint at $x$ is finite since both $s_{\beta_1}$ and 
$s_{\beta_2}$ are finite open segments and they lie on opposite sides of $x$, on the vertical
line through $x$.}
\label{fig-only-two}
%\end{center}
\end{figure}

\noindent {\bf Proof:} The existence of $int(J)$ and $ext'(J)$ follows from Theorems \ref{thm-int-jct} and \ref{thm-ext-jct} (both in main paper). 

We rule out the existence of any other open and connected set $A$ with $bd(A) = J$, which can be obtained as the result of a maximal sweepline algorithm starting from some point $q \in \mathbb{R}^2 - J$. First, note that $A$ is a bounded set. This is because $ext'(J)$ has all infinite open segments, since all infinite open segments are in the same equivalence class (they connect together at the circle at infinity).

Thus, we conclude that there are two bounded, connected and open set $int(J)$ and $A$ with 
common boundary $J$.

Let $a$ and $b$ be two distinct points on the Jordan curve $J$. There exists a rectilinear
path $Q_1$ in $int(J)$ between $a$ and $b$. Further, there exists a rectilinear path $Q_2$
in $A$ between $a$ and $b$. These two rectilinear paths can each have countably infinite
turns. 

Let $R$ be the rectilinear polygon with infinite number of edges formed by the union of paths $Q_1$
and $Q_2$ (see Figure \ref{fig-only-two}). By the Jordan curve theorem for infinite rectilinear polygons, $R$ partitions the plane into two regions, the interior $int(R)$ and the exterior $ext(R)$. 

The Jordan curve $J$ can intersect $R$ only at the two points $a$ and $b$. Thus, $J$ is
partitioned into two Jordan arcs $J_1$ and $J_2$, where $J_1 - \{a, b\} \subset int(R)$
and $J_2 - \{a, b\} \subset ext(R)$. Thus, for any point $x \in J_1 - \{a, b\}$, any open segment
with an endpoint at $x$ is a finite segment. 

Note that $a$ and $b$ were arbitrary points of $J$. Hence, for all points $x \in J$, except a single
exceptional point $x^* \in J$, any open segment with an endpoint at $x$ is a finite segment.

Suppose there are at least two exceptional points. Let $x_1$ and $x_2$ be two distinct
exceptional points. Let $(a_1, b_1), (a_2, b_2), \ldots$ be a sequence of nested intervals of $\mathbb{S}^1$ with length tending to $0$, such that each interval contains $x_1$ in its interior.
For each $i \in \mathbb{N}$, the arc $\alpha_i$ of $\mathbb{S}^1$ with endpoints at 
$a_i$ and $b_i$ and not containing $x_1$, corresponds to the Jordan arc $J_1$ defined above in
$\mathbb{R}^2$ (with
$a_i$ and $b_i$ in place of $a$ and $b$, respectively).
Note that $\cup_{i=1}^{\infty} \alpha_i = \mathbb{S}^1 - \{ x_1 \}$ and we arrive at a contradiction.

This contradicts the conclusion above that $bd(ext'(J)) = J$.
Hence, there are only two open and connected components of $\mathbb{R}^2 - J$: $int(J)$ and
$ext(J)=ext'(J)$. This completes the proof of the Jordan curve theorem using the sweepline algorithm.
$\blacksquare$